\documentclass[11pt]{article}

\usepackage[margin=1in]{geometry}
\usepackage[utf8]{inputenc}
\usepackage[T1]{fontenc}
\usepackage{graphicx}
\usepackage[usenames,dvipsnames]{xcolor}
\usepackage{mathrsfs}
\usepackage{amsmath,amssymb,graphics,amsthm}
\usepackage{dsfont}    % for identity
\usepackage{bbm}
\usepackage{bm}
\usepackage{braket}
\usepackage{stmaryrd}
\usepackage{comment}
\usepackage{enumitem}
\usepackage{makecell}
\usepackage{enumitem}

\usepackage[colorlinks=true, urlcolor=violet, linkcolor=blue, citecolor=red, hyperindex=true, linktoc=section]{hyperref}
\allowdisplaybreaks

\newtheorem{thm}{Theorem}
\numberwithin{thm}{section}
\newtheorem{cor}[thm]{Corollary}
\newtheorem{lem}[thm]{Lemma}

\newtheorem{prop}[thm]{Proposition}
\newtheorem*{prop*}{Proposition}

\newtheorem{defn}[thm]{Definition}

\makeatletter
\renewcommand{\p@subsection}{}
\renewcommand{\p@subsubsection}{}
\makeatother

\usepackage{xcolor}
\usepackage{mathtools}
\definecolor{brightred}{rgb}{1,0,0}
\definecolor{red}{rgb}{0.76, 0.23, 0.13}
\definecolor{grey}{rgb}{0.3686, 0.5255, 0.6235}
\definecolor{blue}{rgb}{0.0, 0.4, 0.65}
\definecolor{green}{rgb}{0.0, 0.5, 0.0}
\definecolor{ink}{RGB}{42,45,53}
\definecolor{plane}{RGB}{232,238,244}
\definecolor{planeedge}{RGB}{108,128,151}
\definecolor{xred}{RGB}{205,48,57}
\definecolor{xredlight}{RGB}{252,231,232}
\definecolor{zblue}{RGB}{42,104,171}
\definecolor{zbluelight}{RGB}{207,226,245}
\definecolor{blobfill}{RGB}{250,247,237}
\definecolor{blobedge}{RGB}{137,127,103}
\colorlet{lightcyan}{white!70!cyan}
\colorlet{pink}{white!30!red}
\usepackage{pifont} % http://ctan.org/pkg/pifont
\newcommand{\cmark}{\textcolor{green}{\ding{51}}}
\newcommand{\xmark}{\textcolor{red}{\ding{55}}}

\DeclareMathAlphabet{\mathpzc}{OT1}{pzc}{m}{it}

\def\i{\text{i}}

\newcommand{\Aut}{\operatorname{Aut}}

\newcommand{\rank}{\operatorname{rank}}
\newcommand{\rs}{\operatorname{rs}}
\newcommand{\wt}{\operatorname{wt}}

\newcommand{\ZSZ}{\mathrm{ZSZ}}

\newcommand{\Ring}{\mathsf{R}}

\newcommand{\abs}[1]{\lvert{#1}\rvert}

\newcommand{\nm}{\mathrm{n}}

\newcommand{\F}{\mathbb{F}}

\newcommand{\transp}{\mathsf{T}}

\newcommand{\ident}[0]{\mathds{1}}

\usepackage{tikz}
\usetikzlibrary{arrows.meta,positioning,calc,decorations.pathreplacing,backgrounds}
\usepackage{quiver}

\usepackage[ruled,linesnumbered]{algorithm2e}

\SetCommentSty{mycommfont}
\SetKwInOut{Input}{Input}
\SetKwInOut{Return}{Return}

\usepackage{authblk}
\usepackage{appendix}
\usepackage{indentfirst}  % indent first paragraph after section header
\usepackage{booktabs}  % for table formatting
\usepackage{caption}
\hypersetup{
    pdftitle={ZSZ-LP codes}
}

\title{Quantum LDPC codes with design rate 1/5 and good performance below 1000 physical qubits}
\author{Yifan Hong\thanks{Email: \href{mailto:yihong@nvidia.com}{yihong@nvidia.com}}}
\affil{\small NVIDIA}
\date{\today}

\begin{document}
\maketitle

\begin{abstract}
Constant-rate quantum low-density parity-check (LDPC) codes promise fault-tolerant quantum computation with constant spatial overhead in the asymptotic limit. Nonetheless, discovering finite-length code instances with good practical performance remains challenging. We introduce a new family of quantum LDPC codes with design rate $1/5$ and check weight 9 that approaches the teraquop memory regime per qubit-round with several hundred physical qubits, under idling-free circuit-level noise of strength $0.1\%$ and GPU-accelerated Relay-belief-propagation (Relay-BP) decoding with average latencies around 1--2\,ms, a regime relevant to trapped-ion and neutral-atom processors. The construction involves the balanced product of classical LDPC codes with design rate $1/2$ that share non-abelian $\mathbb{Z}_\ell \rtimes \mathbb{Z}_m$ group symmetries, which may be of independent interest for classical error correction. We build syndrome extraction circuits tailored to reconfigurable atom arrays using a simple greedy scheduler, with single-round rearrangement times around 30--60\,ms using present hardware specifications, and substantial room for future improvements. We also construct logical Pauli bases that are equivariant with respect to the group symmetry, which can significantly compress the design space for code surgery. Together, these results further advance the practicality of constant-rate quantum LDPC codes for near-term, fault-tolerant quantum computers.
\end{abstract}

\tableofcontents

\section{Introduction}

Fault-tolerant quantum computing (FTQC) is widely believed to be required for the reliable execution of large-scale quantum algorithms. Central to FTQC, quantum error correction (QEC) is a fundamental primitive that removes decoherence faster than it accumulates \cite{Shor_1995, Steane_1996, Knill_2000, gottesman1997thesis}. The original threshold theorems for FTQC relied on error-correcting codes such as concatenated small codes \cite{Aharonov_1997, Knill_1998} and topological codes \cite{Kitaev_2003} and have polylogarithmic space and time overheads. Later on, the necessary spatial overhead for concatenated codes was reduced to a constant by using codes with rate$\,\rightarrow\!1$ \cite{Yamasaki_2024_FT}. On the other front, theoretical innovations in constant-rate quantum low-density parity-check (qLDPC) codes \cite{HGP, Breuckmann_2021, DLV_2024} have also reduced the spatial overhead to a constant \cite{Gottesman_2014, Fawzi_2020_constant, Tamiya_2026_FT, zhang2026accelerating}, with the current state-of-the-art scheme also achieving a (nearly) logarithmic time overhead \cite{Nguyen_2025_FT}, thus bridging the gap with classical fault tolerance \cite{vonNeumann_1956}.

The above results constitute theoretical breakthroughs in the asymptotic limit and ultimately guide our understanding in realizing FTQC in practice. Estimations of the actual constants and overheads require specifications of underlying QEC codes/mechanisms, their fault-tolerant gadgets, as well as noise models and target fidelities in which thresholds and overheads can be derived. Depending on how applicable one wants the numbers to be, various assumptions such as connectivity and hardware compatibility can be tightened or relaxed. In this realm of practically motivated FTQC, there has been much work done on estimating physical footprints with 2D topological codes, such as the surface \cite{bravyi1998surface} and color codes \cite{Bombin_2006}, on superconducting \cite{Gidney_2021_RSA, gidney2025rsa, babbush2026ecc} and neutral-atom architectures \cite{Zhou_2025_arch}. Similar resource estimates using more general qLDPC codes, which need not be geometrically constrained, have recently also emerged \cite{yoder2025tour, webster2026pinnacle, yang2026rascql, cain2026shor, tripier2026cat}.

The design of practical FTQC architectures typically revolves around the choice(s) of error-correcting code(s) specifically tailored for different parts of the FTQC stack. Homogeneous architectures usually employ one or two ``Jack of all trades'' codes that simultaneously store and compute the logical qubits. Examples include the earlier surface-code architectures, the bicycle architecture \cite{yoder2025tour} with bivariate bicycle (BB) codes \cite{BB_codes}, and the Pinnacle architecture \cite{webster2026pinnacle} with generalized bicycle (GB) codes \cite{GB_codes}. On the other hand, heterogeneous architectures choose codes optimized for specific tasks such as memory, Clifford computation and resource-state distillation. \cite{cain2026shor} is such an example which utilizes quasi-cyclic lifted product (QC-LP) codes \cite{LP_codes} for memory and BB/surface codes for computation. Though it may at first seem like a heterogeneous architecture is superior due to its more fine-grained optimization, allowing for instance extremely efficient memories \cite{cain2026shor, zhao2026ultra}, bridging different error-correcting codes in a fault-tolerant way with low overhead is highly nontrivial. Current approaches include methods based on code homomorphisms \cite{Huang_2023, Xu_2025_fast} and code surgery \cite{Cohen_2022, Williamson_2026_gauging, cross2025improved, Swaroop_2026_adapters, he2025extractors, zheng2025high, cowtan2025fast, yuan2026parsimonious}; the former requires the involved codes to have some shared structure, whereas the latter is more general purpose.

BB codes are the current paragon for high-rate, implementation-friendly and computationally effective qLDPC codes owing to their weight-6 stabilizer checks and two-dimensional translational symmetry. The low check weight enables compact syndrome extraction with relatively high pseudothreshold of $0.7\%$ under uniform circuit-level noise, while the translational symmetry enables a compact biplanar layout for superconducting chips \cite{BB_codes}. This translational symmetry, which is also shared by GB and more generally quasi-cyclic LP codes, also makes them amenable to reconfigurable atom arrays \cite{Xu_2024_constant, Viszlai_2025_GB, cain2026shor}. In addition, it enables logical qubits to be permuted in the code block through cyclic translations, which substantially lowers the design space of surgery gadgets required to augment these codes from memory to Pauli-based computational blocks \cite{Litinski_2019, he2025extractors, yoder2025tour, webster2025explicit}. Cyclic translational symmetries have similarly been used to lower the computational overhead for hypergraph product (HGP) codes \cite{Xu_2025_fast}. Finally, on the decoding side, the translational symmetry has been leveraged to port over fast matching decoders for topological codes \cite{sahay2026matching, tan2026matching}.

A recurring lesson from these prior works is that algebraic structure can make qLDPC codes simultaneously sparse, high-rate, and implementation-friendly. In particular, the translational symmetries $\mathbb{Z}_\ell$ (GB and QC-LP) or $\mathbb{Z}_{\ell_1} \times \mathbb{Z}_{\ell_2}$ (BB) can be thought of as residual symmetries or automorphisms of the lifted \cite{LP_codes} or balanced product \cite{balanced_product}, of which these constructions are special cases, stemming from the fact that these groups are abelian. From the theoretical side, we know that large automorphism groups, despite enabling efficient logical gates \cite{Grassl_2013, SHYPS_codes, phantom_codes}, present an inherent obstacle to achieving asymptotically good $\llbracket n,k,d \rrbracket$ parameters \cite{guyot2026aut, holmes2026logic}, which denote block length ($n$), number of logical qubits ($k$) and minimum distance ($d$) respectively. In practice, this tradeoff is explicitly present in subsystem HGP simplex codes \cite{SHYPS_codes} and ``phantom'' codes \cite{phantom_codes} where $k=O(\log n)$, as well as in clustered cyclic codes \cite{gu2026qgpu} where $kd\leq n$. A natural question to explore, then perhaps, is whether sacrificing the automorphism symmetry can lead to improved code parameters of product codes at finite lengths.

In this work, we construct finite-length qLDPC codes with design rate 1/5 using the lifted/balanced product over non-abelian, metacyclic ``ZSZ'' groups $\mathbb{Z}_{\ell_1} \rtimes \mathbb{Z}_{\ell_2}$, where $\rtimes$ denotes the semidirect product with the normal subgroup on the left. Depending on the choice of FTQC architecture and hardware implementation, these codes can act as memory modules or be used directly for logical computation with aforementioned gadgets such as code surgery. We target systems with movable qubits, such as trapped ions \cite{Kielpinski_2002, Chen_2024_IonQ, quantinuum_helios} and neutral atoms \cite{Saffman_2010_Rev, Bluvstein_2023_logical, Norcia_2024_Atom, rines2026sqale}, and aim for near-term system sizes (e.g. $\lesssim1000$ physical qubits) and physical error rates (e.g. $\lesssim0.1\%$). As we will see, the non-abelian twist ($\rtimes$) introduces both advantages and disadvantages from a practical standpoint. The primary advantage is that, by being non-abelian and sacrificing its potential for quantum code automorphisms, it circumvents an abelian low-distance obstacle when trying to construct a simple, constant-rate generalization of quantum two-block group-algebra (2BGA) codes \cite{2BGA_codes}, which have design rate zero. Our candidate codes have encoding efficiency ratios ($kd^2/n$) ranging from one to two orders of magnitude greater than the surface code, all at modest code lengths below a thousand. The benefits of smaller code lengths are twofold: \emph{(i)} their decoding complexities are simpler, and \emph{(ii)} logical qubits are distributed across more disjoint code blocks which allows for increased computational parallelism. The downside is that the physical connectivity to realize these codes becomes more complicated than their abelian analogues, which increases the hardware requirements for experimental feasibility. Previously, the ZSZ group was used to construct 2BGA codes \cite{2BGA_codes} with favorable single-shot and passive QEC properties compared to BB codes \cite{ZSZ_codes}. We continue this avenue of investigation and show that the same group can be used to construct constant-rate, five-block qLDPC codes with good finite-length performance. This five-block form can be thought of as the simplest lifted/balanced product family with nonzero design rate. Accordingly, we call these codes ``ZSZ-LP'' codes.

Our emphasis is therefore finite-size and implementation-oriented. We search for ZSZ-LP instances below roughly a thousand physical qubits per memory block with guided heuristics (Section \ref{sec:code construction}), identify equivariant logical Pauli bases for computation (Section \ref{sec:logical basis}), identify $ZX$-duality structures for fold-transversal gates (Section \ref{sec:ZX dualities}), build explicit syndrome-extraction schedules inspired by reconfigurable atom arrays (Section \ref{sec:syndrome extraction}), and finally benchmark circuit-level memory performance with GPU-accelerated decoding (Section \ref{sec:numerics}). We conclude with some open directions for future work (Section \ref{sec:outlook}). We emphasize that our code search is nowhere near exhaustive, and codes with even better performance may likely exist; the aim of our candidate codes is simply to showcase the limits of what is possible. The main text serves as a pedagogical guide that establishes motivation and intuitive arguments while trying to avoid too much technical jargon, and the appendices supply rigorous mathematical statements for support and implementation details for reproducibility. Our main results are summarized below.

\subsection{Summary of main results}

\subsubsection{Quantum LDPC codes with design rate 1/5}

\textbf{qLDPC construction.} We introduce a new family of quantum LDPC codes with encoding rate at least 1/5 whose code-capacity memory efficiency, measured in terms of $kd^2/n$ for a $\llbracket n,k,d \rrbracket$ quantum stabilizer code, exceeds all prior constructions under block length $n<1000$ and is also competitive with several others above $n>1000$; see the left panel of Figure \ref{fig:zsz-lp vs other kd2n}. All $X$ and $Z$ parity checks have weight 9, and each qubit participates in at most 6 $X$-checks and 6 $Z$-checks. The code parameters in addition to other useful quantities are listed in Table \ref{tab:ZSZ-LP codes intro}.

We build our quantum ZSZ-LP codes by first choosing pairs of candidate rate-1/2 classical LDPC codes sharing common ZSZ group symmetries, ensembled and filtered by minimum distance and girth. We then take the balanced product \cite{balanced_product} of these classical code pairs to obtain our rate-1/5 quantum LDPC codes. When the classical parity-check matrices have full rank, the logical dimensions of the classical and quantum codes are the same. We also find that for all code instances in Table \ref{tab:ZSZ-LP codes intro}, the (exact or estimated) minimum distances of the quantum codes are the same as those of their classical input codes. We also discuss how to build $ZX$-symmetric ZSZ-LP codes based on automorphism-fold symmetries, which can lead to fold-transversal Hadamard and phase gates. We provide theoretical upper bounds on Tanner-graph girths and minimum distances in Appendix \ref{app:ZSZ-LP codes}, which guide our design choices in constructing our candidate codes. In particular, we show in Theorem \ref{thm:abelian-canonical-syzygies} that an abelian group results in a low (constant) minimum distance, which is our primary motivation for choosing the ZSZ group.

\begin{figure}[t]
    \centering
    \includegraphics[width=0.49\textwidth]{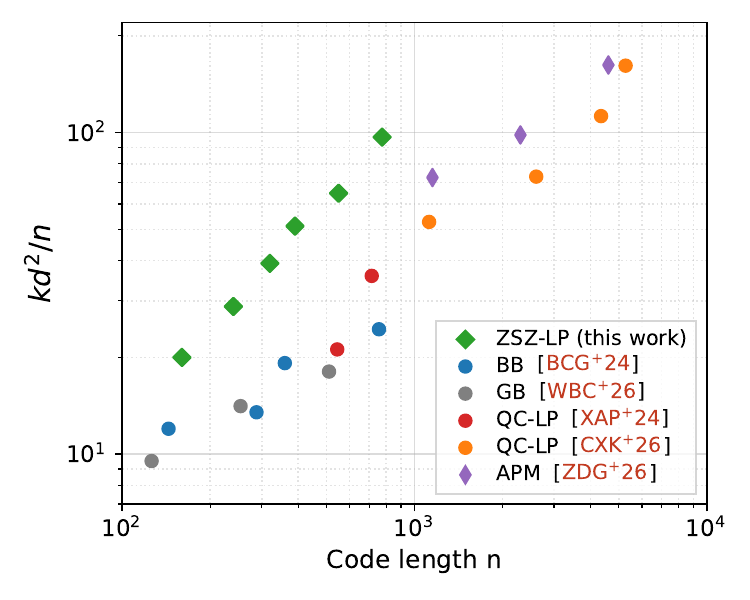}
    \includegraphics[width=0.5\textwidth]{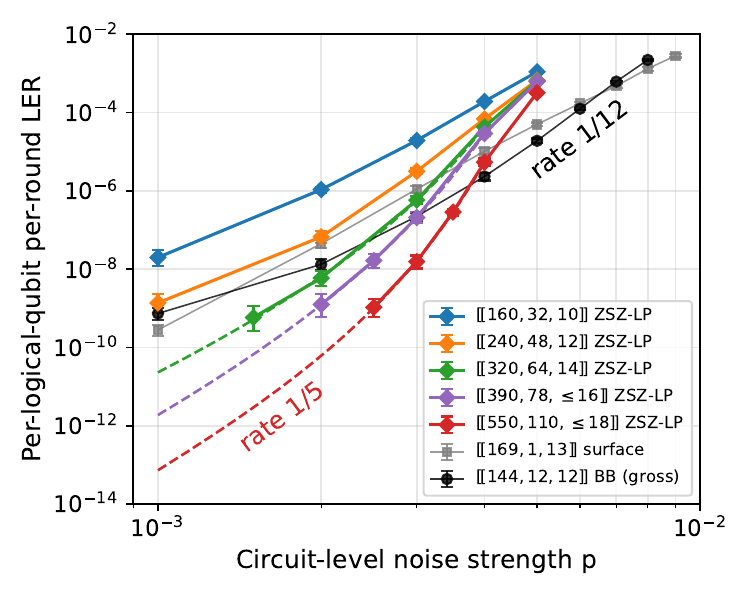}
    \caption{\textbf{Left:} The memory efficiency improvement $kd^2/n$ over the surface code is plotted as a function of the code length $n$ for several high-rate qLDPC codes proposed for near-term implementation, including bivariate bicycle (BB) \cite{BB_codes}, generalized bicycle (GB) \cite{webster2026pinnacle}, quasi-cyclic lifted product (QC-LP) \cite{Xu_2024_constant, cain2026shor} and Kasai's construction \cite{kasai2026breaking} with affine permutation matrices (APM) \cite{zhao2026ultra}. Note that $kd^2/n=1$ for the (rotated) surface code. \textbf{Right:} The $XZ$-averaged logical error rate (LER) per qubit-round (over $d$ rounds) is plotted as a function of the idling-free circuit-level noise strength $p$ for several candidate ZSZ-LP codes as well as the gross code and $d=13$ surface code. The high-rate qLDPC codes are decoded with a GPU-implementation \cite{cudaq_qec} of the Relay-BP message-passing algorithm \cite{Relay_BP}, and the surface code is decoded with minimum-weight matching \cite{Higgott2025sparseblossom}. Error bars denote 95\% confidence intervals. Low-error-rate extrapolation (dashed curves) is performed with the phenomenological fitting function $\bar{p} = p^{d_{\rm circ}/2}{\rm e}^{c_0+c_1p+c_2p^2}$ with fit parameters $c_0,c_1,c_2$ \cite{BB_codes}.}
    \label{fig:zsz-lp vs other kd2n}
\end{figure}

\begin{table}[t]
\centering\renewcommand{\arraystretch}{1.25}
\resizebox{\textwidth}{!}{
\begin{tabular}{cccccccc}
\toprule
\textbf{Code name} & $\llbracket n,k,d \rrbracket$ & $n+m$ & $kd^2/n$ & \textbf{\makecell{Check \\ weight}} & \textbf{\makecell{OGM\\ $\bar{Z},\bar{X}$}} & \textbf{\makecell{OGS\\ $\bar{Z},\bar{X}$}} & \textbf{\makecell{LER/$k$ at \\ $p=0.1\%$}} \\ \midrule
ZSZ-LP-160 & $\llbracket 160,32,10 \rrbracket$      & 288  & 20 & 9 & \cmark, \cmark & \cmark & $2.0(3)\times10^{-8}$ \\
ZSZ-LP-240 & $\llbracket 240,48,12 \rrbracket$      & 432  & 28.8 & 9 & \cmark, \cmark & \cmark & $1.4(2)\times10^{-9}$ \\
ZSZ-LP-320 & $\llbracket 320,64,14 \rrbracket$      & 576  & 39.2 & 9 & \cmark, \cmark & \cmark & $2.3(7)\times10^{-11}$ \\
ZSZ-LP-390 & $\llbracket 390,78,\leq16 \rrbracket$  & 702  & 51.2 & 9 & \cmark, \cmark & \cmark & $2(1)\times10^{-12}$ \\
ZSZ-LP-550 & $\llbracket 550,110,\leq18 \rrbracket$ & 990  & 64.8 & 9 & \cmark, \cmark & \xmark & $7(8)\times10^{-14}$ \\
ZSZ-LP-775 & $\llbracket 775,155,\leq22 \rrbracket$ & 1395 & 96.8 & 9 & $140,129$ & \xmark &   \\ [0.2em]
\Xhline{0.10pt}
APM \cite{zhao2026ultra} & $\llbracket 2304,1156,\leq14 \rrbracket$ & 3456 & 98.3 & 12 & \xmark & \xmark & $\approx 1.3\times10^{-13}$ \\
$\textsf{lp}^{3,7}_{20}$ \cite{cain2026shor} & $\llbracket 4350, 1224, \leq20 \rrbracket$ & 7500 & 112 & 10 & \xmark & \xmark & $\approx 8\times10^{-14}$ \\
\bottomrule
\end{tabular} }
\caption{Several candidate ZSZ-LP codes are tabulated including their $\llbracket n,k,d \rrbracket$ code parameters, physical (data + ancilla) memory footprint $n+m$, efficiency ratio $kd^2/n$ and (maximum) check weight. Exact and upper bounds on code distances are computed using the \textsf{pySATDist} and \textsf{QDistEvol} ($10^6$ iterations) methods respectively \cite{webster2026distance}. In the sixth column, we indicate whether a complete basis of one-generator-minimal (OGM) logical $\bar{X}$ and $\bar{Z}$ operators with weight $d$ has been found; otherwise we report the maximum number of linearly independent OGM operators found. The seventh column indicates whether a one-generator-symplectic (OGS) basis exists. The final column reports the (actual or extrapolated) logical error rate (LER) per qubit-round at $p=0.1\%$. For comparison, we also list two other constant-rate qLDPC codes previously proposed for the teraquop regime whose reported LERs were benchmarked under similar noise models but different decoders.}
\label{tab:ZSZ-LP codes intro}
\end{table}

\textbf{Circuit-level performance.} We numerically benchmark the memory performance of the first five ZSZ-LP codes in Table \ref{tab:ZSZ-LP codes intro} under idling-free circuit-level depolarizing noise with two-qubit gate and state-preparation and measurement (SPAM) errors; the per-qubit-round logical error rates (LER) are plotted in Figure \ref{fig:zsz-lp vs other kd2n}. These first five codes have a combined (data + ancilla) physical qubit footprint under 1000. Using reconfigurable-array-inspired syndrome extraction schedules and a GPU-accelerated Relay-BP decoder \cite{cudaq_qec}, we observe a pseudothreshold around $\approx0.5\%$, below which the LERs drop steeply with increasing code distance. The gigaquop (teraquop) regime, relevant to large-scale applications such as cryptanalysis \cite{Shor_1994_alg}, is defined as $10^9$ ($10^{12}$) reliable quantum operations and so, as a baseline, requires the per-qubit-round LER for memory to be below $10^{-10}$ ($10^{-13})$. Based on the numerical simulations, ZSZ-LP-320 and ZSZ-LP-390 can potentially reach the gigaquop regime, and ZSZ-LP-550 can potentially reach the teraquop regime at $p\leq0.1\%$ under our noise model, decoder and projected extrapolations. We cautiously say ``potentially'' for several reasons: \emph{(i)} the Relay-BP decoder can behave suboptimally at low noise due to trapping sets and may not be able to correct all errors of weight below half the circuit distance, which our extrapolation suggests; \emph{(ii)} the actual per-logical-qubit LER in practice is not really the same as dividing the block LER by the number of logical qubits, since a block failure could be correlated among many logical qubits; \emph{(iii)} we only benchmark memory, and achieving the aforementioned quop regimes also requires a complete architecture across a universal logical gate set. At $p=0.1\%$, we also benchmark the latency of the GPU Relay-BP decoder and find average latencies of around 1-2\,ms for ZSZ-LP-320 through ZSZ-LP-550, suggesting that our decoding configuration may already be sufficient for real-time decoding in trapped-ion and neutral-atom architectures. To the best of our knowledge, ZSZ-LP-320 through ZSZ-LP-550 are the first quantum LDPC codes with rate $\geq10\%$, and under 1000 physical (data + ancilla) qubits per block, to approach the gigaquop and teraquop memory regimes under idling-free circuit-level noise.

\textbf{Equivariant logical Pauli bases.} We leverage the ZSZ group symmetry of the classical seed codes to construct two kinds of logical Pauli bases: one-generator minimal (OGM) and one-generator symplectic (OGS). The Pauli operators in an OGM basis all have minimum weight $d$ but may not be symplectic, while those in an OGS basis are symplectic but may have extensive weight. ``One-generator'' refers to the fact that all basis operators of one Pauli type are constructed through the ZSZ group orbit of one seed codeword and are thereby equivariant under the ZSZ group. OGM and OGS bases thus represent a tradeoff between sparsity and utility, with predilection hinging on the desired application of the code. For each candidate code other than ZSZ-LP-775, we find a full-rank, non-symplectic OGM logical basis; see Table \ref{tab:OGM generators} for a presentation in terms of the ZSZ group algebra. For ZSZ-LP-160 through ZSZ-LP-390, we also find full-rank OGS logical bases at the expense of larger operator weights; see Table \ref{tab:OGS generators}. These group-equivariant logical Pauli operators can enable the construction of low-overhead surgery gadgets for logical I/O or direct Pauli-based computation on the code block \cite{webster2025explicit}. For one-generator codes in particular, owing to the classical automorphism symmetry, a single $X$-surgery gadget can be used to measure any $\bar{X}$-basis operator, and likewise for the $\bar{Z}$-basis.\footnote{Since the quantum ZSZ-LP code does not retain the full ZSZ group as an automorphism, it is slightly weaker than the case described in \cite{webster2025explicit}. In particular, automatic distance preservation of a surgery gadget between equivariant operators is not necessarily guaranteed anymore.} We present two examples of surgery gadgets for ZSZ-LP-390 in Table \ref{tab:surgery main}.

\subsubsection{Classical LDPC codes with design rate 1/2}

Along the way of constructing the quantum ZSZ-LP codes, we build classical $(3,6)$-regular LDPC codes with design rate 1/2 based on the same ZSZ group algebra that may be of independent interest for classical forward error correction (FEC). These codes enjoy symmetries governed by free actions of the ZSZ group, which we use for both the balanced product as well as constructing equivariant codewords. We find that these classical ZSZ-2BGA codes have statistically higher minimum distances than other $(3,6)$-regular ensembles such as the bipartite configuration model \cite{Bender_1978, Bollobas_1980, ModernCodingTheory} and Progressive Edge Growth (PEG) \cite{PEG_LDPC}; see the left panel of Figure \ref{fig:classical zsz vs random dist}. Our classical ZSZ-2BGA code pairs used to construct our candidate quantum ZSZ-LP codes can be found in Table \ref{tab:classical ZSZ code params}. The equivariant OGM codewords for each code are constructed from automorphism orbits of a seed weight-$d$ codeword; all orbits except those for the last ZSZ-2BGA code have full rank. In addition, the classical seed codes of ZSZ-LP-160 through ZSZ-LP-390 admit one-generator-systematic (OGS) generator matrices, which trade the minimum-weight property for systematic encoding. To the best of our knowledge, these classical ZSZ-2BGA codes are the first $(3,6)$-regular LDPC codes to possess such distance and codeword properties at these block lengths. We provide theoretical upper bounds on minimum distances and girths of abelian-2BGA and ZSZ-2BGA codes in Appendix \ref{app:classical 2BGA codes}.

\begin{figure}[t]
    \centering
    \includegraphics[width=0.47\textwidth]{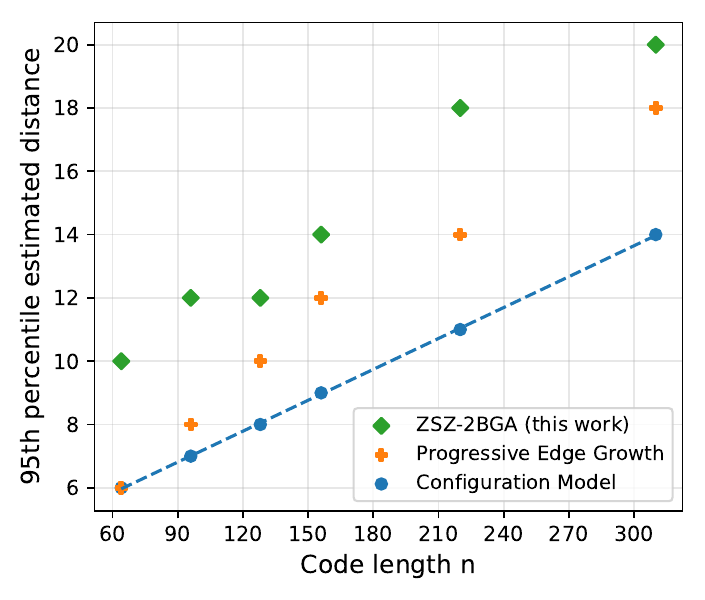} \hfill
    \includegraphics[width=0.505\textwidth]{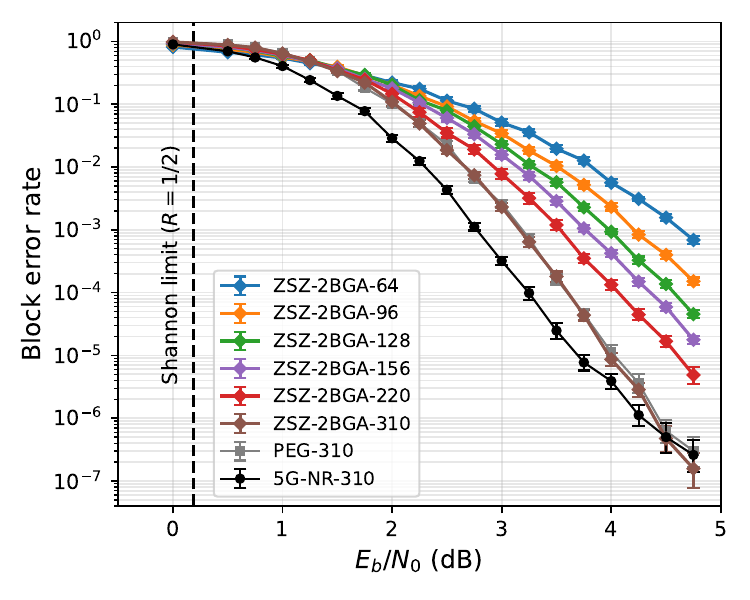}
    \caption{\textbf{Left:} The 95th percentile distance as a function of the code length $n$ is plotted for three random $(3,6)$-regular LDPC ensembles: classical two-block ZSZ (ZSZ-2BGA) codes, the bipartite configuration model \cite{ModernCodingTheory}, and the progressive-edge-growth (PEG) algorithm \cite{PEG_LDPC}. 1000 codes are sampled from each ensemble, and minimum distances are estimated with 10,000 iterations of \textsf{QDistEvol}. A linear fit is applied to the configuration model since its $(3,6)$-regular ensemble is statistically known to be an asymptotically good family \cite{ModernCodingTheory}. \textbf{Right:} The block error rate under the additive white Gaussian noise (AWGN) channel with binary phase shift keying (BPSK) modulation is plotted for the classical left ZSZ-2BGA codes in Table \ref{tab:classical ZSZ code params} as well as a PEG code and a degree-irregular 5G-NR LDPC code at block length 310. $E_b/N_0$ is the normalized signal-to-noise ratio through the channel. The vertical dashed line indicates the channel capacity, or Shannon limit, of AWGN-BPSK for information rate 1/2. Error bars denote 95\% confidence intervals.}
    \label{fig:classical zsz vs random dist}
\end{figure}

We also numerically benchmark the FEC performance of our classical ZSZ-2BGA codes under a standard additive white Gaussian noise (AWGN) channel with binary phase shift keying (BPSK) modulation, and decoded with GPU-accelerated sum-product BP \cite{sionna_github}; see the right panel of Figure \ref{fig:classical zsz vs random dist}. We observe a pseudothreshold near $1.25\,$dB, which is roughly within $1.1$\,dB of the AWGN-BPSK channel capacity $(E_b/N_0)_\star\approx 0.19\,$dB for information rate $R=1/2$. Notably, ZSZ-2BGA-310 (the left seed code of ZSZ-LP-775) performs nearly identically to the best girth-8 PEG instance from a 1000-code sample. Compared to an irregular-degree-optimized LDPC code under the fifth generation new radio (5G NR) specification \cite{Richardson_2018}, it has worse performance in the 5G code's waterfall regime $\lesssim4\,$dB but gains traction in its error-floor regime $\gtrsim4\,$dB. Such strong FER performance combined with group-algebra structure could make ZSZ-2BGA codes appealing for practical applications in classical error correction.

\subsection{Related works}

Guo et al. \cite{ZSZ_codes} introduced the ZSZ group for constructing quantum 2BGA codes, including their implementation in reconfigurable atom arrays, and is the primary motivator for this work. We present several significant improvements over this earlier work. First, the earlier codes had design rate zero, whereas our codes have design rate 1/5, which allows our memory efficiencies ($kd^2/n$) to be several times higher. The earlier work also employed random-coloration syndrome extraction for simulations, while we develop a greedy scheduler specifically tailored for optical-tweezer arrays as well as provide physical syndrome-extraction rearrangement times. Finally, our construction allows for easy construction and analysis of minimum-weight and symplectic logical operator bases, which would be difficult in the previous construction due to the zero design rate. Willenborg et al. \cite{Willenborg_2025_dihedral} studied non-LDPC lifted product codes over dihedral groups, which are ZSZ groups with $\ell_2=2$ and $q=-1$. Unfortunately, when restricting the classical seed codes to LDPC 2BGA form, we show in Corollary \ref{cor:dihedral dichotomy} that dihedral groups lend to either poor distance or poor girth.

Other complementary works on constant-rate qLDPC codes targeting the gigaquop and teraquop regimes include quasi-cyclic lifted product codes with rate $\approx29\%$ \cite{cain2026shor} and non-product codes based on affine permutation matrices (APM) with rate $\approx50\%$ \cite{zhao2026ultra}. On one hand, the qLDPC codes in both these works require physical qubit counts in the thousands, whereas our codes (other than ZSZ-LP-775) stay under a thousand. On the other hand, their logical cycle speeds in reconfigurable atom arrays are about an order of magnitude faster than ours, due to their efficient co-design using simple array rearrangements, e.g. cyclic translations, whereas our codes are mostly optimized for pure decoding performance and currently require increasing rearrangement overhead with increasing block length. We anticipate that the time overheads for syndrome extraction can be further improved in future work.

On the classical coding side, the closest LDPC precedents to our ZSZ-2BGA codes are the Margulis codes \cite{Margulis_1982} and their Ramanujan-graph variants by Rosenthal and Vontobel \cite{Rosenthal_Vontobel_2000}: $(3,6)$-regular LDPC codes over $\F_2[{\rm SL}_2(\F_p)]$ (and ${\rm PGL}_2(\F_q)$) whose parity-check matrices take a similar two-block form. Relatedly, Bazzi and Mitter proved that non-LDPC two-block codes over dihedral group algebras with random \emph{dense} generators are asymptotically good \cite{Bazzi_Mitter_2006}, lending theoretical support to non-abelian two-block constructions; to our knowledge, sparse two-block codes over metacyclic (ZSZ) group algebras have not been previously studied classically.

\textbf{Concurrent works.} While preparing this manuscript, we
became aware of concurrent and independent works \cite{mitten_codes, zheng2026lp} with similar results and have coordinated a joint public release. The first concurrent work \cite{mitten_codes} also studies qLDPC codes with the same five-block structure. While our work focuses on code construction and decoding, their work includes a full suite of logical gadgets for FTQC, and so we encourage readers interested in the practical FTQC application to also explore their manuscript. The second concurrent work \cite{zheng2026lp} explores FTQC with \emph{abelian} lifted product codes. While our codes are built from the non-abelian lifted product, it shares some similarity with the abelian case, and it would be interesting whether their abelian gadgets can be ported over to the non-abelian setting. In addition to these concurrent works, another work \cite{okada2026cpm} appeared that introduces \emph{non-product} qLDPC codes of similar block lengths and encoding efficiencies using circulant permutation matrices (CPM). For instance, their $\llbracket 472,122,14\rrbracket$ CPM code has comparable $kd^2/n\approx51$ with our $\llbracket 390,78,\leq16 \rrbracket$ ZSZ-LP code but with $(3,8)$-LDPC check matrices compared to our $(6,9)$-LDPC check matrices. Loosely speaking, non-product constructions allow for lower degrees and larger girths compared to product constructions at the same block length. At the same time, product constructions enable code properties, such as logical bases, to be inherited from smaller classical seed codes that are simpler to analyze. We anticipate the interplay between product and non-product constructions to be an exciting frontier in practical qLDPC code discovery.

%%%%%%%%%%%%%%%%%%%%%%%%%%%%%%%%%%%%%%%%%%%%%%%%%%%%%%%%%%

\section{Code construction}\label{sec:code construction}

\subsection{General 5-block form}

Let $A,B,C,D$ be $\ell\times\ell$ binary matrices with bounded row and column weights, i.e. LDPC. Our quantum LDPC codes have CSS parity-check matrices in block form:
\begin{subequations}\label{eq:5-block H_X,H_Z}
\begin{align}
    H_X &= \begin{pmatrix}\,
        A & 0 & B & 0 & C^\transp \\
        0 & A & 0 & B & D^\transp \,
    \end{pmatrix}  \\
    H_Z &= \begin{pmatrix}\,
        C & D & 0 & 0 & A^\transp \\
        0 & 0 & C & D & B^\transp \,
    \end{pmatrix} \, ,  \label{eq:5-block H_Z}
\end{align}
\end{subequations}
where `0' denotes the $\ell\times\ell$ zero matrix, and $A^\transp$ denotes the transpose of $A$. The CSS orthogonality condition $H^{}_X H^\transp_Z=0$ requires that
\begin{align}\label{eq:5-block CSS orth}
    H^{}_X H^\transp_Z = 
    \begin{pmatrix}
        AC^\transp+C^\transp A & BC^\transp+C^\transp B \\
        AD^\transp+D^\transp A & BD^\transp+D^\transp B
    \end{pmatrix} = 
    \begin{pmatrix}\,
        [A,C^\transp] & [B,C^\transp] \\
        [A,D^\transp] & [B,D^\transp] \,
    \end{pmatrix} = 0  \, ,
\end{align}
where in the second equality we used the fact that $+$ and $-$ are the same under binary arithmetic ($\F_2$). This condition is to ensure that the $X$-checks and $Z$-checks commute and can thus be simultaneously measured. If the CSS orthogonality condition is satisfied, then the usual CSS formula for the code dimension gives us
\begin{align}
    k = n - \rank{H_X} - \rank{H_Z} \geq 5\ell - 2\ell - 2\ell = \ell \, ,
\end{align}
or alternatively, $k/n \geq 1/5$; i.e. design rate 1/5.

The problem now boils down to finding binary matrices $A,B,C,D$ that satisfy
\begin{align}\label{eq:A,B,C,D comm conditions}
    [A,C^\transp] = [B,C^\transp] = [A,D^\transp] = [B,D^\transp] = 0  \, .
\end{align}

\subsection{ZSZ group primer}

One method to build such $A,B,C,D$ is with the help of a group algebra, where they will become matrix representations of group elements, and their commutator structure \eqref{eq:A,B,C,D comm conditions} will arise from group properties. Short for ``Z semidirect Z'', a \textbf{ZSZ} group with parameters ($\ell_1,\ell_2,q$) satisfying $q^{\ell_2} = 1\;\text{mod $\ell_1$}$, of size $\ell=\ell_1\ell_2$, is the semidirect product of two cyclic groups with the presentation:
\begin{align}\label{eq:ZSZ presentation}
    \mathbb{Z}_{\ell_1} \rtimes_q \mathbb{Z}_{\ell_2} := \big\langle x,y \;\big|\; x^{\ell_1} = y^{\ell_2} = yxy^{-1}x^{-q} = 1 \big\rangle \, .
\end{align}
The $q^{\ell_2} = 1\;\text{mod $\ell_1$}$ condition is to ensure that the group presentation is closed. In math literature, \eqref{eq:ZSZ presentation} is sometimes referred to as a metacyclic group.
Denote the ZSZ group by $\ZSZ_{\ell_1,\ell_2,q}$ or ZSZ for short when the parameters are implicit. $q$ is the non-abelian ``twist'' factor and determines how an element $x^a\in\mathbb{Z}_{\ell_1}$ changes when conjugated by $y\in\mathbb{Z}_{\ell_2}$. When $q=1$, the twist is trivial, and the presentation reduces to that of the abelian bivariate bicycle (BB) group $\mathbb{Z}_{\ell_1} \times \mathbb{Z}_{\ell_2}$. There are several theoretical and practical motivations for considering the ZSZ group:
\begin{enumerate}
    \item As we will see later, choosing an abelian group for the 5-block construction \eqref{eq:5-block H_X,H_Z} results in poor minimum distance; i.e. constant independent of block length.
    \item The ZSZ groups have more flexible choices in size $\ell=\ell_1\ell_2$ compared to other non-abelian groups, which allows greater flexibility in code construction.
    \item The ZSZ groups have a natural $\ell_1\times\ell_2$ rectangular embedding suitable for systems with movable qubits such as reconfigurable atom arrays \cite{ZSZ_codes}.
\end{enumerate}

A useful way to organize the data of the ZSZ group is to prescribe a canonical form $x^ay^b$ where the $x$ component appears to the left of the $y$ component. Using this canonical form, the twist $q$ induces the following ``push-through'' relations:
\begin{subequations}\label{eq:push-through relations}
\begin{align}
    y^\beta(x^iy^j) &= x^{q^\beta i} y^{j+\beta} \label{eq:push-through y left}  \\
    (x^i y^j)x^\alpha &= x^{i+q^j\alpha} y^j  \label{eq:push-through x right}  \, .
\end{align}
\end{subequations}
The left and right Cayley graphs of the group give a useful geometric picture of these push-through relations; see Figure \ref{fig:ZSZ cayley graphs}.

\begin{figure}[t]
    \centering
    \resizebox{\textwidth}{!}{%
    \begin{tikzpicture}[
      x=0.92cm, y=0.86cm,
      xedge/.style={draw=blue!80, line width=0.5pt, opacity=0.75},
      yedge/.style={draw=red!80, line width=0.5pt, opacity=0.75},
      axlab/.style={font=\footnotesize},
      ptitle/.style={font=\bfseries},
    ]

    %================= LEFT CAYLEY GRAPH =================
    \begin{scope}
      % x-edges: untwisted horizontal shift  x^i y^j -> x^{i+1} y^j
      \foreach \j in {0,...,5}{
        \foreach \i in {0,...,7}{ \draw[xedge] (\i,\j) -- ({\i+1},\j); }
      }
      % y-edges: twisted  x^i y^j -> x^{q i} y^{j+1},  here q=2
      \foreach \j in {0,...,4}{
        \foreach \i in {0,...,8}{
          \pgfmathtruncatemacro{\ti}{mod(2*\i,9)}
          \draw[yedge] (\i,\j) -- (\ti,{\j+1});
        }
      }
      % vertices
      \foreach \j in {0,...,5}{ \foreach \i in {0,...,8}{ \fill (\i,\j) circle (2pt); } }
      % axes
      \foreach \i/\t in {0/1,1/x,2/{x^2},3/{x^3},4/{x^4},5/{x^5},6/{x^6},7/{x^7},8/{x^8}}
        \node[axlab] at (\i,-0.55) {$\t$};
      \foreach \j/\t in {0/1,1/y,2/{y^2},3/{y^3},4/{y^4},5/{y^5}}
        \node[axlab] at (-0.6,\j) {$\t$};
      \node[ptitle] at (4,-1.5) {Left Cayley graph};
    \end{scope}

    %================= RIGHT CAYLEY GRAPH =================
    \begin{scope}[xshift=10.5cm]
      % y-edges: untwisted vertical shift  x^i y^j -> x^i y^{j+1}
      \foreach \i in {0,...,8}{
        \foreach \j in {0,...,4}{ \draw[yedge] (\i,\j) -- (\i,{\j+1}); }
      }
      % x-edges: twisted, row-dependent step x^i y^j -> x^{i+q^j} y^j (one representative per row)
      \foreach \j/\s/\b in {0/1/30,1/2/28,2/4/22,3/8/12,4/7/14,5/5/18}
        \draw[xedge] (0,\j) to[bend left=\b] (\s,\j);
      % vertices
      \foreach \j in {0,...,5}{ \foreach \i in {0,...,8}{ \fill (\i,\j) circle (2pt); } }
      % axes
      \foreach \i/\t in {0/1,1/x,2/{x^2},3/{x^3},4/{x^4},5/{x^5},6/{x^6},7/{x^7},8/{x^8}}
        \node[axlab] at (\i,-0.55) {$\t$};
      \foreach \j/\t in {0/1,1/y,2/{y^2},3/{y^3},4/{y^4},5/{y^5}}
        \node[axlab] at (-0.6,\j) {$\t$};
      \node[ptitle] at (4,-1.5) {Right Cayley graph};
    \end{scope}

    % legend
    \draw[blue!80, line width=1.4pt] (6.7,-2.3) -- (7.4,-2.3);
    \node[right, font=\footnotesize] at (7.4,-2.3) {$x$ edges};
    \draw[red!80, line width=1.4pt] (10.2,-2.3) -- (10.9,-2.3);
    \node[right, font=\footnotesize] at (10.9,-2.3) {$y$ edges};

    \end{tikzpicture}%
    }
    \caption{The left and right Cayley graphs of the ZSZ group $\ZSZ_{9,6,2}$ drawn on the $\ell_1\times\ell_2$ grid of group elements $x^iy^j$ (black dots), with the $x$-exponent $i$ as the horizontal coordinate and the $y$-exponent $j$ as the vertical coordinate. Edges denote multiplication by the generators $x$ (blue) and $y$ (red). \textbf{Left:} The left Cayley graph. Left multiplication by $x$ is an untwisted horizontal shift $x^iy^j\mapsto x^{i+1}y^j$, while left multiplication by $y$ is twisted $x^iy^j\mapsto x^{qi}y^{j+1}$ via \eqref{eq:push-through y left}, scaling the $x$-coordinate by $q$ on each upward step. \textbf{Right:} The right Cayley graph. Here right multiplication by $y$ is the untwisted vertical shift $x^iy^j\mapsto x^iy^{j+1}$, whereas right multiplication by $x$ carries a twisted, row-dependent horizontal shift $x^iy^j\mapsto x^{i+q^j}y^j$ via \eqref{eq:push-through x right}; for visual clarity only one representative $x$-edge is drawn per row, the others being its horizontal translates. Edges that wrap around the periodic boundaries are omitted.}
\label{fig:ZSZ cayley graphs}
\end{figure}

To get permutation matrices from the group elements, we use their left-regular and right-regular representations, which essentially encode the group's left and right Cayley multiplication tables. The left matrices $L[\cdot]$ encode the action of left multiplication on group elements, while the right matrices $R[\cdot]$ encode right multiplication. Since group multiplication is associative, e.g. $a(gb)=(ag)b$, all $L[\cdot]$ matrices commute with all $R[\cdot]$ matrices. Define the $m\times m$ cyclic-shift matrix $S_m$ and affine permutation matrix $T^{(q)}_m$ by
\begin{align}
    S_m\ket{i}=\ket{i+1}\, ,\qquad T^{(q)}_m\ket{i}=\ket{qi}
\end{align}
on basis vectors $\ket{i}$, where the indices are understood modulo $m$. Equivalently in terms of matrix elements, $\big(S_{m}\big)_{i+1,i}=1$ and $\big(T^{(q)}_{m}\big)_{qi,i}=1$. Then the left-regular representation of $xy\in\ZSZ$, using the push-through relation \eqref{eq:push-through y left}, takes the form
\begin{align}\label{eq:L[xy] tensor decomp}
    L[xy] = \big(S^{}_{\ell_1}T^{(q)}_{\ell_1}\big) \otimes S^{}_{\ell_2}  \, .
\end{align}
Note that with the trivial twist $q=1$, we have $T^{(1)} = \ident$ and $L[xy]$ reduces to the circulant matrices used in the BB codes \cite{BB_codes}. Let $\Pi_j=\ket{j}\!\bra{j}$ denote the projector onto the $j$th $y$-coordinate, with $j$ understood modulo $\ell_2$. The right-regular representation $R[xy]$ induces $y$-exponent-dependent $x$-translations \eqref{eq:push-through x right}, i.e. a horizontal shear, given by
\begin{align}\label{eq:R[xy] decomp}
    R[xy] = \sum_{j=0}^{\ell_2-1} S_{\ell_1}^{q^j} \otimes S_{\ell_2}\Pi_j \, .
\end{align}

\subsection{Classical two-block ZSZ codes}

The first step in building our quantum LDPC codes is to build a pair of classical LDPC codes. Let $a,b,c,d \in \F_2[\ZSZ]$ be trinomials in the ZSZ group algebra; in other words, they are each sums of three group elements or monomials such as $1+x+xy$.
Define the pair of classical parity-check matrices
\begin{align}\label{eq:H_left,H_right}
    H_{\rm left} = \begin{pmatrix}\,
        L[a] & L[b] \,
    \end{pmatrix} \;,\qquad
    H_{\rm right} = \begin{pmatrix}\,
        R[c] & R[d] \,
    \end{pmatrix} \, .
\end{align}
$H_{\rm left}$ and $H_{\rm right}$ are classical two-block group-algebra (2BGA) codes using the left-regular and right-regular representations respectively. We will append the prefix ``left'' for properties of $H_{\rm left}$ and ``right'' for $H_{\rm right}$; e.g. left codeword, right distance. Since our ZSZ group has size $\ell$, every $L[\cdot],R[\cdot]$ has dimensions $\ell\times\ell$, and so $H_{\rm left}, H_{\rm right}$ both have dimension $\ell\times2\ell$. Hence they both have design rate 1/2, with the true rate being higher if they are rank-deficient. Since $a,b,c,d$ are trinomials, their regular representations are all $(3,3)$-LDPC, and so $H_{\rm left}, H_{\rm right}$ are both $(3,6)$-LDPC. Without loss of generality, we can always normalize the first monomial to be 1 in each of $a,b,c,d$, which corresponds to a relabeling of the columns, i.e. an origin shift, of $H_{\rm left}$ and $H_{\rm right}$.

Notice that $H_{\rm left}$ is built solely from the left-regular representation of the group algebra. Recall that all left matrices $L[\cdot]$ commute with all right matrices $R[\cdot]$ from group associativity. In other words, $H_{\rm left}$ has a natural, free right action by the ZSZ group:
\begin{align}\label{eq:H_left symmetry}
    R[g]\, H_{\rm left} = \big(\,R[g]L[a]\;\;\,R[g]L[b]\,\big) = \big(\,L[a]R[g]\;\;\,L[b]R[g]\,\big) = H_{\rm left}\, R[g]^{\oplus2}
\end{align}
for all $g\in\ZSZ$, where $R[g]^{\oplus2} = \mathrm{diag}(R[g],R[g])$. Conversely, $H_{\rm right}$ has a natural, free left action by the ZSZ group. These symmetries will be useful both for the quantum code construction in Section \ref{sec:balanced product} as well as for constructing equivariant codeword bases and logical Pauli operators in Section \ref{sec:logical basis}.

One may be wondering about the particular choices of \emph{(i)} a non-abelian group such as ZSZ, and \emph{(ii)} having $a,b,c,d$ be weight-3 trinomials. The choice of a non-abelian group comes from the following simple observation. Suppose the underlying group was abelian. Then define the following matrix $G_{\rm left} = (L[b]^\transp\;\;L[a]^\transp)$ which satisfies $H^{}_{\rm left} G^\transp_{\rm left} = L[a]L[b] + L[b]L[a] = L[ab+ba] = 0$ since the group is abelian. Hence, every row of $G_{\rm left}$ is a left codeword, which implies that the left distance $d_{\rm left} \leq \wt(a) + \wt(b) = 6$. Analogously, we will have $d_{\rm right} \leq 6$. Readers familiar with quantum 2BGA codes \cite{2BGA_codes} may notice that $H_{\rm left}\sim H_X$, and $G_{\rm left}$ is nothing more than the corresponding $H_Z$. In the language of abstract algebra, $G_{\rm left}$ is sometimes called a syzygy with respect to $H_{\rm left}$. So if we desire minimum distances for \eqref{eq:H_left,H_right} that are not constant with respect to $\ell$, we must choose a non-abelian group. See Appendix \ref{app:classical 2BGA codes} for more details and formal statements on classical 2BGA codes, including distance upper bounds for ZSZ groups.

The second choice of trinomials comes from the following observations. First, from the perspective of fault tolerance, we want the weights of $a,b,c,d$ to be small because ultimately these classical codes will be used to build the quantum codes, and the quantum check weights will be related to the classical check weights. Smaller check weights typically lead to higher thresholds under circuit-level noise and so are desirable from a practical standpoint. Smaller check weights also typically lead to Tanner graphs with fewer short cycles which are desirable from a BP-decoding standpoint. Now, suppose one of the $a,b$ in $H_{\rm left}$ has unit weight, which without loss of generality we can assume $a=1$. Then $H_{\rm left}$ has the block-form $H_{\rm left} = (\ident\;\;B)$ where $B=L[b]$. This block form is essentially a systematic form for $H_{\rm left}$, and so we can identify $G_{\rm left} = (B^\transp\;\;\ident)$ as its systematic generator matrix, which satisfies $H^{}_{\rm left} G^\transp_{\rm left} = B+B=0$ over $\F_2$. Again, we see that the code distance is constant of $\ell$. Now we move on to the case where $\wt(a)=2$. In this case, $A=L[a]$ is $(2,2)$-LDPC and is a cycle code on a graph that is composed of disjoint ring subgraphs, i.e. direct sums of cyclic repetition codes. On each ring subgraph, any even number of unsatisfied parity checks can become satisfied through perfect matching. Hence, a potentially low-weight codeword can be created through the following manner: spawn low-weight errors on the $B$ subblock with even-weight syndromes which can then be cancelled by matching on the $A$ subblock. This argument is not rigorous but gives an intuition for why $\wt(a)=2$ can be bad. In a numerical search with random $a,b$, we corroborate this intuition and find that the code distances are rather poor when $\wt(a)=2$.

\begin{table}[t]
\centering\renewcommand{\arraystretch}{1.1}
\begin{tabular}{ccccccc}
\toprule
\textbf{Code name} & $[n,k,d]_{\rm left}$ & $[n,k,d]_{\rm right}$ & \textbf{\makecell{left \\ girth}} & \textbf{\makecell{right \\ girth}} & \textbf{\makecell{OGM \\ $\mathbf{c}_{\rm left}, \mathbf{c}_{\rm right}$}} & \textbf{\makecell{OGS \\ $G_{\rm left}, G_{\rm right}$}} \\ \midrule
ZSZ-2BGA-64 & $[64,32,10]$       & $[64,32,10]$       & 6 & 6 & \cmark, \cmark & \cmark, \cmark \\
ZSZ-2BGA-96 & $[96,48,12]$       & $[96,48,12]$       & 6 & 6 & \cmark, \cmark & \cmark, \cmark \\
ZSZ-2BGA-128 & $[128,64,14]$      & $[128,64,14]$      & 6 & 6 & \cmark, \cmark & \cmark, \cmark \\
ZSZ-2BGA-156 & $[156,78,16]$      & $[156,78,16]$      & 6 & 6 & \cmark, \cmark & \cmark, \cmark \\
ZSZ-2BGA-220 & $[220,110,\leq18]$ & $[220,110,\leq18]$ & 8 & 8 & \cmark, \cmark & \xmark, \xmark \\
ZSZ-2BGA-310 & $[310,155,\leq22]$ & $[310,155,\leq22]$ & 8 & 8 & $140,129$ & \xmark, \xmark \\ \bottomrule
\end{tabular}
\caption{The parameters of the classical ZSZ-2BGA codes $H_{\rm left}, H_{\rm right}$ for every candidate ZSZ-LP code are listed as well as the girths of their corresponding Tanner graphs. Every code is $(3,6)$-regular with rate 1/2. The trinomials can be found in Table \ref{tab:ZSZ-LP polynomials}. Exact and upper bounds on code distances are computed using the \textsf{pySATDist} and \textsf{QDistEvol} ($10^5$ iterations) methods respectively \cite{webster2026distance}. The last column indicates whether the classical codes admit one-generator-systematic (OGS) generator matrices.}
\label{tab:classical ZSZ code params}
\end{table}

Another important design characteristic for classical LDPC codes is the girth of the Tanner graph. It is well known that the performance of message-passing decoder such as BP is sensitive to the girth of the Tanner graph. Short cycles introduce statistical dependencies among the messages and are a well-known source of error floors \cite{MacKay_Postol_2003, ModernCodingTheory}. A large girth ensures that no small cycles exist in the Tanner graph that can degrade a BP decoder. The density of short cycles can depend sensitively on the ZSZ group parameters as well as choice of trinomials for $a$ and $b$. Fortunately, the upper bounds on 2BGA Tanner girths from ZSZ groups seem to be fairly generous and provide some heuristic guidelines for constructing Tanner graphs with large girth; see Appendix \ref{app:classical tanner girth} for more details. In addition to the distance-6 obstruction above, it is also known that the Tanner girth is at most 6 for abelian 2BGA codes \cite[Theorem B.3]{ZSZ_codes}.\footnote{\cite[Theorem B.3]{ZSZ_codes} computes girth over variable-variable or check-check adjacency graphs, and hence the Tanner girth is twice the girths that they report.}

We can generate many classical (left) ZSZ-2BGA codes by randomly sampling trinomials for $a$ and $b$. To this end, we randomly sample 1000 instances for several ZSZ groups and estimate their minimum distances using 10,000 iterations of \textsf{QDistEvol} \cite{webster2026distance}.
Figure \ref{fig:classical zsz vs random dist} displays the upper 95th percentiles of estimated distances for three random ensembles of $(3,6)$-LDPC codes: left ZSZ-2BGA, the bipartite configuration model \cite{Bender_1978, Bollobas_1980}, and the Progressive Edge Growth algorithm \cite{PEG_LDPC}. The bipartite configuration model constructs a random Tanner graph by assigning empty sockets on the bit and check nodes according to prescribed degrees $(\Delta_B,\Delta_C)$ and then pairing up sockets to form edges via a random permutation. Theoretically, the configuration model has been proven to generate asymptotically good (constant rate and linear distance) LDPC codes with high probability when $\Delta_C>\Delta_B$ (for constant rate) and $\Delta_B\geq3$ (for linear distance) \cite{ModernCodingTheory}. In Progressive Edge Growth, edges are added progressively in a selected manner that avoids short cycles, and it is one of the leading methods to construct well-performing LDPC codes in practice. Remarkably, we find that the random ZSZ-2BGA codes statistically produce $(3,6)$-LDPC codes with (estimated) minimum distances that exceed those of both the configuration model and Progressive Edge Growth at block lengths $n \in \{64, 96, 128, 156, 220, 310\}$, the lengths that we sample for eventually building the quantum codes. The left side of Figure \ref{fig:classical zsz vs random dist} shows a scatter plot of the 95th percentile estimated distance of 1000 code instances from all three LDPC ensembles at the aforementioned code lengths.

\subsection{Quantum ZSZ-LP codes from the balanced product}
\label{sec:balanced product}

After choosing a pair of $H_{\rm left}, H_{\rm right} \in \F^{\ell\times2\ell}_2$ corresponding to classical ZSZ-2BGA codes, we take their balanced product \cite{balanced_product} to obtain quantum LDPC codes with CSS parity-check matrices
\begin{subequations}\label{eq:H_X,H_Z balanced product}
\begin{align}
    H_X &= \big(\, H^{}_{\rm left} \otimes \ident^{}_2 \;\big|\; \ident^{}_1 \otimes H^\transp_{\rm right} \,\big)  \\
    H_Z &= \big(\, \ident_2 \otimes H^{}_{\rm right} \;\big|\; H^\transp_{\rm left} \otimes \ident^{}_1 \,\big) \, ,
\end{align}
\end{subequations}
where $\ident$ denotes the $j\times j$ identity matrix. Plugging in our definitions of $H_{\rm left},H_{\rm right}$ \eqref{eq:H_left,H_right} into \eqref{eq:H_X,H_Z balanced product}, we arrive at our 5-block form \eqref{eq:5-block H_X,H_Z} with
\begin{align}
    A=L[a] \;,\qquad B=L[b] \;,\qquad C=R[c] \;,\qquad D=R[d] \, .
\end{align}
One can quickly verify that the CSS orthogonality conditions \eqref{eq:5-block CSS orth} are satisfied since left and right representations commute.
The balanced product first involves taking the ordinary hypergraph product \cite{HGP}, which produces an algebraic complex that lives in the Cartesian product space $\ZSZ\times\ZSZ$, which scales in size by $\ell^2$. The balanced product then quotients out a redundant symmetry corresponding to the diagonal subgroup $\mathcal{D} = \{ (g,g)\;|\; g\in\ZSZ \}$ of size $\ell$; in other words it enforces the equivalence relation $v\cdot g\otimes w \sim v\otimes g\cdot w$ that precisely corresponds to the free right and left actions of the ZSZ group on $H_{\rm left}$ and $H_{\rm right}$ respectively \eqref{eq:H_left symmetry}. The size of the remaining quotient space thus scales as $|\ZSZ\times\ZSZ|/|\mathcal{D}| \sim \ell^2/\ell = \ell$, which matches the dimension scaling of \eqref{eq:H_X,H_Z balanced product}. Geometrically, a balanced product code lives on a square Cayley complex that can be thought of as a marriage of the left and right Cayley graph, whose squares correspond precisely to commuting left and right actions.

Alternatively, one can interpret \eqref{eq:H_X,H_Z balanced product} as first taking the hypergraph product of $H_{\rm left}$ and $H_{\rm right}$ but over their $1\times2$ ZSZ group algebra presentations rather than their full binary presentations, which essentially results in the all-lower-case version of \eqref{eq:5-block H_X,H_Z}, with transposes replaced by the antipode map that inverts all group elements. The resulting group-algebra matrix is then ``lifted'' to binary by replacing $a,b$ with their left-regular representations and $c,d$ with the right-regular representations, a construction known as the non-abelian lifted product \cite{PK_good_qLDPC}. To not confuse abbreviations with belief propagation (BP), we call these codes ZSZ lifted product codes, or ZSZ-LP codes for short. See Table \ref{tab:ZSZ-LP polynomials} for a list of the trinomials $a,b,c,d$ used to build each candidate ZSZ-LP code. Note that $H_{\rm left}$ and $H_{\rm right}$ can be inferred from $a,b$ and $c,d$ respectively for each code.

\begin{table}[t]
\centering\renewcommand{\arraystretch}{1.3}
\resizebox{\textwidth}{!}{
\begin{tabular}{cccccc}
\toprule
\textbf{Code name} & $\ell_1,\ell_2,q$ & $a$ & $b$ & $c$ & $d$ \\ \midrule
ZSZ-LP-160 & $16,2,9$ & $1+x^{13}+x^{14}y$ & $1+x^3y+x^{13}y$ & $1+x^{14}+x^{12}y$ & $1+x^7+x^{11}$ \\
ZSZ-LP-240 & $12,4,7$ & $1+x^8y+x^7y^2$ & $1+x^2+x^5y$ & $1+x^4y+x^2y^3$ & $1+x^4+x^9$ \\
ZSZ-LP-320 & $16,4,3$ & $1+x+y$ & $1+x^2+x^{13}y^3$ & $1+x^{12}y+x^2y^3$ & $1+x^{12}+x^{11}y^2$ \\
ZSZ-LP-390 & $26,3,3$ & $1+x^{17}y+x^{14}y^2$ & $1+x^{16}+x^3y$ & $1+x^{24}+x^8y$ & $1+x^{21}+y$ \\
ZSZ-LP-550 & $22,5,3$ & $1+x^{10}+x^5y^2$ & $1+x^8+x^2y^3$ & $1+x^7+x^{16}y$ & $1+x+x^6y^4$ \\
ZSZ-LP-775 & $31,5,2$ & $1+x^{26}y^2+x^{18}y^3$ & $1+x^{13}y^2+x^7y^4$ & $1+xy+x^{28}y^4$ & $1+y+x^{22}y^2$ \\ \bottomrule
\end{tabular}
}
\caption{The ZSZ group parameters $(\ell_1,\ell_2,q)$ and the trinomials $a,b,c,d$ are listed for every candidate ZSZ-LP code.}
\label{tab:ZSZ-LP polynomials}
\end{table}

If $H_{\rm left}$ and $H_{\rm right}$ have full rank, then so do $H_X$ and $H_Z$ from \eqref{eq:H_X,H_Z balanced product}. To see why, observe that $\rank H_X \geq \rank (H_{\rm left}\otimes \ident_2) = 2\rank H_{\rm left}$ and likewise $\rank H_Z \geq 2\rank H_{\rm right}$. At the same time, the number of rows of $H_X$ ($H_Z$) is twice that of $H_{\rm left}$ ($H_{\rm right}$), which bounds its maximum rank. When $H_{\rm left}$ and $H_{\rm right}$ have full rank, then these inequalities coincide, leading to $\rank H_X = 2\rank H_{\rm left}$ and $\rank H_Z = 2\rank H_{\rm right}$. The usual CSS formula for the code dimension then gives us
\begin{align}
    k &= n - \rank{H_X} - \rank{H_Z}  \notag \\
    &= n - 2\rank{H_{\rm left}} - 2\rank{H_{\rm right}}  \notag \\
    &= 5\ell - 2\ell - 2\ell  \notag \\
    &= \ell = k_{\rm left} = k_{\rm right} \, .
\end{align}
In other words, when the classical parity-check matrices have full rank, then the number of logical qubits in the quantum code is the same as the number of logical bits in each classical seed code.

For the lifted and balanced products in general, there is not really a relation between the minimum distances of the classical seed codes and the quantum code like for HGP codes \cite{HGP}, and the quantum distance typically has to be numerically estimated. Nonetheless, for our simple 5-block case \eqref{eq:5-block H_X,H_Z}, we can still derive some useful upper bounds in the case where the underlying group is abelian, essentially extending our low-weight syzygy argument ($d\leq6$) from the classical codes to the quantum codes; see Appendix \ref{app:abelian-low-weight-logicals} for formal details. At first, it may seem like the low-weight syzygies that plagued the classical abelian 2BGA codes could potentially be ``stabilized away'' in the quantum code by becoming stabilizers so that they are not low-weight logical operators. Indeed, this is exactly what happens for quantum 2BGA codes like BB codes: the low-weight syzygy matrix of $H_X=(A\;\,B)$ is precisely $H_Z=(B^\transp\;\,A^\transp)$ and so is stabilized away. Unfortunately, for the 5-block form \eqref{eq:5-block H_X,H_Z}, we show that not all abelian low-weight syzygies can be stabilized away, which results in a low quantum distance. See Appendix \ref{app:abelian-low-weight-logicals} for the formal statements. This distance upper bound is our main motivation for choosing the ZSZ group over an abelian group. One may also wonder why we choose different trinomials for $H_{\rm left}$ and $H_{\rm right}$. It turns out that using the same trinomials, i.e. $a=c$ and $b=d$, results in unavoidable 4-cycles in the Tanner graphs and hence minimizes the Tanner girth to be 4; see Theorem \ref{thm:symmetric-girth-4} for the formal details.

Quite remarkably, our candidate ZSZ-LP codes in Table \ref{tab:ZSZ-LP codes intro} seem to have no apparent loss in distance compared to their classical seed codes in Table \ref{tab:classical ZSZ code params}. This feature is surprising for two reasons: \emph{(i)} appending additional block columns to a parity-check matrix like in \eqref{eq:H_X,H_Z balanced product} usually leads to a lower distance since unsatisfied checks now have additional degrees of freedom to become satisfied; and \emph{(ii)} the logical operators which resemble minimum-weight codewords of the classical codes can now be deformed by stabilizers of the same Pauli type. We note that this distance preservation is extremely special to the candidate ZSZ-LP codes; a typical random ZSZ-LP code has a reduced distance compared to its classical seed codes.

\subsection{Code selection through filtered random search}

We now describe the procedure used to select the trinomials $a,b,c,d$ in Table \ref{tab:ZSZ-LP polynomials} and equivalently $H_{\rm left}$ and $H_{\rm right}$ in Table \ref{tab:classical ZSZ code params}. We first fix a ZSZ group by selecting $\ell_1,\ell_2,q$ that satisfy $q^{\ell_2} = 1\;\text{mod $\ell_1$}$. We choose the ZSZ group parameters and classical 2BGA structure according to both theoretical and practical considerations, which can serve as heuristic guidelines for future code searches:
\begin{itemize}
    \item We favor $\ell_1\gg\ell_2$ to prevent a commutator-subgroup distance bound $d\leq 6\ell_1/\gcd(\ell_1,q-1)$ from becoming small compared with the code length; see Appendix \ref{app:ZSZ distance bounds} and Corollary \ref{cor:ZSZ commutator distance bound} for details.
    \item We choose different trinomials for $H_{\rm left}$ and $H_{\rm right}$. The symmetric choice $a=c$ and $b=d$ creates unavoidable $4$-cycles that may degrade BP decoding; see Appendix \ref{app:tanner graph cycles} and Theorem \ref{thm:symmetric-girth-4} for details.
\end{itemize}
We randomly generate four trinomials for $a,b,c,d$ and apply a cascading series of filters, as illustrated in Figure \ref{fig:code-finding-flowchart}. The filters are either enforcing a minimum girth or a minimum distance. The filters are broadly ordered from least costly to most costly in time complexity for search efficiency. For example, computing the girth of a graph via breadth-first-search is much quicker than estimating a minimum distance of a code via \textsf{QDistEvol}, and computing classical quantities is generally faster than computing quantum ones.

\begin{figure}[t]
    \centering
    \resizebox{\textwidth}{!}{%
    \begin{tikzpicture}[
      font=\sffamily,
      node distance=9mm and 7mm,
      box/.style={
        draw=blue!55,
        very thick,
        rounded corners=1.5pt,
        align=center,
        inner xsep=7pt,
        inner ysep=5pt,
        minimum height=10mm,
        fill=white
      },
      randbox/.style={box},
      girthbox/.style={box},
      distbox/.style={box},
      flowline/.style={
        very thick,
        draw=black!65
      },
      arrow/.style={
        -{Stealth[length=3mm,width=2mm]},
        very thick,
        draw=black!65
      },
    ]
    
    \def\rowsep{13mm}
    \def\colsep{7mm}
    
    % Main input
    \node[box] (fix) {Fix $(\ell_1,\ell_2,q)$ \\ ZSZ group};
    
    % Two horizontal branches, with aligned columns
    \coordinate (split) at ($(fix.east)+(4mm,0)$);
    \coordinate (branchstart) at ($(fix.east)+(12mm,0)$);
    
    \node[randbox, anchor=west] (left0) at ($(branchstart)+(0,\rowsep)$)
      {Random \\ $H_{\rm left}$};
    \node[randbox, anchor=west] (right0) at ($(branchstart)+(0,-\rowsep)$)
      {Random \\ $H_{\rm right}$};
    
    \node[girthbox, anchor=west] (left1) at ($(left0.east)+(\colsep,0)$)
      {Girth \\ filter};
    \node[girthbox, anchor=west] (right1) at (left1.west |- right0)
      {Girth \\ filter};
    
    \node[distbox, anchor=west] (left2) at ($(left1.east)+(\colsep,0)$)
      {Classical \\ distance filter};
    \node[distbox, anchor=west] (right2) at (left2.west |- right1)
      {Classical \\ distance filter};
    
    % Merge and final horizontal pipeline
    \node[box, anchor=west] (balanced)
      at ($(left2.east)!0.5!(right2.east)+(10mm,0)$)
      {Balanced \\ product};
    
    \node[box, right=of balanced] (qgirth) {Quantum \\ girth filter};
    \node[box, right=of qgirth] (qdist) {Quantum \\ distance filter};
    
    % Arrows
    \draw[flowline] (fix.east) -- (split);
    \draw[arrow] (split) |- (left0.west);
    \draw[arrow] (split) |- (right0.west);
    
    \draw[arrow] (left0.east) -- (left1.west);
    \draw[arrow] (left1.east) -- (left2.west);
    
    \draw[arrow] (right0.east) -- (right1.west);
    \draw[arrow] (right1.east) -- (right2.west);
    
    \draw[arrow] (left2.east) -- ++(4mm,0) -| (balanced.north);
    \draw[arrow] (right2.east) -- ++(4mm,0) -| (balanced.south);
    
    \draw[arrow] (balanced.east) -- (qgirth.west);
    \draw[arrow] (qgirth.east) -- (qdist.west);
    \draw[arrow] (qdist.east) -- ++(7mm,0);
    
    \end{tikzpicture}%
    }
    \caption{Filter pipeline for candidate ZSZ-LP code discovery.}
\label{fig:code-finding-flowchart}
\end{figure}

The first filters act on the classical seed codes $H_{\rm left}$ and $H_{\rm right}$. For every random left and right seed code, we compute the girth of its corresponding Tanner graph and only keep those whose Tanner girths exceed a minimum threshold, such as 6 or 8. From the survivors, we then estimate their minimum distances with \textsf{QDistEvol} and filter again by a minimum threshold. We find that sometimes, codes with a smaller girth will have much larger distances than those with the next larger girth; in these cases, we fallback to the smaller girth.
After the girth and distance filters, we then take pairwise balanced products between the left and right survivors to form a set of quantum ZSZ-LP codes. For each ZSZ-LP code in the set, we then filter by their $X$-Tanner and $Z$-Tanner girths. We also keep track of the number of minimum-length cycles at the girth for each code. We finally take the remaining survivors and estimate their $X$ and $Z$ distances with \textsf{QDistEvol}. Since this last process can be slow for large block lengths, we order the codes by their frequency of short cycles and progressively estimate the distances of codes with more and more short cycles until a code (or set of codes) exceeds a chosen minimum distance threshold. We then recompute the minimum distances of the final survivors with either an exact (SAT solver via \textsf{pySATDist}, which calls \textsf{PySAT} \cite{imms-sat18, itk-sat24}) or a large-scale estimator ($10^6$ iterations \textsf{QDistEvol}, amortized over multiple CPUs), and the winners then become our candidate ZSZ-LP codes.
For all candidate ZSZ-LP codes in Table \ref{tab:ZSZ-LP codes intro}, we find that an initial ensemble of a few thousand left and right seed codes is enough to produce candidate ZSZ-LP codes whose (exact or estimated) minimum distances are the same as those of their classical seed codes.

To benchmark the flexibility of the code discovery pipeline beyond the candidates in Table \ref{tab:ZSZ-LP codes intro}, we generate the two complementary ZSZ-LP ensembles shown in Figure \ref{fig:pareto frontiers}. Both broad searches begin by enumerating ZSZ groups with sizes between 12 and 200, corresponding to ZSZ-LP codes of lengths between 60 and 1000, guided by the theoretical no-go results in the Appendices to avoid certain low-distance or low-girth pitfalls. At the shortest lengths $n<100$, where the search is inexpensive, we exhaustively scan over all ZSZ group presentations. For each ZSZ group, we independently sample trinomial pairs for $H_{\rm left}$ and $H_{\rm right}$.

For the efficiency--length ensemble in the left panel, we bias the search towards large distance. We first sample candidate pools of left and right seed codes for every ZSZ group and estimate their classical distances with \textsf{QDistEvol}. We then rank them first by estimated classical distance and take pairwise balanced products from the best 16 left and 16 right seeds, from which we retain an ensemble of up to 256 ZSZ-LP codes per length. Importantly, we modify the code discovery pipeline to forgo the girth filters in order to maximize the potential distances. Quantum distances are then estimated with 20,000 iterations of \textsf{QDistEvol}. The near-frontier candidates are independently refined with additional iterations, and the plot displays only unique $\llbracket n,k,d\rrbracket$ points lying within 50\% of the observed frontier. Table \ref{tab:more ZSZ-LP codes} in Appendix \ref{app:more ZSZ-LP codes} contains several more ZSZ-LP codes along the observed Pareto front. 

For the cycle--length ensemble in the right panel, we reverse the optimization priority and retain the girth filters while dropping the distance filters. For each ZSZ action, we randomly sample 50,000 trinomial pairs per side, discard the classical seed codes with $4$-cycles, and retain the 1000 left and right pairs having the fewest $6$-cycles. We then screen their one million pairwise balanced products and keep only those for which both $H_X$ and $H_Z$ have Tanner girth at least 6; we note that we did not find any with girth 8. The best 1000 products are retained independently for each ZSZ group, ordered by the total number of $6$-cycles in $H_X$ and $H_Z$. In the plot, we display 30 codes per length to avoid too much clutter, sampled approximately uniformly among all associated ZSZ-LP codes. To give a sense of how much distance is sacrificed for fewer short cycles, we estimate their distances with 20,000 iterations of \textsf{QDistEvol}. From the scatter plot, we observe that the codes with the fewest short cycles have small distances; the codes with larger distances typically have more short cycles. Under BP decoding, the codes with fewer short cycles will likely have better ``waterfall'' performance close to threshold, but may behave poorly in the error-floor regime due to their small distances. Depending on the target logical error rate regime, physical error rate and BP decoder configuration, there could likely be a sweet spot between short-cycle density and minimum distance.

\begin{figure}[t]
    \centering
    \includegraphics[width=0.49\textwidth]{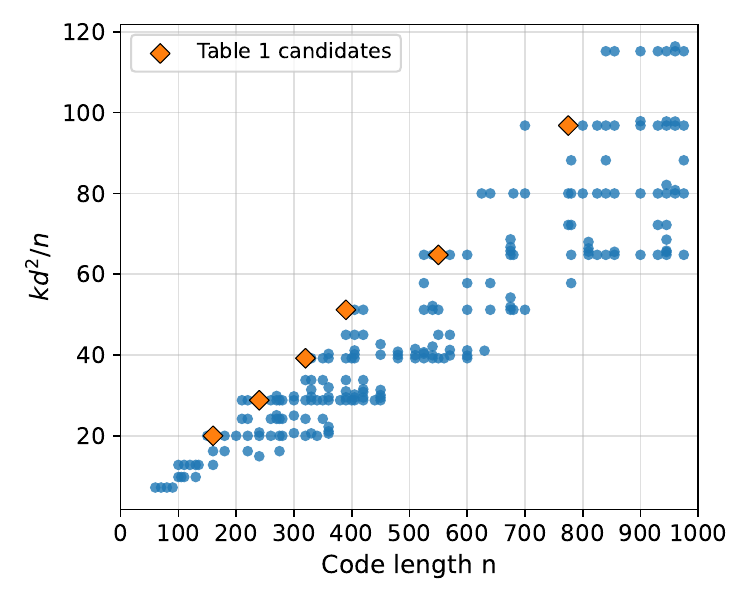} \hfill
    \includegraphics[width=0.5\textwidth]{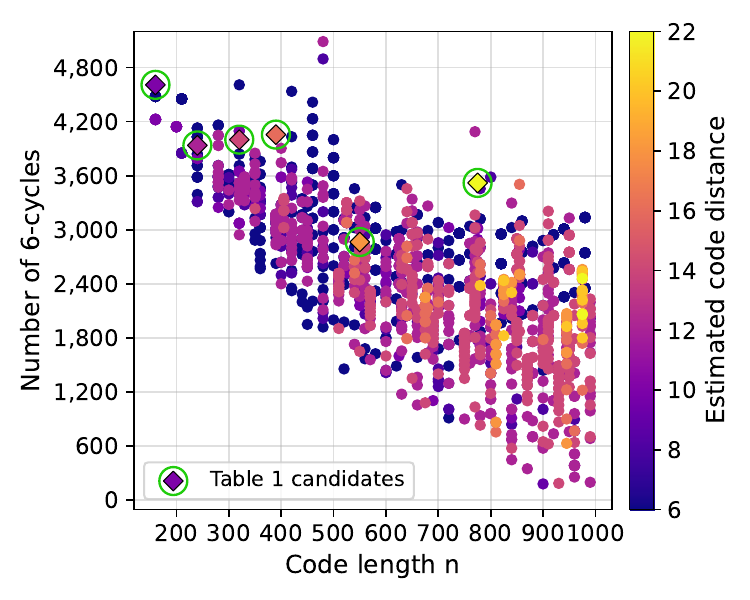}
    \caption{Pareto frontiers are plotted for two ZSZ-LP code ensembles, sampled using parts of the code discovery pipeline in Figure \ref{fig:code-finding-flowchart}. \textbf{Left:} The efficiency--length Pareto frontier. For this ensemble, we forgo the quantum girth filter and try to maximize the memory efficiency $kd^2/n$. To reduce clutter, we only plot the codes within 50\% of the observed Pareto front. All displayed code distances are estimated with at least 20,000 iterations of \textsf{QDistEvol}, with the ones near the frontier having at least 100,000 iterations. \textbf{Right:} The cycle--length Pareto frontier. For this ensemble, we forgo all distance filters and try to minimize the number of short cycles in the Tanner graphs; all displayed codes have quantum girth 6. We estimate their minimum distances with 20,000 iterations of \textsf{QDistEvol} and color each dot accordingly. For comparison, we also overlay the candidates in Table \ref{tab:ZSZ-LP codes intro}.}
    \label{fig:pareto frontiers}
\end{figure}

%%%%%%%%%%%%%%%%%%%%%%%%%%%%%%%%%%%%%%%%%%%%%%%%%%%%%

\section{Equivariant codewords and logical Pauli operators}
\label{sec:logical basis}

In this section, we describe methods to construct group-equivariant logical bases for the classical ZSZ-2BGA and quantum ZSZ-LP codes, with an emphasis on two complementary types: a \emph{minimum-weight} basis, whose operators all have weight $d$, and a \emph{symplectic} (canonical) basis built from systematic classical generator matrices. For some of the candidate codes in Table \ref{tab:ZSZ-LP codes intro}, the minimum-weight set forms a full-rank, but not symplectic, basis for the entire codespace. A non-symplectic basis means that the minimum-weight logical $\bar{X}$ and $\bar{Z}$ operators do not form anticommuting pairs; a minimum-weight logical $\bar{X}$ operator may anticommute with more than one minimum-weight logical $\bar{Z}$ operator. As such, they cannot directly map onto the $k$-qubit Pauli algebra. Alternatively, when the classical seed codes admit systematic generator matrices, we show that the lifted logical bases are automatically symplectic and full-rank, at the expense of larger operator weights; see the seventh column of Table \ref{tab:ZSZ-LP codes intro}.

\subsection{Classical left and right automorphisms}

Recall that the classical left and right seed codes enjoy free right and left actions by the ZSZ group respectively:
\begin{align}\label{eq:left and right auts}
    R[g]\,H_{\rm left} = H_{\rm left}\,R[g]^{\oplus2} \;,\qquad L[g]\,H_{\rm right} = H_{\rm right}\,L[g]^{\oplus2}
\end{align}
for all $g\in\ZSZ$. In other words, $R[g]$ and $L[g]$ are automorphisms of $H_{\rm left}$ and $H_{\rm right}$ respectively. As such, suppose we have a left codeword $\mathbf{c}_{\rm left}$ with $|\mathbf{c}|=d$ such that $H_{\rm left} \mathbf{c}_{\rm left} = 0$. Then $R[g]^{\oplus2}\mathbf{c}_{\rm left}$ is also a codeword of $H_{\rm left}$, since
\begin{align}
    H_{\rm left}\, R[g]^{\oplus2} \mathbf{c}_{\rm left} = R[g]\,H_{\rm left} \mathbf{c}_{\rm left} = 0 \, .
\end{align}
Hence, once we find a single left codeword $\mathbf{c}_{\rm left}$ with some desirable property such as minimum-weight, we can generate an entire set of size $|\ZSZ|=\ell$ of equivariant codewords (with possible duplicates) through its automorphism orbit:
\begin{align}
    \Aut(\mathbf{c}_{\rm left}) := \{ \mathbf{c} : \mathbf{c}=R[g]^{\oplus2} \mathbf{c}_{\rm left} \,,\, \forall g\in \ZSZ \}
\end{align}
An analogous statement holds for right codewords using ZSZ left actions. Since $k_{\rm left} = \ell$ when $H_{\rm left}$ has full rank, there is a possibility that this single automorphism orbit can span the entire left codespace, since $\rank\Aut(\mathbf{c}_{\rm left}) \leq \abs{\Aut(\mathbf{c}_{\rm left})} = \ell$. When this coincidence happens, we saw that the code is a \emph{one-generator} code \cite{Berman_1967, MacWilliams_1970, Sabin_1995, Ling_2001}. If the orbit is formed from a minimum-weight codeword, then we say it is a one-generator-minimal (OGM) basis. If the orbit is in reduced-row echelon form, then we say it is a one-generator-systematic (OGS) basis. In this work, we will construct both types of bases for the candidate codes; the last two columns of Table \ref{tab:classical ZSZ code params} indicate which candidates admit them.

To find OGM bases, we use \textsf{QDistEvol} to sample several different minimum-weight codewords for each seed code and compute their automorphism orbits. Remarkably, for every candidate ZSZ-LP code except ZSZ-LP-775, we find full-rank OGM bases for both its classical left/right codespaces; see the penultimate column of Table \ref{tab:classical ZSZ code params}. For ZSZ-LP-775, we only find OGM bases of rank 140 and 129 for the left and right codespaces respectively. Since every listed classical code has a full-rank parity-check matrix, the equivariant codewords actually span the entire codespace for all but the last code; i.e. they form a valid OGM generator matrix. We note that these OGM generator matrices are not systematic however. 

A one-generator systematic (OGS) generator matrix of the standard form $G=(\ident_k\;P)$ can be obtained through the following simple observation. When one of the group-algebra elements in the second block is invertible over the group algbera---say, $b$ for the left code or $d$ for the right code---its regular representation is also invertible, and the parity-check matrix factors as
\begin{align}
    H_{\rm left} &= L[b] \begin{pmatrix}\, L[b]^{-1}L[a] & \ident_\ell \,\end{pmatrix} = L[b]\,H^{\rm sys}_{\rm left}\, .
\end{align}
The row spaces of $H_{\rm left}$ and $H^{\rm sys}_{\rm left}$ are the same due to invertibility of $L[b]$, and so $H^{\rm sys}_{\rm left}$ is also a valid parity-check matrix of the same code. Since $H^{\rm sys}_{\rm left}$ is in systematic form, a systematic generator matrix readily follows:
\begin{align}
    G^{\rm sys}_{\rm left} = \begin{pmatrix}\,
        \ident_\ell & L[a]^\transp L[b^{-1}]^\transp
    \,\end{pmatrix} = \begin{pmatrix}\,
        \ident_\ell & L[\bar{a}\bar{b}^{-1}]
    \,\end{pmatrix} \, ,
\end{align}
where $\bar{a}$ denotes the antipode of $a$. An analogous calculation gives $G^{\rm sys}_{\rm right}$; see the discussion in Appendix \ref{app:systematic codewords} for more details. Unlike the OGM case, the rows of OGS generator matrices will generically not be minimum-weight. Nonetheless, systematic generator matrices are useful in two ways: \emph{(i)} they enable fast readout of the encoded message by simply examining the informational bits (the identity component of the generator matrix), and \emph{(ii)} they can lead to a symplectic logical Pauli basis in the quantum codes after the balanced product. Ultimately, an OGM versus OGS basis is a tradeoff between sparsity and addressability. For the first four candidate ZSZ-2BGA code pairs and corresponding ZSZ-LP codes, their invertible trinomials allow for OGS bases of maximum weight 18, 24, 30 and 46 (compared to their minimum distances of 10, 12, 14, and $\leq16$ respectively\footnote{Technically, our reported minimum distance for ZSZ-LP-390 is only an upper bound, but we are nonetheless quite confident of its tightness.}); see Table \ref{tab:OGS generators} for more details.

We emphasize that the classical left and right seed codes' ZSZ automorphisms \eqref{eq:left and right auts} do not generically carry over to the quantum ZSZ-LP code as a consequence of the ZSZ group being non-abelian. The reason is that the CSS check matrices \eqref{eq:H_X,H_Z balanced product} now contain both $H_{\rm left}$ and $H_{\rm right}$ and hence both left and right actions of the ZSZ group. As a consequence, only the center of the ZSZ group, i.e. the elements that commute with everything, can satisfy an analogous automorphism condition to \eqref{eq:left and right auts} for $H_X$ and $H_Z$.\footnote{Note that since the center commutes with everything in the group, it is abelian and therefore has no distinction between left and right.} Accordingly, we say that the center is the residual automorphism group left after the balanced product; see Appendix \ref{app:residual center aut} for formal details. For abelian groups like $\mathbb{Z}_\ell$ (in GB codes) and $\mathbb{Z}_{\ell_1} \times \mathbb{Z}_{\ell_2}$ (in BB codes), the entire group is its own center and so becomes the residual automorphism group. This sacrifice of automorphism group for circumventing low-weight codewords of \eqref{eq:H_left,H_right} is precisely the tradeoff mentioned in the introduction.

On the classical coding theory side, one-generator codes enable an additional practical payoff: \emph{fast encoding} \cite{Ling_2001}. The orbit therefore supplies a generator matrix whose rows all have weight exactly $d$, which is the sparsest generator matrix possible since every row of a generator matrix is itself a nonzero codeword. As a result, encoding a message amounts to $d$ shift-and-add passes over the message bits, costing $O(dn)$ binary operations in total, whereas encoding a generic (e.g. random) LDPC code with a dense generator matrix costs $O(n^2)$; preprocessing techniques can bring the generic cost close to linear \cite{RU_2001}. Furthermore, each shift is a simple translation on the $\ell_1\times\ell_2$ grid of group elements (Figure \ref{fig:ZSZ cayley graphs}), and so an encoder is architecturally similar to the quasi-cyclic LDPC encoders standardized in fifth-generation (5G) new-radio (NR) cellular communications \cite{Li_2006, Richardson_2018}: a $d$-stage pipeline of barrel shifters augmented with fixed wire permutations. Encoding remains fast even when the distance $d$ becomes large, since an approach based on the fast Fourier transform (FFT) costs $O\big(n(\log\ell_1+\ell_2)\big)$ operations regardless of $d$. One tradeoff is that these encoders are non-systematic, i.e. the message bits do not directly appear among the codeword bits, although the message can be recovered from a decoded codeword at little extra cost. The details of these encoding strategies can be found in Appendix \ref{app:fast encoding}.

\subsection{Logical Pauli bases from classical codewords}\label{sec:min-weight logical Paulis}

Recall the five-block structure of the CSS parity-check matrices \eqref{eq:5-block H_X,H_Z}. For a classical left or right codeword $\mathbf{c} \in \F^{2\ell}_2$, let
\begin{align}
    \mathbf{c} = \begin{pmatrix}\,
        \mathbf{c}_1 & \mathbf{c}_2 \,
    \end{pmatrix}
\end{align}
be its partition into the two sectors of the classical two-block parity-check matrices \eqref{eq:H_left,H_right}, where $\mathbf{c}_1, \mathbf{c}_2 \in \F^\ell_2$.
Then we can define the following $Z$-type and $X$-type Pauli operators that live in $\ker{H_X}$ and $\ker{H_Z}$ respectively:
\begin{subequations}\label{eq:codeword embedding}
\begin{align}
    \mathbf{z}_1 &= \begin{pmatrix}\,
        \mathbf{c}_{\rm left,1} & 0 & \mathbf{c}_{\rm left,2} & 0 & 0 \,
    \end{pmatrix} \\
    \mathbf{z}_2 &= \begin{pmatrix}\,
        0 & \mathbf{c}_{\rm left,1} & 0 & \mathbf{c}_{\rm left,2} & 0 \,
    \end{pmatrix} \\
    \mathbf{x}_1 &= \begin{pmatrix}\,
        \mathbf{c}_{\rm right,1} & \mathbf{c}_{\rm right,2} & 0 & 0 & 0 \,
    \end{pmatrix}  \label{eq:aut logical x1} \\
    \mathbf{x}_2 &= \begin{pmatrix}\,
        0 & 0 & \mathbf{c}_{\rm right,1} & \mathbf{c}_{\rm right,2} & 0 \,
    \end{pmatrix} \, ,  \label{eq:aut logical x2}
\end{align}
\end{subequations}
where $\mathbf{z}_1,\mathbf{z}_2 \in \ker{H_X}$ and $\mathbf{x}_1,\mathbf{x}_2 \in \ker{H_Z}$ directly follow from $\mathbf{c}_{\rm left} \in \ker{H_{\rm left}}$ and $\mathbf{c}_{\rm right} \in \ker{H_{\rm right}}$ inside \eqref{eq:H_X,H_Z balanced product}.

For the OGM codewords, having $\mathbf{z}_j \in \ker{H_X}$ does not immediately imply that it is a nontrivial logical $\bar{Z}$ operator since it could be a $Z$-stabilizer. For nontrivality, we must also ensure that $\mathbf{z}_j \notin \rs{H_Z}$, or in other words $\mathbf{z}_j \in \ker{H_X}\backslash \rs{H_Z}$; likewise we need $\mathbf{x}_j \in \ker{H_Z} \backslash \rs{H_X}$ for the $X$-type operators. Out of all equivariant logical candidates \eqref{eq:codeword embedding} for each ZSZ-LP code, inherited from their classical seed codes, we select a maximally linearly independent set of $\bar{Z}$ and $\bar{X}$ lying outside of $\rs{H_Z}$ and $\rs{H_X}$ respectively. For all ZSZ-LP codes except ZSZ-LP-775, we numerically find that the full-rank OGM bases lift to full-rank OGM logical $\bar{Z}$ and $\bar{X}$ Pauli bases respectively, which we will denote with $k\times n$ matrices $L_Z$ and $L_X$. Note that these OGM logical Pauli bases are not symplectic since they satisfy $L^{}_ZL^\transp_X = M \neq \ident$ for full-rank $M \in \F^{k\times k}_2$. To get a symplectic basis, we can choose to modify either the $\bar{X}$-logicals or the $\bar{Z}$-logicals. To modify the $\bar{Z}$ logicals, we simply multiply both sides by $M^{-1}$ on the left to get $\big( M^{-1} L^{}_Z \big) L^\transp_X = \ident$, where we then take $\tilde{L}_Z = M^{-1}L_Z$ to be our new $\bar{Z}$-logical basis, which is now symplectic with respect to $L_X$. So to obtain a symplectic logical Pauli basis from a minimum-weight one, we will generally need to sacrifice the minimum-weight property for at least one logical Pauli type.

Alternatively, a one-generator symplectic (OGS) basis can be obtained directly, at the expense of operator weight, if the classical seed codes admit systematic (OGS) generator matrices $G^{\rm sys}_{\rm left}=(\ident\;\;L[v^{\rm sys}_L])$ and $G^{\rm sys}_{\rm right}=(\ident\;\;R[v^{\rm sys}_R])$, whose rows are again equivariant codewords. With slight abuse of notation, we use the acryonym OGS to denote both one-generator-systematic and one-generator-symplectic, with the classical versus quantum setting differentiated by context. Among the candidate codes, this OGS property holds for ZSZ-LP-160 through ZSZ-LP-390. Embedding every row of $G^{\rm sys}_{\rm left}$ and $G^{\rm sys}_{\rm right}$ through the first-sector maps $\mathbf{z}_1$ and $\mathbf{x}_1$ of \eqref{eq:codeword embedding}, the resulting logical bases inherit the two-block structure:
\begin{align}\label{eq:systematic logical bases}
    L_Z = \big(\, \ident\;\; 0\;\; L[v^{\rm sys}_L]\;\; 0\;\; 0 \,\big) \;,\qquad
    L_X = \big(\, \ident\;\; R[v^{\rm sys}_R]\;\; 0\;\; 0\;\; 0 \,\big) \, .
\end{align}
These two bases overlap only on the first qubit subblock, where both restrict to the identity, and so $L^{}_ZL^\transp_X = \ident$ automatically: the systematic logical bases are symplectic as is, with conjugate pairs indexed by the group elements. Furthermore, unlike for the OGM bases, the full-rank property after the embedding \eqref{eq:codeword embedding} is guaranteed and requires no numerical verification. To see why, suppose some combination $\boldsymbol{\alpha} L^{}_Z$ were trivial, i.e. a $Z$-stabilizer in $\rs{H_Z}$. Stabilizers have vanishing inner products with every element of $\ker{H_Z}$, which contains the rows of $L_X$; however, $\big(\boldsymbol{\alpha} L^{}_Z\big)L^\transp_X = \boldsymbol{\alpha} \neq 0$, a contradiction. Hence the rows of $L_Z$ represent $\ell$ linearly independent, nontrivial logical classes, and likewise for $L_X$ upon exchanging $X\leftrightarrow Z$. Meanwhile, the invertibility of the trinomials $b$ and $d$ forces $H_{\rm left}$ and $H_{\rm right}$, and thus $H_X$ and $H_Z$, to have full rank, so the quantum code has exactly $k=\ell$ logical qubits and admits an OGS logical Pauli basis \eqref{eq:systematic logical bases}. The price to pay, echoing the sparsity-versus-utility tradeoff of the classical generators, are the logical weights $1+\wt(v^{\rm sys}_L)$ and $1+\wt(v^{\rm sys}_R)$, which can be extensive $O(\ell)$ and much larger than the minimum weight $d$; e.g. weights 30 and 28 versus $d=14$ for ZSZ-LP-320. Though we cannot rule out extremely special coincidences (we did not find any) where the OGM and OGS properties coincide.

\subsection{Surgery gadgets for equivariant logical Paulis}
\label{sec:surgery}

Code surgery is a general-purpose procedure to fault-tolerantly measure any subset of logical Pauli operators in an arbitrary quantum LDPC code \cite{Cohen_2022, Williamson_2026_gauging, cross2025improved, Swaroop_2026_adapters, he2025extractors, zheng2025high, cowtan2025fast, yuan2026parsimonious} and is a generalization of lattice surgery for topological codes \cite{Horsman_2012, landahl2014surgery}. We briefly review the algebraic formulation of code surgery before moving on to two specific constructions for the ZSZ-LP-390 candidate. At a high level, code surgery is a type of code deformation that converts some logical operators from the original code into low-weight stabilizers of the deformed code, sometimes referred to as gauging \cite{Williamson_2026_gauging} or weight reduction \cite{hastings2023weight, hsieh2025weight}. When the code is deformed, those logical operators are effectively measured and projected into the stabilizer subspace of the deformed code. Importantly, the LDPC property can be preserved during the deformation, which is sufficient for fault tolerance. We will recall the surgery diagram in the convention relevant for measuring logical $\bar{X}$ operators \cite{Yuan_2026_unified}. We write a CSS code as the chain complex
\begin{align}
    \mathcal{Q}:\quad
    S_X \xrightarrow{H_X^\transp} Q \xrightarrow{H_Z} S_Z \, ,        \label{eq:css_chain_complex_XQZ}
\end{align}
where $S_X$, $Q$, and $S_Z$ are respectively the $X$-syndrome, qubit, and $Z$-syndrome $\F_2$-vector spaces. A vector in $Q$ denotes the support of an $X$-type Pauli operator, and the $Z$-check matrix $H_Z$ computes its $Z$-syndrome. Logical $\bar{X}$ operators are elements of $\ker H_Z/\rs{H_X}$.

To perform $X$-surgery, we introduce another code called the ancilla code
\begin{align}
    \mathcal{Q}':\quad
    S'_X \xrightarrow{H_X^{\prime\transp}} Q' \xrightarrow{H'_Z} S'_Z \, ,        \label{eq:ancilla chain complex}
\end{align}
where we use primed labels for the ancilla's objects. The code deformation involves the attachment of the ancilla to the data code and is specified by two maps
\begin{align}
    \Gamma_1:S'_X\to Q \, ,
    \qquad
    \Gamma_0:Q'\to S_Z \, .
\end{align}
The map $\Gamma_1$ says which data qubits each new ancilla $X$-check touches, while $\Gamma_0^\transp$ says how each original data $Z$-check is extended onto the ancilla qubits. The merged code is a valid CSS code precisely when the following diagram commutes:
\begin{equation}\hspace{-2cm}
\begin{tikzcd}[column sep=large, row sep=large]
    \mathcal{Q}' : & S'_X & Q' & S'_Z \\
    \mathcal{Q} : & S_X & Q & S_Z
    \arrow["{H^{\prime\transp}_X}", from=1-2, to=1-3]
    \arrow["{H'_Z}", from=1-3, to=1-4]
    \arrow["{H_X^\transp}"', from=2-2, to=2-3]
    \arrow["{H_Z}"', from=2-3, to=2-4]
    \arrow["{\Gamma_1}"', from=1-2, to=2-3]
    \arrow["{\Gamma_0}"', from=1-3, to=2-4]
\end{tikzcd}
\label{eq:X surgery diagram}
\end{equation}
Equivalently, the commuting square (parallelogram) involving $S'_X,Q',Q,S_Z$ gives
\begin{align}
    H_Z\Gamma_1 = \Gamma_0 H^{\prime\transp}_X \, .        \label{eq:X surgery commuting square}
\end{align}
Ordering the merged qubit space as $Q\oplus Q'$ and the merged check spaces as $S_X\oplus S'_X$ and $S_Z\oplus S'_Z$, the parity-check matrices of the merged code have the block structure
\begin{align}\label{eq:X surgery merged checks}
    H_X^{\rm merge} = \begin{pmatrix}
        H_X & 0 \\
        \Gamma_1^\transp & H'_X
    \end{pmatrix} \, ,
    \qquad H_Z^{\rm merge} = \begin{pmatrix}
        H_Z & \Gamma_0 \\
        0 & H'_Z
    \end{pmatrix} \, .
\end{align}
The diagonal blocks retain the checks of the data and ancilla codes, while the off-diagonal blocks implement their attachment. Since the two diagonal codes are each CSS codes, the only new condition in $H_X^{\rm merge}{H_Z^{\rm merge}}^\transp=0$ is precisely the commuting-square condition \eqref{eq:X surgery commuting square}.
The ancilla part $H^{\prime\transp}_X$ determines the support of ancilla $X$-checks on ancilla qubits, and applying $\Gamma_0$ records exactly which deformed data $Z$ checks also touch those ancilla qubits. Thus \eqref{eq:X surgery commuting square} is the commutation condition for the merged CSS code.

The logical information extracted by surgery comes from products of ancilla $X$-checks whose support on the ancilla qubits cancels. Therefore the measured $X$-operator space on the data code is
\begin{align}
    \Gamma_1\!\left(\ker H^{\prime\transp}_X\right) \subseteq \ker H_Z \, .       \label{eq:X_surgery_measured_space}
\end{align}
If an element of \eqref{eq:X_surgery_measured_space} is nontrivial modulo $\rs{H_X}$, then the surgery measures the corresponding logical $\bar{X}$ operator.

The general procedure for measuring logical $\bar{X}$ operators via code surgery is as follows:
\begin{enumerate}
    \item Initialize all ancilla qubits in $\ket{0}$.
    \item Attach the ancilla by measuring all checks of the merged code. May require repeated rounds for fault tolerance.
    \item Infer the logical $\bar{X}$ measurements from products of the merged $X$-checks.
    \item Measure all ancilla qubits in the $Z$ basis to detach.
\end{enumerate}
The choice of ancilla initialization and measurement in the $Z$ basis is to ensure that the data code's checks are undisturbed (first block row of \eqref{eq:X surgery merged checks}). To ensure that the fault tolerance of the attachment step (step 2) is at least as good as the data code memory, we typically desire the merged code's minimum distances to be at least those of the data code. Note that for the $X$-type surgery illustrated by \eqref{eq:X surgery diagram}, the ancilla's $Z$-checks are not attached to the data code. As such, any relevant logical $\bar{Z}$ operator in the merged code must also be data $\bar{Z}$ operators when restricted to the data code's qubits $Q$. Hence, the merged $Z$-distance is automatically preserved. On the other hand, the ancilla's $X$-checks are attached to the data code's qubits through the attachment map $\Gamma_1$, and so the support of logical $\bar{X}$ operators in the data code can be deformed onto the ancilla code's qubits $Q'$ by appending stabilizers in $S'_X$. The ancilla code must therefore be constructed in a careful manner to ensure that these logical $\bar{X}$ deformations do not decrease the weight below $d_X$.

There are two main methods to construct the surgery ancilla in practice: \emph{(i)} a graph-based approach for measuring a small number of logical operators \cite{Williamson_2026_gauging}, and \emph{(ii)} a hypergraph-based approach for measuring a large number of logical operators \cite{Williamson_2026_gauging, zheng2025high}. For a generic LDPC code, different surgery ancillas may be required for measuring different logical operators. However, our construction of logical Pauli operators from automorphism orbits in Section \ref{sec:min-weight logical Paulis} enable a key advantage. Observe that for our equivariant logical $\bar{X}$ operators \eqref{eq:aut logical x1} and \eqref{eq:aut logical x2}, their $Z$-check neighborhoods, or local $Z$-Tanner graphs, are determined by $H_Z$ \eqref{eq:5-block H_Z}. When restricted to the supports of these logicals, the local $Z$-Tanner graphs resemble those of the classical seed code $H_{\rm right}$. Since the left-action automorphism of $H_{\rm right}$ \eqref{eq:left and right auts} acts as permutations on both sides of the parity-check matrix, this code automorphism is actually a Tanner-graph automorphism. As such, the local $Z$-neighborhoods of the equivariant $\bar{X}$ logicals are all isomorphic. Therefore, any surgery ancilla that we construct for one logical $\bar{X}$ is a valid surgery ancilla, in the sense that the diagram \eqref{eq:X surgery diagram} commutes, for all others in its classical automorphism orbit. However, this local neighborhood isomorphism for the $Z$-checks does not generically also hold for the $X$-checks, since the ZSZ right actions that preserve $X$-neighborhoods are not symmetries of $H_{\rm right}$, and so a distance-preserving surgery ancilla for one logical operator may no longer be distance-preserving for another logical operator. Note that this issue is not present when the underlying group is abelian, like in generalized bicycle codes. For a graph ancilla, this issue can also be circumvented by designing the graph to have sufficient edge expansion, i.e. Cheeger constant at least one, so $X$-check (vertex) deformations are guaranteed to not reduce weight \cite{webster2025explicit}.

We focus on ZSZ-LP-390 and its OGM logical Pauli basis and construct two different surgery gadgets to demonstrate the flexibility of code surgery for ZSZ-LP codes: one that measures a single minimum-weight logical $\bar{X}$ using a random graph ancilla \cite{Williamson_2026_gauging}, and another that measures eight logical $\bar{X}$s using a hypergraph product (HGP) ancilla \cite{zheng2025high}. See Table \ref{tab:surgery main} for a list of relevant properties for both types of surgery gadgets, and Figure \ref{fig:surgery gadgets} for a schematic illustration of the two ancilla constructions.

\begin{figure}[t]
    \centering
    \resizebox{\textwidth}{!}{%
    \begin{tikzpicture}[
      x=0.62cm,
      y=0.62cm,
      font=\footnotesize\sffamily,
      graph edge/.style={draw=ink!82,line width=0.75pt,line cap=round},
      graph vertex/.style={circle,fill=white,draw=ink,line width=0.75pt,
                           minimum size=3.8pt,inner sep=0pt},
      x support/.style={draw=xred,line width=2.25pt,line cap=round,line join=round},
      x check/.style={circle,fill=xredlight,draw=xred,line width=1.2pt,
                      minimum size=6pt,inner sep=0pt},
      z boundary/.style={draw=zblue,line width=1.95pt,line cap=round,line join=round},
      % HGP-ancilla cell styles (image-inspired lattice).
      hqubit v/.style={circle,fill=ink,draw=ink,line width=0.6pt,
                       minimum size=5.4pt,inner sep=0pt},
      hqubit c/.style={circle,fill=ink,draw=ink,line width=0.6pt,
                       minimum size=5.4pt,inner sep=0pt},
      hx cell/.style={rectangle,fill=xred,draw=xred!72!black,line width=0.7pt,
                      minimum size=6.6pt,inner sep=0pt},
      hz cell/.style={rectangle,fill=zblue,draw=zblue!72!black,line width=0.7pt,
                      minimum size=6.6pt,inner sep=0pt},
      z check/.style={rectangle,fill=zblue,draw=zblue!72!black,line width=0.7pt,
                      minimum size=5.4pt,inner sep=0pt},
      attach x/.style={draw=xred,dashed,line width=0.95pt,
                       -{Stealth[length=1.7mm,width=1.15mm]}},
      attach z/.style={draw=zblue,dashed,line width=0.95pt,
                       -{Stealth[length=1.7mm,width=1.15mm]}},
      map label/.style={fill=white,rounded corners=1pt,inner sep=1.5pt,
                        font=\footnotesize\bfseries\sffamily},
      pointer/.style={-{Stealth[length=1.2mm]},line width=0.6pt,shorten >=1pt},
    ]
    
    %=====================================================================
    % DATA CODE BLOCK (centre) : a flattened slab carrying three copies of
    % the logical X operator, stacked as three rows.
    %=====================================================================
    \path[fill=plane,draw=planeedge,line width=0.85pt,line join=round]
      (0.00,0.60) -- (10.00,0.60) -- (11.30,6.00) -- (1.30,6.00) -- cycle;
    \node[anchor=west,font=\small\bfseries\sffamily,text=ink,
          fill=plane,inner sep=1pt] at (1.55,5.55) {data code $\mathcal Q$};
    
    % --- Logical X_1 (top row) ---
    \coordinate (a1) at (1.50,4.30); \coordinate (a2) at (2.85,4.55);
    \coordinate (a3) at (4.30,4.25); \coordinate (a4) at (5.75,4.58);
    \coordinate (a5) at (7.25,4.26); \coordinate (a6) at (8.75,4.56);
    \coordinate (a7) at (10.15,4.35);
    \draw[x support] (a1)--(a2)--(a3)--(a4)--(a5)--(a6)--(a7);
    \foreach \q in {a1,a2,a3,a4,a5,a6,a7}\fill[black] (\q) circle (1.7pt);
    \node[font=\footnotesize\bfseries\sffamily,text=xred,anchor=south]
      at (4.30,4.42) {$\bar X_1$};
    
    % --- Logical X_2 (middle row) ---
    \coordinate (b1) at (1.10,3.00); \coordinate (b2) at (2.45,3.26);
    \coordinate (b3) at (3.90,2.96); \coordinate (b4) at (5.35,3.30);
    \coordinate (b5) at (6.85,2.98); \coordinate (b6) at (8.35,3.28);
    \coordinate (b7) at (9.85,3.05);
    \draw[x support] (b1)--(b2)--(b3)--(b4)--(b5)--(b6)--(b7);
    \foreach \q in {b1,b2,b3,b4,b5,b6,b7}\fill[black] (\q) circle (1.7pt);
    \node[font=\footnotesize\bfseries\sffamily,text=xred,anchor=south]
      at (3.90,3.12) {$\bar X_2$};
    
    % --- Logical X_3 (bottom row) ---
    \coordinate (c1) at (0.75,1.72); \coordinate (c2) at (2.10,1.98);
    \coordinate (c3) at (3.55,1.68); \coordinate (c4) at (5.00,2.00);
    \coordinate (c5) at (6.50,1.70); \coordinate (c6) at (8.00,1.98);
    \coordinate (c7) at (9.55,1.78);
    \draw[x support] (c1)--(c2)--(c3)--(c4)--(c5)--(c6)--(c7);
    \foreach \q in {c1,c2,c3,c4,c5,c6,c7}\fill[black] (\q) circle (1.7pt);
    \node[font=\footnotesize\bfseries\sffamily,text=xred,anchor=south]
      at (3.55,1.84) {$\bar X_3$};
    
    % qubit pointer (longer leader).
    \node[anchor=south,font=\footnotesize\sffamily,text=ink,
          fill=plane,inner sep=0.8pt] (dqnote) at (6.45,5.00) {qubit};
    \draw[-{Stealth[length=1.3mm]},draw=ink,line width=0.6pt]
      (dqnote.south) -- ($(dqnote.south)!0.88!(a4)$);
    
    % --- data Z-checks tapped by the HGP ancilla (right, held off the border) ---
    \coordinate (zAr) at (10.10,3.60); \draw[draw=ink,line width=0.7pt] (zAr)--(a7);
    \coordinate (zBr) at (9.80,2.30);  \draw[draw=ink,line width=0.7pt] (zBr)--(b7);
    \coordinate (zCr) at (9.50,1.10);  \draw[draw=ink,line width=0.7pt] (zCr)--(c7);
    \foreach \z in {zAr,zBr,zCr}\node[z check] at (\z) {};
    % --- data Z-check tapped by the graph ancilla (left, held off the border) ---
    \coordinate (zBl) at (1.15,2.40); \draw[draw=ink,line width=0.7pt] (zBl)--(b1);
    \node[z check] at (zBl) {};
    % data Z-check pointer (from below, clear of everything).
    \node[anchor=north,font=\footnotesize\sffamily,text=zblue!85!black,
          inner sep=0.8pt] (zdl) at (9.50,0.48) {$Z$-check};
    \draw[pointer,draw=zblue] (zdl.north) -- (9.50,0.98);
    
    %=====================================================================
    % GRAPH-SURGERY ANCILLA (left) : irregular sparse graph, attached to X_2.
    % Bean-shaped outline with a concave bottom so it reads as non-circular.
    %=====================================================================
    \begin{scope}[on background layer]
      \path[fill=blobfill,draw=blobedge,line width=0.8pt]
        plot[smooth cycle,tension=0.78] coordinates {
          (-5.40,3.15) (-4.95,4.45) (-4.15,5.28) (-2.75,5.55)
          (-1.10,5.00) (0.00,3.75) (0.00,2.30) (-1.00,1.60)
          (-2.30,2.00) (-3.40,1.35) (-4.55,1.60)
        };
    \end{scope}
    \coordinate (g1)  at (-4.60,3.25); \coordinate (g2)  at (-4.05,4.50);
    \coordinate (g3)  at (-2.85,4.90); \coordinate (g4)  at (-1.05,4.45);
    \coordinate (g5)  at (-0.42,2.60); \coordinate (g6)  at (-1.35,2.05);
    \coordinate (g7)  at (-3.15,1.85); \coordinate (g8)  at (-4.15,2.05);
    \coordinate (g9)  at (-3.30,3.40); \coordinate (g10) at (-2.55,2.72);
    
    % highlighted Z-check face (a 4-cycle) -- enlarged so its label fits cleanly.
    \path[fill=zbluelight!82,draw=none] (g3)--(g4)--(g5)--(g10)--cycle;
    
    % mesh edges : concave bottom (no edge spans the notch, no face diagonal).
    \foreach \e/\f in {g1/g2,g2/g3,g1/g8,g1/g9,g8/g7,g7/g9,g7/g10,
                       g6/g10,g5/g6,g9/g2,g9/g10,g3/g4,g4/g5,g5/g10,g10/g3}
      \draw[graph edge] (\e)--(\f);
    \draw[z boundary] (g3)--(g4)--(g5)--(g10)--cycle;
    \foreach \u in {g2,g4,g10}\draw[x support,line width=1.3pt] (g3)--(\u);
    
    \foreach \v in {g1,g2,g4,g5,g6,g7,g8,g9,g10}\node[graph vertex] at (\v) {};
    \node[x check] at (g3) {};
    \coordinate (qprime) at ($(g5)!0.50!(g10)$);
    \fill[zblue] (qprime) circle (1.55pt);
    % label the highlighted Z-check inside its (enlarged) face.
    \node[font=\scriptsize\sffamily,text=zblue!88!black,inner sep=0.5pt]
      at (-1.75,3.67) {$Z$-check};
    
    \node[anchor=south,font=\small\bfseries\sffamily,text=ink,
          fill=white,inner sep=1pt] at (-2.85,6.00)
      {graph ancilla $\mathcal{Q}'_{\mathrm{graph}}$};
    \node[anchor=east,font=\footnotesize\sffamily,text=xred!85!black,
          fill=white,inner sep=0.8pt] (gxn) at (-4.75,5.00) {$X$-check};
    \draw[pointer,draw=xred,shorten <=2pt] (gxn.east) -- ($(gxn.east)!0.90!(g3)$);
    \node[anchor=east,font=\footnotesize\sffamily,text=ink,
          fill=white,inner sep=0.8pt] (gen) at (-4.95,1.60) {qubit (edge)};
    \draw[pointer,draw=ink,shorten <=2pt,shorten >=4pt] (gen.east) -- (-3.65,1.95);
    
    %=====================================================================
    % HGP-SURGERY ANCILLA (right) : hypergraph-product lattice in a box.
    % Rows of X-checks / qubits map one-to-one to the three logicals.
    %=====================================================================
    % column x-positions : two "bit" columns, two "check" columns.
    \def\cA{13.80}\def\cB{14.70}\def\cC{15.95}\def\cD{16.85}
    % row y-positions : two "bit" rows (top), three "check/logical" rows.
    \def\rA{7.00}\def\rB{5.70}\def\rC{4.40}\def\rD{3.10}\def\rE{1.80}
    
    \begin{scope}[on background layer]
      \path[fill=blobfill,draw=blobedge,line width=0.8pt,rounded corners=2pt]
        (13.25,1.25) rectangle (17.45,7.55);
      % faint lattice guide lines.
      \foreach \yy in {\rA,\rB,\rC,\rD,\rE}
        \draw[draw=blobedge!35,line width=0.5pt] (\cA,\yy)--(\cD,\yy);
      \foreach \xx in {\cA,\cB,\cC,\cD}
        \draw[draw=blobedge!35,line width=0.5pt] (\xx,\rE)--(\xx,\rA);
    \end{scope}
    
    % top-left  : sector-1 qubits (green-ringed).
    \foreach \xx in {\cA,\cB}\foreach \yy in {\rA,\rB}
      \node[hqubit v] at (\xx,\yy) {};
    % top-right : Z-checks (blue).
    \foreach \xx in {\cC,\cD}\foreach \yy in {\rA,\rB}
      \node[hz cell] at (\xx,\yy) {};
    % bottom-left : X-checks (red)  -> three logical rows.
    \foreach \xx in {\cA,\cB}\foreach \yy in {\rC,\rD,\rE}
      \node[hx cell] at (\xx,\yy) {};
    % bottom-right : sector-2 qubits (black).
    \foreach \xx in {\cC,\cD}\foreach \yy in {\rC,\rD,\rE}
      \node[hqubit c] at (\xx,\yy) {};
    
    \node[anchor=south,font=\small\bfseries\sffamily,text=ink,
          fill=white,inner sep=1pt] at (15.35,7.75)
      {HGP ancilla $\mathcal{Q}'_{\mathrm{HGP}}$};
    % X-check / Z-check pointers into their quadrants.
    \node[font=\scriptsize\sffamily,text=xred,fill=white,inner sep=0.5pt]
      (hxc) at (14.10,5.12) {$X$-check};
    \draw[pointer,draw=xred] (13.92,4.92) -- (13.85,4.60);
    \node[font=\scriptsize\sffamily,text=zblue,fill=white,inner sep=0.5pt]
      (hzc) at (16.50,5.12) {$Z$-check};
    \draw[pointer,draw=zblue] (16.68,5.32) -- (16.80,5.52);
    
    %=====================================================================
    % ATTACHMENT MAPS
    %=====================================================================
    % --- graph ancilla -> logical X_2 (dashed maps kept clear of the solid edges) ---
    \draw[attach x] (g3) .. controls (-0.80,5.00) and (0.35,3.45) .. (b1);
    \node[map label,text=xred] at (-0.10,4.28) {$\Gamma_1$};
    \draw[attach z] (qprime) .. controls (-0.30,2.30) and (0.45,2.30) .. (zBl);
    \node[map label,text=zblue] at (0.15,1.95) {$\Gamma_0$};
    
    % --- HGP ancilla -> all three logicals ---
    % Gamma_1 : ancilla X-checks (red rows) -> data logical qubits.
    \draw[attach x] (\cA,\rC) -- (a7);
    \draw[attach x] (\cA,\rD) -- (b7);
    \draw[attach x] (\cA,\rE) -- (c7);
    \node[map label,text=xred] at (12.40,4.90) {$\Gamma_1$};
    % Gamma_0 : ancilla qubits (black rows) -> data Z-checks.
    \draw[attach z] (\cC,\rC) .. controls (14.80,3.70) and (12.60,3.62) .. (zAr);
    \draw[attach z] (\cC,\rD) .. controls (14.80,2.45) and (12.40,2.30) .. (zBr);
    \draw[attach z] (\cC,\rE) .. controls (14.60,1.30) and (12.20,1.12) .. (zCr);
    \node[map label,text=zblue] at (11.75,0.80) {$\Gamma_0$};
    \end{tikzpicture}
    }
    \caption{Schematic of the two $X$-surgery gadgets constructed for the ZSZ-LP-390 data code $\mathcal{Q}$ (centre), drawn carrying three copies $\bar{X}_1,\bar{X}_2,\bar{X}_3$ of the logical $\bar{X}$ operator. For a graph ancilla (left), the vertices, edges, and faces (cycles) are respectively the ancilla $X$-checks, qubits, and $Z$-checks; here $\mathcal{Q}'_{\mathrm{graph}}$ measures a single logical (attached to $\bar{X}_2$). For a hypergraph-product ancilla (right), the lattice of ancilla $X$-checks (red squares), $Z$-checks (blue squares), and qubits (black dots) is arranged so that each of its three rows of $X$-checks and qubits measures a different logical, allowing $\mathcal{Q}'_{\mathrm{HGP}}$ to measure all three in parallel. In both gadgets, the attachment maps $\Gamma_1$ (red, dashed) and $\Gamma_0$ (blue, dashed) send ancilla $X$-checks to data qubits and ancilla qubits to data $Z$-checks, realizing the off-diagonal blocks of the merged check matrices \eqref{eq:X surgery merged checks}.}
    \label{fig:surgery gadgets}
\end{figure}

\subsubsection{Random graph ancilla for single logical measurement}

We choose the first logical $\bar{X}$ operator from the OGM basis of ZSZ-LP-390 and construct a graph-based surgery gadget for its measurement. We follow the randomized graph approach of \cite{Williamson_2026_gauging}, augmented by the edge-scoring subroutine of \cite{zheng2025high}. As a reminder, for a graph-based surgery ancilla, the vertices, edges and faces (cycles) correspond to $X$-checks, qubits and $Z$-checks respectively. The procedure starts by examining the local $Z$-Tanner graph of the logical $\bar{X}$ operator and assigns an ancilla vertex/$X$-check to every data qubit in the logical's support. Then, for each $Z$-check with nonzero even overlap with the logical support, it pairs up the overlap vertices and adds one edge per pair. Cycles are assigned as ancilla $Z$-checks. After this initial graph construction, we iterate the following routine:
\begin{enumerate}
    \item Estimate the merged code's $X$-distance using \textsf{QDistEvol}, and return a list of numerically found merged $\bar{X}$ logicals with weight less than a chosen target, which we set as $d_X$. Call these the ``bad'' logicals.
    \item Generate a list of candidate edges to add to the graph, and score each candidate edge by the number of bad logicals it eliminates through its added cycles/$Z$-checks. Elimination could mean either having a new $Z$-check anticommute with a bad logical or converting a bad logical to a stabilizer.
    \item Add the candidate edge with the highest score in addition to its extra corresponding checks.
\end{enumerate}
For our specific ZSZ-LP-390 logical $\bar{X}$, we find that only two additional edges are needed to ensure a distance-preserving merged code.

One can combine this single-logical surgery gadget to other graph-based surgery gadgets with ``bridge'' qubits \cite{cross2025improved, Swaroop_2026_adapters} to measure products of this logical $\bar{X}$ with other logical Paulis, either from the same code block or from different code blocks (which can also be a different code), to generate logical entanglement for teleportation or Pauli-based computation \cite{Litinski_2019}.

\begin{table}[t]
\centering\renewcommand{\arraystretch}{1.1}
\resizebox{\textwidth}{!}{
\begin{tabular}{cccccccccc}
\toprule
\textbf{Data code} & \textbf{$|\bar{X}_{\rm meas}|$} & \textbf{\makecell{Surgery\\ ancilla}} & \textbf{Merged code} & \textbf{$m'_X$} & \textbf{$m'_Z$} & $\Delta_C$ & $\Delta_Q$ & $\wt(\Gamma_0)$ & $\wt(\Gamma_1)$ \\ \midrule
$\llbracket 390,78,\leq16 \rrbracket$ & 1 & graph & $\llbracket 416,77,\leq16 \rrbracket$ & 16 & 11 & 10 & 12 & 1 & 1 \\
$\llbracket 390,78,\leq16 \rrbracket$ & 8 & HGP & $\llbracket 1622,70,\leq16 \rrbracket$ & 640 & 576 & 15 & 12 & 6 & 2 \\ \bottomrule
\end{tabular}
}
\caption{Some properties of two types of surgery gadgets for the ZSZ-LP-390 code are reported such as the number of measured $\bar{X}$ logicals $|\bar{X}_{\rm meas}|$, the ancilla construction method, the number of extra $X$-checks ($m'_X$) and $Z$-checks ($m'_Z$), the maximum check ($\Delta_C$) and qubit ($\Delta_Q$) degrees, and the maximum row and column weight of the attachment maps $\Gamma_0, \Gamma_1$. Note that for the HGP surgery gadget, we have ignored gauge logical qubits for both $k$ and $d$.}
\label{tab:surgery main}
\end{table}

\subsubsection{Hypergraph product ancilla for multiple logical measurements}

Recall that all equivariant logical $\bar{X}$ operators have the same weight and isomorphic $Z$-check neighborhoods. As such, they satisfy the ``generalized parallel pattern'' required for HGP code surgery \cite{zheng2025high}, where the surgery ancilla is a HGP code \cite{HGP}. To benchmark the potential of HGP surgery, we will choose eight logical $\bar{X}$ operators from the OGM basis and construct a HGP surgery ancilla for their parallel measurement. We first find a full-rank even partition of the logical $\bar{X}$ operators into the first and second sectors (labeled by subscript $j=1,2$ for $\mathbf{x}_j$ in \eqref{eq:codeword embedding}). We then select eight logical $\bar{X}$ operators whose maximum data-qubit overlap and shared $Z$-check overlap (congestion) are 2. For the HGP ancilla, we choose the convenient convention such that HGP($H_1,H_2$) has CSS check matrices $H'_X = (H^\transp_1\otimes\ident \;|\; \ident\otimes H^\transp_2)$ and $H'_Z = (\ident\otimes H_2 \;|\; H_1\otimes\ident)$. The measured-$X$ space \eqref{eq:X_surgery_measured_space} then involves
\begin{align}
    \ker H^{\prime\transp}_X = \ker\begin{pmatrix}
        H_1 \otimes \ident \\
        \ident \otimes H_2
    \end{pmatrix} = \ker H_1 \otimes \ker H_2 \, ,
\end{align}
which is the codespace of the tensor-product code $\mathcal{C}_1 \otimes \mathcal{C}_2$, where $\mathcal{C}_1$ and $\mathcal{C}_2$ are the individual codespaces of $H_1$ and $H_2$ respectively. We choose $H_1$ to be the local code derived from the local $Z$-check neighborhood of a single logical $\bar{X}$ (recall that they are all isomorphic), where qubits are mapped to bits and $Z$-checks to parity checks. For our chosen logical operator, this local code $H_1$ has $k_1=1$ logical bit; i.e. there are no other (unintended) logicals in the support of our measured logical. For $H_2$, since we are measuring 8 logical $\bar{X}$ operators, we want its code dimension to be at least $k_2\geq8$. Furthermore, since ZSZ-LP-390 has an $X$-distance of 16, we want $H_2$ to have a minimum distance of 16 \cite{zheng2025high}. We choose a quasi-cyclic LDPC matrix for $H_2$ with protomatrix
\begin{align}
    H^{\rm proto}_2 = \begin{pmatrix}
        x^2 & 0 & x^6 & x^7 & x^5 \\
        0 & 1 & 1 & x^7 & 1 \\
        1 & x^2 & x^3 & 0 & x^6 \\
        1 & 1 & x^3 & x & 0
    \end{pmatrix} \, ,
\end{align}
lift size 8, and code parameters $[40,8,16]$. Note that the HGP ancilla by itself may contain spurious/gauge logical qubits, since
\begin{align}
    k_{\rm HGP} &= \dim\ker H_1 \cdot \dim\ker H^\transp_2 + \dim\ker H^\transp_1 \cdot \dim\ker H_2
\end{align}
may not be zero due to linear dependencies among the rows of $H_1$ leading to a nonzero $\dim\ker H^\transp_1$. It is typical to just ignore these gauge logical qubits in the computation of $k$ and $d$ for the merged code. We find that trimming some linearly dependent rows from $H_1$ leads to an HGP ancilla with fewer physical (and gauge logical) qubits at the cost of a denser attachment map $\Gamma_0$. We trimmed 6 linearly dependent rows from $H_1$ at the cost of $\wt\Gamma_0=6$, which causes the maximum $Z$-check weight in the merged code to increase to $9+6=15$. Since the congestion of our eight logicals is 2, we also have $\wt\Gamma_1=2$. To see how the attachment maps work for an HGP ancilla, it is useful to arrange the ancilla's $X$-checks in a rectangular layout where columns correspond to codewords of $H_1$ and rows to codewords of $H_2$ inside $\ker H^{\prime\transp}_X$. In this layout, $\Gamma_1$ ($\Gamma_0$) attaches each measured logical's support (neighboring $Z$-checks) to the rows corresponding to the informational bits of $H_2$, i.e. its non-pivot columns in reduced-row echelon form.

We have used this HGP ancilla to measure eight logical $\bar{X}$ operators in a single ZSZ-LP-390 code block, but we note that the same HGP ancilla can straightforwardly be used to measure eight logical $\bar{X}$ operators in different code blocks (with potentially smaller congestion too) by having the attachment maps $\Gamma_0,\Gamma_1$ act on different code blocks. A larger HGP ancilla can also be used if we want to measure more than just eight logical operators, at the potential cost of increased congestion. In the scenario where we want to measure pairwise (or larger) products of these same logical operators, there is also a particularly elegant way to modify the HGP ancilla based on classical code augmentation \cite{Xu_2025_fast}. Recall that the attachment map $\Gamma_1$ is attached to the ancilla's $X$-checks that correspond to the informational bits of $H_2$. If, for example, we want to measure the pairwise product $\bar{X}_1\bar{X}_2$ rather than the individual operators, we simply add/augment an additional parity check (row) to $H_2$ with support on the first two informational bits (columns). In general, measuring any product-partition of the logical operators simply corresponds to augmenting additional checks to $H_2$.

Lastly, we comment that the above HGP ancilla incurs significantly more overhead than, say, simply using multiple copies of the graph-based single surgery gadget or using a random hypergraph approach \cite{zheng2025high}. Nonetheless, we present it here because we see it as the most didactic high-rate construction, and the quasi-cyclic nature of $H_2$ may allow for easier implementation in certain hardware such as reconfigurable atom arrays.

%%%%%%%%%%%%%%%%%%%%%%%%%%%%%%%%%%%%%%%%%%%%%%%%%%%%%

\section{$ZX$-dualities from automorphism-fold symmetries}
\label{sec:ZX dualities}

So far, we have considered independent seed codes $H_{\rm left}$ and $H_{\rm right}$ to construct ZSZ-LP codes, which we have found was a useful choice for achieving high distance as well as minimizing short cycles in the Tanner graphs. But one may wonder what advantages we can get if we pick specially structured $H_{\rm right}$ for a given $H_{\rm left}$. In this section, we briefly touch on a special case where a carefully chosen $H_{\rm right}$ leads to isomorphisms between the $X$- and $Z$-checks, which can enable certain transversal gates. The cost is that these ``fold-symmetric'' ZSZ-LP codes will typically have smaller distances and more short Tanner-graph cycles, echoing our general theme of a performance--utility tradeoff.

A $ZX$-duality is a symmetry of a CSS code that exchanges its $X$- and $Z$-stabilizer groups \cite{Breuckmann2024fold}. More concretely, let $\tau$ be a permutation of the physical qubits, which we also regard as its permutation matrix. In terms of CSS check matrices, the most general defining property is
\begin{align}\label{eq:ZX duality row spaces}
    \rs(H_X\tau)=\rs(H_Z)\; ,\qquad
    \rs(H_Z\tau)=\rs(H_X) \, .
\end{align}
Thus, a $ZX$-duality is essentially an ordinary code automorphism that also swaps the $X$ and $Z$ sectors. In this work we focus on the more special setting where $\tau$ maps the chosen $X$-checks onto the chosen $Z$-checks one by one, rather than merely mapping their row spaces. In other words, the qubit permutation can be accompanied by permutations of the check labels that transform $H_X$ directly into $H_Z$; i.e. a Tanner graph isomorphism. Some paradigmatic examples are \emph{(i)} a 2D color code that is ``self-dual'' with $H_X=H_Z$ \cite{Bombin_2006}, \emph{(ii)} a 2D square surface code with a diagonal ``fold'' symmetry \cite{Moussa_2016}, and \emph{(iii)} ``square'' hypergraph product (HGP) codes with a diagonal fold symmetry \cite{Quintavalle_2023_fold}. Our ZSZ-LP codes share the most similarity with the square HGP codes due to the product structure \eqref{eq:H_X,H_Z balanced product}; a fold symmetry essentially corresponds to swapping the sides of the tensor product $\otimes$. However, unlike for HGP codes where the product structure is over binary numbers (and hence qubits), our product structure is over group-algebra polynomials. A particularly insightful way to package the ZSZ-LP structure is to imagine an ordinary square HGP code formed from the product of two $1\times 2$ parity-check matrices, but now all local Hilbert spaces have a group-algebra-valued basis rather than a two-level one. From this perspective, the fold symmetry of square HGP codes \cite{Quintavalle_2023_fold} can extend to our balanced product codes provided that we can also map these local group-algebra elements (polynomials) onto one another.

To see this algebraic fold, recall the five-block matrices \eqref{eq:5-block H_X,H_Z}, which we write again as
\begin{align}\label{eq:5-block recap}
    H_X &= \begin{pmatrix}
        A & 0 & B & 0 & C^\transp \\
        0 & A & 0 & B & D^\transp
    \end{pmatrix} \; ,\qquad
    H_Z = \begin{pmatrix}
        C & D & 0 & 0 & A^\transp \\
        0 & 0 & C & D & B^\transp
    \end{pmatrix} \, ,
\end{align}
where $A=L[a]$, $B=L[b]$ are left-regular matrices and $C=R[c]$, $D=R[d]$ are right-regular matrices. To get a $H_X \leftrightarrow H_Z$ isomorphism, we can first attempt to mirror their block structures. Suppose we exchange the second and third block columns, denoted by $\Pi_{23}$, which puts the zero blocks in the correct locations:
\begin{align}\label{eq:5-block fold}
    H_Z \Pi_{23} = \begin{pmatrix}
        C & 0 & D & 0 & A^\transp \\
        0 & C & 0 & D & B^\transp
    \end{pmatrix} \; ,\qquad
    H_X \Pi_{23} &= \begin{pmatrix}
        A & B & 0 & 0 & C^\transp \\
        0 & 0 & A & B & D^\transp
    \end{pmatrix} \, .
\end{align}
Now, comparing \eqref{eq:5-block fold} with \eqref{eq:5-block recap}, we observe that what remains is to provide a means to map $A\leftrightarrow C$ and $B\leftrightarrow D$ as well as their transposes. Recall that $A,B$ and $C,D$ are left-regular and right-regular matrix representations over a group algebra $\F_2[G]$.
The remaining obstacle is thus exchanging left and right multiplication. Inversion does exactly this exchange since it reverses the order of a group product. More generally, we may further compose inversion with any group automorphism $\phi:G\rightarrow G$ \cite{Eberhardt_2025_fold} and define
\begin{align}\label{eq:theta automorphism fold}
    \theta:G\rightarrow G\; ,\qquad
    \theta(g)=\phi(g^{-1}) \, .
\end{align}
Due to the embedded inversion, the map $\theta$ reverses the order of products, $\theta(gh)=\theta(h)\theta(g)$, and hence converts left multiplication into right multiplication. In other words, it is the antiautomorphism version of $\phi$. Let $P_\theta$ be the permutation matrix induced by $\theta$ on the group-indexed basis of the regular representation, and let $p \in \F_2[G]$ be a group-algebra element. Then
\begin{align}\label{eq:left right theta conjugation}
    P^{}_\theta L[p]P_\theta^{-1}=R[\phi(\bar p)] \, ,
\end{align}
where $\bar{p}$ denotes the antipode map that inverts all group elements. We consequently choose the right seed polynomials to be
\begin{align}\label{eq:automorphism fold c d}
    c=\phi(\bar a)\; ,\qquad d=\phi(\bar b) \, ,
\end{align}
which results in $A\leftrightarrow C$ and $B\leftrightarrow D$. It remains to show that their transposes also map onto one another.

The fifth block column, containing the transposed matrices, correctly exchanges $A^\transp \leftrightarrow C^\transp$ and $B^\transp \leftrightarrow D^\transp$ provided
\begin{align}\label{eq:phi square seed condition}
    \phi^2(a)=a\; ,\qquad \phi^2(b)=b \, .
\end{align}
An involutive automorphism $\phi^2=1$ is the simplest choice, for which \eqref{eq:phi square seed condition} holds automatically and $\theta^2=1$. The overall $ZX$-duality, or Tanner graph isomorphism, then takes the compact form
\begin{align}\label{eq:ZSZ LP automorphism ZX duality}
    H_Z = P^{\oplus 2}_\theta H^{}_X P^{\oplus 5}_\theta \Pi^{}_{23} \, .
\end{align}
We see that the physical action $P^{\oplus 5}_\theta \Pi^{}_{23}$ comprises of group (anti)automorphism $P^{\oplus 5}_\theta$ within each data subblock and an exchange of data subblocks $\Pi^{}_{23}$. As such, we call the physical transformation an automorphism-fold, where $P^{\oplus 5}_\theta$ represents the (group) ``automorphism'' part and $\Pi^{}_{23}$ the ``fold'' part; see Figure \ref{fig:ZSZ-LP automorphism-fold} for an illustration where the data and check qubits are arranged according to the balanced product structure \eqref{eq:H_X,H_Z balanced product}. Because $\theta$ and $\Pi_{23}$ are both involutions and mutually commute, the resulting physical permutation $\tau=P^{\oplus 5}_\theta\Pi^{}_{23}$ also satisfies $\tau^2=\ident$ and hence resembles an algebraic analogue of a fold.

\begin{figure}[t]
    \centering
    \begin{tikzpicture}[
      font=\sffamily\small,
      datablock/.style={
        draw=black!55, very thick, rounded corners=1pt, fill=black!5,
        minimum width=1.8cm, minimum height=1.2cm
      },
      xcheck/.style={
        draw=red!70, very thick, rounded corners=1pt, fill=red!8,
        minimum width=1.8cm, minimum height=1.2cm
      },
      zcheck/.style={
        draw=blue!70, very thick, rounded corners=1pt, fill=blue!8,
        minimum width=1.8cm, minimum height=1.2cm
      },
      foldaxis/.style={
        draw=black!65, line width=1.25pt, dashed,
        dash pattern=on 5pt off 3pt
      },
      swap/.style={
        {Stealth[length=2.4mm,width=2mm]}-{Stealth[length=2.4mm,width=2mm]},
        draw=green, line width=1.2pt,
        preaction={draw=white, line width=3.2pt}
      },
      automorphism/.style={
        -{Stealth[length=1.9mm,width=1.6mm]},
        draw=green, line width=0.9pt
      },
      xautomorphism/.style={
        -{Stealth[length=1.9mm,width=1.6mm]},
        draw=green, line width=0.9pt
      },
      zautomorphism/.style={
        -{Stealth[length=1.9mm,width=1.6mm]},
        draw=green, line width=0.9pt
      },
    ]

    % A compact 3-by-3 array makes the check-side fold explicit.
    \node[datablock] (q1) at (0,1.73) {};
    \node[datablock] (q2) at (2.6,1.73) {};
    \node[datablock] (q3) at (0,0) {};
    \node[datablock] (q4) at (2.6,0) {};
    \node[datablock] (q5) at (5.2,-1.73) {};
    \node[xcheck] (s1) at (0,-1.73) {};
    \node[xcheck] (s2) at (2.6,-1.73) {};
    \node[zcheck] (t1) at (5.2,1.73) {};
    \node[zcheck] (t2) at (5.2,0) {};

    % This slope follows the corner-to-corner diagonals of Q_1, Q_4, and Q_5.
    \draw[foldaxis]
      ($(q1.north west)+(-0.18,0.12)$) --
      ($(q5.south east)+(0.18,-0.12)$);

    % The block exchange runs along the opposite diagonal and crosses the fold.
    \draw[swap]
      ($(q2.south west)+(-0.07,-0.05)$)
      to[bend left=18]
      node[pos=0.18, below right=3pt, fill=white, text=green,
           inner sep=0.8pt] {$\Pi_{23}$}
      ($(q3.north east)+(0.07,0.05)$);

    % The two check-block pairs are exchanged across the same fold.
    \draw[swap]
      ($(s1.north east)+(0.07,0.05)$)
      to[bend right=18]
      ($(t1.south west)+(-0.07,-0.05)$);
    \draw[swap]
      ($(s2.north east)+(0.07,0.05)$)
      to[bend right=18]
      ($(t2.south west)+(-0.07,-0.05)$);

    % The same group (anti)automorphism acts within every data subblock.
    \foreach \q/\lab in {q1/Q_1,q2/Q_2,q3/Q_3,q4/Q_4,q5/Q_5}{
      \coordinate (aut\q) at ($(\q.center)+(0.40,0.22)$);
      \draw[automorphism]
        ($(aut\q)+(125:0.32)$)
        arc[start angle=125,end angle=440,radius=0.32];
      \node[font=\sffamily\scriptsize, fill=black!5,
            text=green, inner sep=0.8pt]
        at (aut\q) {$P_\theta$};
      \node[font=\sffamily\bfseries, anchor=south west, fill=black!5,
            inner sep=1.2pt]
        at ($(\q.south west)+(0.13,0.11)$) {$\lab$};
    }

    \foreach \s/\lab in {s1/C^X_1,s2/C^X_2}{
      \coordinate (aut\s) at ($(\s.center)+(0.40,0.22)$);
      \draw[xautomorphism]
        ($(aut\s)+(125:0.32)$)
        arc[start angle=125,end angle=440,radius=0.32];
      \node[font=\sffamily\scriptsize, fill=red!8,
            text=green, inner sep=0.8pt]
        at (aut\s) {$P_\theta$};
      \node[font=\sffamily\bfseries, anchor=south west, fill=red!8,
            inner sep=1.2pt]
        at ($(\s.south west)+(0.13,0.11)$) {$\lab$};
    }

    \foreach \t/\lab in {t1/C^Z_1,t2/C^Z_2}{
      \coordinate (aut\t) at ($(\t.center)+(0.40,0.22)$);
      \draw[zautomorphism]
        ($(aut\t)+(125:0.32)$)
        arc[start angle=125,end angle=440,radius=0.32];
      \node[font=\sffamily\scriptsize, fill=blue!8,
            text=green, inner sep=0.8pt]
        at (aut\t) {$P_\theta$};
      \node[font=\sffamily\bfseries, anchor=south west, fill=blue!8,
            inner sep=1.2pt]
        at ($(\t.south west)+(0.13,0.11)$) {$\lab$};
    }
    \end{tikzpicture}
    \caption{Schematic of the automorphism-fold (green) on the data and check subblocks. The data subblocks $Q_i$ are gray, the $X$-check subblocks $C^X_1,C^X_2$ are red, and the $Z$-check subblocks $C^Z_1,C^Z_2$ are blue. The dashed diagonal line represents the fold axis. The green double-sided arrows denote the interblock exchanges $Q_2\leftrightarrow Q_3$, $C^X_1\leftrightarrow C^Z_1$, and $C^X_2\leftrightarrow C^Z_2$ across the diagonal, mimicking a fold. The circular arrow inside every subblock denotes the intrablock rearrangement according to the group (anti)automorphism $P_\theta$.}
    \label{fig:ZSZ-LP automorphism-fold}
\end{figure}

The simplest group automorphism satisfying \eqref{eq:phi square seed condition} is the trivial automorphism $\phi=1$, which would be closest direct analogue to the geometric fold in HGP codes. It unfortunately turns out that the trivial automorphism introduces ``too much symmetry'' into the code, resulting in weight-3 logical operators and thus $d\leq3$; see Appendix \ref{app:fold-symmetric-low-distance} for formal details. We thus move forward to nontrivial, involutive automorphisms for our ZSZ groups. Since ZSZ is just a two-generator group ($x$ and $y$) involving one semidirect product, finding these automorphisms is quite easy. Indeed, $\phi$ is completely determined by the two images $g_x=\phi(x)$ and $g_y=\phi(y)$. With multiplication given by the push-through relations \eqref{eq:push-through relations}, we can enumerate the $|\ZSZ|^2=\ell^2=\ell^2_1\ell^2_2$ possible image pairs and retain those satisfying the group-presentation relations $g_x^{\ell_1}=g_y^{\ell_2}=1$ and $g_yg_xg_y^{-1}=g_x^q$ \eqref{eq:ZSZ presentation}. Finally, because $x$ and $y$ generate the group, involutivity only needs to be checked on them individually: $\phi^2(x)=x$ and $\phi^2(y)=y$. This direct enumeration is relatively inexpensive for the group sizes we consider.

We present three examples of fold-symmetric ZSZ-LP codes in Table \ref{tab:ZSZ-LP ZX dualities}. We reuse the same ZSZ groups as those in Table \ref{tab:ZSZ-LP polynomials} but alter the trinomials. To find these code instances, we first fix the two trinomials $a,b$ of $H_{\rm left}$ and then exhaustively search through all involutive ZSZ group automorphisms for $\phi$. For the first two examples, we could find fold-symmetric ZSZ-LP codes with the same $H_{\rm left}$ and minimum distances as those of the candidates in Table \ref{tab:ZSZ-LP codes intro}. For the third example, the $H_{\rm left}$ of ZSZ-LP-390 produced fold-symmetric instances with poor distances, so we used a different $H_{\rm left}$ with $d_{\rm left}=16$ instead.

\begin{table}[t]
\centering\renewcommand{\arraystretch}{1.3}
\begin{tabular}{ccccc}
\toprule
$\llbracket n,k,d\rrbracket$ & $\ell_1,\ell_2,q$ & $a$ & $b$ & \textbf{Group aut.} $\phi$ \\ \midrule
$\llbracket 160,32,10\rrbracket$ & $16,2,9$ & $1+x^{13}+x^{14}y$ & $1+x^3y+x^{13}y$ & $\phi(x)=xy \;,\; \phi(y)=y$ \\
$\llbracket 240,48,12\rrbracket$ & $12,4,7$ & $1+x^8y+x^7y^2$ & $1+x^2+x^5y$ & $\phi(x)=x^{11} \;,\; \phi(y)=y^3$ \\
$\llbracket 390,78,\leq15\rrbracket$ & $26,3,3$ & $1+x^{21}y+x^5y^2$ & $1+y+xy$ & $\phi(x)=x^{-1} \;,\; \phi(y)=y$ \\ \bottomrule
\end{tabular}
\caption{Some fold-symmetric ZSZ-LP codes, specified by their left trinomials $a,b$ and group automorphism $\phi$; the right trinomials are determined by $c=\phi(\bar{a})$ and $d=\phi(\bar{b})$. For the first two codes, we borrow the left trinomials from those of the candidates in Table \ref{tab:ZSZ-LP polynomials}, while the $n=390$ entry uses a different left trinomial. Exact and upper bounds on code distances are computed using the \textsf{pySATDist} and \textsf{QDistEvol} ($10^6$ iterations) methods respectively \cite{webster2026distance}.}
\label{tab:ZSZ-LP ZX dualities}
\end{table}

\subsection{Hadamard-type gates}

The first gate arising from a $ZX$-duality is a Hadamard-type fold-transversal gate \cite{Breuckmann2024fold}. Apply a physical Hadamard to every data qubit and then exchange every two-element orbit of the fold:
\begin{align}\label{eq:physical fold Hadamard}
    \mathcal{H}_\tau
    =\left(\prod_{i<\tau(i)}\mathrm{SWAP}_{i,\tau(i)}\right)
      \left(\bigotimes_{i=1}^n H_i\right) \, .
\end{align}
The transversal Hadamard exchanges Pauli $X$ and $Z$, while the depth-1 SWAP layer moves the support of each operator through the fold. Since the physical permutation $\tau$ maps every $X$-check to a $Z$-check and vice versa, the gate \eqref{eq:physical fold Hadamard} preserves the stabilizer group and is therefore an encoded Clifford gate. In practice, the qubit permutation can be implemented by relabeling or physical motion.

Assume that the code admits the OGS bases of \eqref{eq:systematic logical bases}. Recall that both $L_X$ and $L_Z$ restrict to the identity on $Q_1$, so their conjugate logical pairs are naturally indexed by group elements $g\in G$. The fold acts on this first subblock by $g\mapsto\theta(g)$, and the remaining pieces of the OGS operators follow modulo stabilizers. Hence
\begin{align}\label{eq:logical fold Hadamard action}
    \mathcal{H}_\tau:\qquad
    \bar{X}_g\longmapsto\bar{Z}_{\theta(g)}\; ,\qquad
    \bar{Z}_g\longmapsto\bar{X}_{\theta(g)} \, .
\end{align}
Equivalently, this is a logical Hadamard on every encoded qubit followed by the logical permutation $g\mapsto\theta(g)$. A fixed point of $\theta$ receives a logical Hadamard, whereas a two-element orbit $\{g,\theta(g)\}$ receives Hadamards on both logical qubits together with a logical SWAP. For the three examples in Table \ref{tab:ZSZ-LP ZX dualities}, the logical folds have respectively $(4,14)$, $(36,6)$, and $(26,26)$ fixed-point/two-cycle counts. Thus their logical gates comprise $k=32$, 48, and 78 parallel Hadamards together with 14, 6, and 26 logical SWAPs, respectively. From a computational standpoint, a fold-symmetric ZSZ-LP code could hence serve as a low-overhead code block for constant-rate Hadamard gates.

\subsection{Phase-type gates}

A $ZX$-duality can also give rise to a diagonal, phase-type fold-transversal gate. Let $F=\{i:\tau(i)=i\}$ be the physical qubits fixed by the fold and partition it into $F_+$ and $F_-$. The physical circuit has the form \cite{Breuckmann2024fold}
\begin{align}\label{eq:physical fold phase}
    \mathcal{S}_\tau
    =\left(\bigotimes_{i\in F_+}S_i\right)
     \left(\bigotimes_{i\in F_-}S_i^\dagger\right)
     \left(\prod_{i<\tau(i)}\mathrm{CZ}_{i,\tau(i)}\right) \, .
\end{align}
Thus, every two-element orbit is coupled by a CZ gate, while every fixed qubit receives either $S$ or $S^\dagger$. Unlike the Hadamard-type gate, a $ZX$-duality alone is not quite sufficient: the phases accumulated by each $X$-check must cancel. A useful sufficient criterion is that the total number of fixed qubits is even, every $X$-check overlaps $F$ an even number of times, and every $X$-check contains an even number of complete two-element orbits. The sets $F_+$ and $F_-$ can then be chosen so that the $S$ and $S^\dagger$ phases globally cancel, akin to a doubly even CSS code. All three examples in Table \ref{tab:ZSZ-LP ZX dualities} satisfy these orbit-parity conditions.

Ignoring phases, \eqref{eq:physical fold phase} leaves $Z$ operators unchanged and acts on an $X$-type support $s$ as
\begin{align}
    X(s)\longmapsto X(s)Z(\tau(s)) \, .
\end{align}
For an $X$-check, the second factor is precisely its paired $Z$-check, so this again preserves the stabilizer group. Assuming the OGS basis, the identity on $Q_1$ immediately identifies the corresponding logical action:
\begin{align}\label{eq:logical fold phase action}
    \mathcal{S}_\tau:\qquad
    \bar{Z}_g\longmapsto\bar{Z}_g\; ,\qquad
    \bar{X}_g\longmapsto\bar{X}_g\bar{Z}_{\theta(g)} \, .
\end{align}
Each fixed point of $\theta$ therefore receives a logical phase gate, while every two-cycle $\{g,\theta(g)\}$ receives a logical CZ. Up to a logical $Z$ Pauli frame determined by the choices of $S$ versus $S^\dagger$ on physical fixed points, we can summarize the logical circuit as
\begin{align}\label{eq:logical fold CZ S gate}
    \overline{\mathcal{S}}_\tau
    =\prod_{g=\theta(g)}\bar{S}_g
     \prod_{\{g,h\}:\,h=\theta(g),\,g\neq h}\overline{\mathrm{CZ}}_{g,h} \, ,
\end{align}
where the second product contains one representative of every two-element orbit. For the three examples in Table \ref{tab:ZSZ-LP ZX dualities}, the logical circuits therefore implement respectively 4, 36, and 26 logical phase gates together with 14, 6, and 26 logical CZ gates. At the physical level, their folds have respectively 12, 108, and 78 fixed qubits. Hence \eqref{eq:physical fold phase} contains 74, 66, and 156 disjoint physical CZ gates, respectively, together with phase gates on the fixed qubits. From a computational standpoint, the fold-symmetric ZSZ-LP codes with large fixed-point group automorphisms could serve as a constant-rate resource for $\ket{Y}=S\ket{+}$ states, which can feed into FTQC primitives such as twist-free surgery \cite{Chamberland_2022_twistfree, Cowtan_2026_parallel}.

%%%%%%%%%%%%%%%%%%%%%%%%%%%%%%%%%%%%%%%%%%%%%%%%%%%%%

\section{Syndrome extraction for reconfigurable atom arrays}
\label{sec:syndrome extraction}

As is customary for qLDPC codes, we propose bare-ancilla syndrome extraction to nondestructively measure our check operators and obtain our error syndrome. For each check operator (row of $H_X$ or $H_Z$), we introduce one ancilla, often called a check qubit, that will be used to measure the check operator's parity. The parity is measured by entangling the check qubit with its corresponding data qubits, given by the support of the check operator, and then projectively measuring the check qubit in its computational basis \cite{gottesman1997thesis, Dennis_2002}. The syndrome-data entanglement is typically performed with two-qubit gates, whose mutual ordering or schedule can lead to many different syndrome extraction circuits. 

We now describe how we build syndrome extraction gate schedules that are reasonably natural for reconfigurable atom arrays. In these arrays, there are typically static traps generated by spatial light modulators (SLMs), and dynamical tweezers generated by acousto-optic deflectors (AODs) to move atoms between SLM traps. Recall from \eqref{eq:5-block H_X,H_Z} that the quantum code has five data-qubit subblocks. In this section, we will label these five subblocks by $Q_1, Q_2, Q_3, Q_4, Q_5$ for ease of discussion. Each data subblock contains $\ell=\ell_1\ell_2$ qubits, which we arrange as $\ell_2$ rows and $\ell_1$ columns according to the group element $x^ay^b$: the $x$ exponent is the column coordinate, while the $y$ exponent is the row coordinate, which mimicks the ZSZ Cayley graph structure in Figure \ref{fig:ZSZ cayley graphs}. Thus the physical data layout comprises of five rectangular data-atom subblocks of identical shapes. Similarly, we denote the two $X$-check syndrome subblocks by $C^X_1,C^X_2$ and the two $Z$-check syndrome subblocks by $C^Z_1,C^Z_2$. The resulting physical layout is illustrated in Figure \ref{fig:atom-layout}.

\begin{figure}[t]
    \centering
    \begin{tikzpicture}[
      font=\sffamily\small,
      datablock/.style={draw=black!55, very thick, rounded corners=1pt, fill=black!5},
      sblock/.style={draw=red!70, thick, rounded corners=1pt, fill=red!8},
      tblock/.style={draw=blue!70, thick, rounded corners=1pt, fill=blue!8},
      dataatom/.style={circle, fill=black!65, inner sep=0pt, minimum size=1.7pt},
      satom/.style={circle, fill=red!75, inner sep=0pt, minimum size=1.5pt},
      tatom/.style={circle, fill=blue!75, inner sep=0pt, minimum size=1.5pt},
      zonebox/.style={draw=black!45, dashed, rounded corners=3pt},
      transport/.style={{Stealth[length=2.4mm,width=2mm]}-{Stealth[length=2.4mm,width=2mm]},
                        line width=1pt, draw=black!70},
    ]

    % ---------- DATA (GATE) ZONE: five subblocks A..E ----------
    \node[font=\sffamily\bfseries] at (5.15,1.95) {Data (gate) zone\,---\,atoms held fixed};
    \foreach \lab [count=\i from 0] in {Q_1,Q_2,Q_3,Q_4,Q_5}{
      \coordinate (o\lab) at ({\i*2.15},0);
      \node[datablock, anchor=south west, minimum width=1.7cm, minimum height=0.95cm]
            (data\lab) at (o\lab) {};
      \node[font=\sffamily\bfseries] at ($(o\lab)+(0.85,1.25)$) {$\lab$};
      \foreach \cx in {1,...,5}{
        \foreach \cy in {1,...,3}{
          \node[dataatom] at ($(o\lab)+(\cx*0.2833,\cy*0.2375)$) {};
        }
      }
    }

    % dimension braces on subblock Q_1
    \draw[draw=black!75, thick, decorate, decoration={brace, amplitude=5pt}]
         ($(oQ_1)+(-0.16,0)$) -- ($(oQ_1)+(-0.16,0.95)$) node[midway, left=6pt] {$\ell_2$};
    \draw[draw=black!75, thick, decorate, decoration={brace, amplitude=5pt, mirror}]
         ($(oQ_1)+(0,-0.16)$) -- ($(oQ_1)+(1.7,-0.16)$) node[midway, below=6pt] {$\ell_1$};

    % ---------- transport arrow between zones ----------
    \draw[transport] (5.15,-0.15) -- (5.15,-1.25);
    \node[align=left, anchor=west] at (5.35,-0.75)
         {mobile syndrome\\atoms (AOD)};

    % ---------- DENSE STORAGE ZONE: C^X_1,C^X_2,C^Z_1,C^Z_2 ----------
    \def\sy{-3.12}
    \draw[zonebox, fill=black!3] (1.95,-3.90) rectangle (8.40,-1.35);
    \node[font=\sffamily\bfseries] at (5.15,-1.64) {Dense storage zone};
    \foreach \nm/\xx/\bs/\as/\lb in {%
        S1/2.30/sblock/satom/{$C^X_1$}, S2/3.80/sblock/satom/{$C^X_2$},
        T1/5.30/tblock/tatom/{$C^Z_1$}, T2/6.80/tblock/tatom/{$C^Z_2$}}{
      \node[\bs, anchor=south west, minimum width=1.1cm, minimum height=0.75cm]
            (\nm) at (\xx,\sy) {};
      \node at ($(\xx,\sy)+(0.55,1.05)$) {\lb};
      \foreach \cx in {1,...,5}{
        \foreach \cy in {1,...,4}{
          \node[\as] at ($(\xx,\sy)+(\cx*0.1833,\cy*0.15)$) {};
        }
      }
    }
    % Group braces for the X- and Z-check subblocks.
    \draw[decorate, decoration={brace, amplitude=4pt, mirror}, draw=red!60]
         (2.30,-3.27) -- (4.90,-3.27) node[midway, below=3pt] {$X$-checks};
    \draw[decorate, decoration={brace, amplitude=4pt, mirror}, draw=blue!65]
         (5.30,-3.27) -- (7.90,-3.27) node[midway, below=3pt] {$Z$-checks};

    \end{tikzpicture}
    \caption{Physical layout used for syndrome extraction on a reconfigurable atom array. The five data-qubit subblocks $Q_1, Q_2, Q_3, Q_4, Q_5$ (one per block-column of $H_X,H_Z$ in \eqref{eq:5-block H_X,H_Z}) are held stationary in the gate zone, each laid out as an $\ell_2\times\ell_1$ rectangle of atoms: $\ell_1$ columns indexed by the $x$-exponent and $\ell_2$ rows by the $y$-exponent. The $X$-check qubits form the syndrome subblocks $C^X_1,C^X_2$ and the $Z$-check qubits form $C^Z_1,C^Z_2$ (the two block-rows of $H_X$ and $H_Z$ respectively), held compactly in a dense storage zone \cite{Zhou_2025_arch}. During syndrome extraction the syndrome atoms are rearranged and transported by AODs to the data subblocks for entangling gates.}
\label{fig:atom-layout}
\end{figure}

In this block picture, choosing a CNOT ordering for syndrome extraction is essentially choosing an ordering for the monomials in the nonzero blocks of $H_X$ and $H_Z$ \eqref{eq:5-block H_X,H_Z}. For example, if a nonzero block contains the trinomial $L[g_1]+L[g_2]+L[g_3]$, then one can first move the relevant syndrome subblock into the relative frame for $g_1$, apply all CNOT gates corresponding to $L[g_1]$ transversally, then do the same for $g_2$ and $g_3$. The same idea applies to right-regular blocks $R[g]$. Note that a ``single monomial'' here represents $\ell$ physical CNOT gates applied transversally, one for each group element in the subblock.

The transition cost between two successive monomials is determined by the push-through relations \eqref{eq:push-through relations}. If the syndrome block is currently aligned for a left monomial $L[g]$, and we next want $L[h]$, then the relevant transition monomial is $hg^{-1}$. If it is currently aligned for a right monomial $R[g]$ and we next want $R[h]$, then the relevant transition monomial is $g^{-1}h$. As an example, using $(\ell_1,\ell_2,q) = (26,3,3)$ for ZSZ-LP-390, we can have something like
\begin{align}
    1 \xlongrightarrow{\;\;xy\;\;} L[xy] \xlongrightarrow{\,y^2(xy)^{-1}\,=\,x^{23}y\,} L[y^2] \, .
\end{align}
When switching between a left and right interaction, the scheduler moves the currently active reference-frame component back to the data frame and then aligns the other component to the next monomial. After writing the transition monomial in canonical form $x^\alpha y^\beta$, the push-through relations tell us what physical rearrangement primitives are needed. Left multiplication by $y^\beta$ produces an affine stretch (the $T$ matrix in \eqref{eq:L[xy] tensor decomp}), which can be decomposed into consecutive horizontal ``riffle shuffles'': the subblock is first stretched horizontally by a factor of $q$, then the right most $q-1$ partitions are interleaved among the leftmost partition akin to shuffling a deck of cards \cite{ZSZ_codes}. This last interleaving step can either be performed sequentially with one crossed AOD, or in parallel with $q-1$ crossed AODs, one for each partition. For right multiplication by $x^\alpha$, the horizontal shear leads to row-dependent cyclic shifts through the push-through relation \eqref{eq:push-through x right}. The above two non-rigid movements are the main places where the non-abelian twist $q$ shows up in the hardware compilation problem. See \cite[Algorithm 1]{ZSZ_codes} and \cite[Algorithm 2]{ZSZ_codes} for further details on these subroutines.

For the routing cost model, we apply a simple uniform counting rule where a cyclic shift and a one-step riffle shuffle are treated as comparable primitive AOD rearrangements. A left move by $y^\beta$ is decomposed into sequential $q$-riffle shuffles, followed by the final vertical cyclic shift. A right move by $x^\alpha$ is decomposed into sequential horizontal cyclic shifts, grouped by rows that require the same displacement. The cost model thus keeps track of two important hardware costs: the number of primitive rearrangements and how many independent AOD channels are needed at once. The latter can matter when $q$ is large: a parallelized riffle shuffle may have small depth, but it uses extra simultaneous crossed-AOD channels during the shuffle. From the cost model, one may notice that for the most AOD-friendly ZSZ-LP codes, we would like to have both $q$ and $\ell_2$ be small. The candidate ZSZ-LP codes with $n\geq320$ all somewhat possess this desirable criteria.

\begin{table}[t]
\centering
\begin{tabular}{ccccccc}
\toprule
\textbf{Code name} & \textbf{\makecell{AOD \\ channels}} & \textbf{\makecell{Riffle \\ shuffles}} & \textbf{\makecell{Cyclic \\ shifts}} & \textbf{\makecell{Move \\ layers}} & \textbf{\makecell{Gate \\ layers}} & \textbf{\makecell{Total time for \\ rearrangements}} \\ \midrule
ZSZ-LP-320 & 4 & 20 & 136 & 18 & 24 & 35 ms \\
ZSZ-LP-390 & 4 & 20 & 114 & 17 & 24 & 33 ms \\
ZSZ-LP-550 & 4 & 36 & 184 & 18 & 25 & 58 ms \\
ZSZ-LP-775 & 2 & 40 & 182 & 18 & 24 & 59 ms \\
BB-144 (gross) & 2 & 0 & 17 & 6 & 7 & 6 ms \\\bottomrule
\end{tabular}
\caption{The routing costs are listed for one round of sequential $X+Z$ syndrome extraction for the candidate ZSZ-LP codes with $q=2,3$ using a simple greedy scheduler, with stationary data qubits and mobile check qubits. For comparison, we also report the costs for the BB-144 gross code with an interleaved depth-7 syndrome extraction circuit \cite{BB_codes}. To compute the movement times, we use an interatomic (storage zone) spacing of $5\,\mu$m and an AOD acceleration of $5500$\,m/$s^2$ \cite{Zhou_2025_arch}. Note that the total rearrangement time does not include any gate-layer movements, and so the full time for one round of syndrome extraction may be higher.}
\label{tab:greedy AOD cost}
\end{table}

We use a simple greedy scheduler for the numbers reported in Table \ref{tab:greedy AOD cost}. For each syndrome subblock, the scheduler keeps track of the current left and right reference frames. At each step, it looks at the remaining monomial jobs involving that syndrome subblock, computes the transition cost from the current frame to each candidate job, and chooses the cheapest next job. Ties are broken deterministically so that the output schedule is reproducible. After routes are built for the individual syndrome subblocks, parallelizable move layers and parallelizable gate layers are greedily merged. This lets two syndrome subblocks move in the same layer when they do not require the same physical resource, and it lets two monomial gate layers run together when they do not reuse the same syndrome or data subblock. The pseudocode is provided in Appendix \ref{app:greedy-aod-scheduler}.

The routing costs of the greedy scheduler are listed in Table \ref{tab:greedy AOD cost}. We report the peak number of simultaneous AOD channels, the number of riffle shuffles and cyclic shifts, the number of move and gate layers, and a rough physical rearrangement time using a constant-acceleration model \cite{Zhou_2025_arch}. In these schedules, data atoms are kept fixed and only syndrome atoms are rearranged. We also do not include the extra motion needed to bring syndrome atoms from the storage zone to the gate zone, so the reported rearrangement times should be understood as the time spent on the syndrome rearrangement itself rather than the full duration of a syndrome extraction round. For the ZSZ-LP codes, we expect the rearrangements to dominate the overall time budget, since gate moves for one syndrome subblock can always be interleaved during the move layer for the other subblock.

%%%%%%%%%%%%%%%%%%%%%%%%%%%%%%%%%%%%%%%%%%%%%%%%%%%%%%%%%

\section{Numerical simulations}
\label{sec:numerics}

\subsection{Quantum circuit-level memory}

For the memory simulations of all ZSZ-LP codes in Table \ref{tab:ZSZ-LP codes intro}, we use the reconfigurable-atom-array-inspired sequential $X+Z$ syndrome extraction schedule described in the previous section. As is customary in the literature, we perform $d$ repeated rounds of syndrome extraction for each memory experiment to combat measurement errors. To probe the error-correcting performance against both $X$ and $Z$ errors, we run two types of experiments. For an $X$-memory ($Z$-memory) experiment, where we are correcting $Z$ errors, we begin with all data qubits in $\ket{+}$ ($\ket{0}$), perform $d$ rounds of $X+Z$ syndrome extraction, and finally measure all data qubits in the $X$-basis ($Z$-basis). We report the per-round block logical error rates (LER), i.e. any logical failure in the entire code block, in the left panel of Figure \ref{fig:numerics}. Both the right panel in Figure \ref{fig:zsz-lp vs other kd2n} and the left panel in Figure \ref{fig:numerics} are averaged over $X$-memory and $Z$-memory experiments. Recall that for a raw LER $\bar{p}$ over $T$ rounds, the per-round LER is computed as $\bar{p}_{\rm round} = 1-(1-\bar{p})^{1/T}$. For the LER per-qubit-round, simply multiply $T$ by $k$. To extrapolate the LER curves of ZSZ-LP-320 through ZSZ-LP-550, we fit our subthreshold ($p\leq0.4\%$) LER data points to the phenomenological fitting function $\bar{p} = p^{d_{\rm circ}/2} {\rm e}^{c_0+c_1p+c_2p^2}$ with fit parameters $c_0,c_1,c_2$ and circuit-distance estimate $d_{\rm circ}$ \cite{BB_codes}. We choose $d_{\rm circ} = d-2$ based on circuit-distance estimates for the small ZSZ-LP codes; see the discussion in Appendix \ref{app:more numerics} and Table \ref{tab:relay-gamma-intervals} for more details. The ${\rm e}^{c_0+c_1p+c_2p^2}$ part is to account for the initial ``waterfall'' behavior, where the near-threshold slope of the log-log LER curves is observed to be steeper than $d_{\rm circ}/2$.

On the right panel in Figure \ref{fig:numerics}, we report the $X$-surgery performance of the graph-surgery gadget in Table \ref{tab:surgery main}. To benchmark the logical measurement fidelity, we simulate the noisy surgery circuit described in Section \ref{sec:surgery}. In particular, we initialize all data qubits in $\ket{+}$ and all surgery ancilla qubits in $\ket{0}$, perform $d$ rounds of syndrome extraction on the merged code, then measure all surgery ancilla qubits in the $Z$ basis, perform one round of syndrome extraction on the data code, and finally measure all data qubits in the $X$ basis. Since all logical $\bar{X}$ operators have +1 eigenvalues from the transversal $\ket{+}$ initialization, surgery succeeds if we correctly infer a +1 outcome from the relevant combination of ancilla $X$-checks. For the merged code, we perform syndrome extraction in two steps: we first perform the CNOTs for both data and ancilla blocks in parallel, and then we perform the CNOTs corresponding to the attachment maps. For the ancilla block and attachment maps, we schedule the CNOTs using a random edge-coloration of the Tanner subgraphs \cite{Tremblay_2022}.

\begin{figure}[t]
    \centering
    \includegraphics[width=0.5\textwidth]{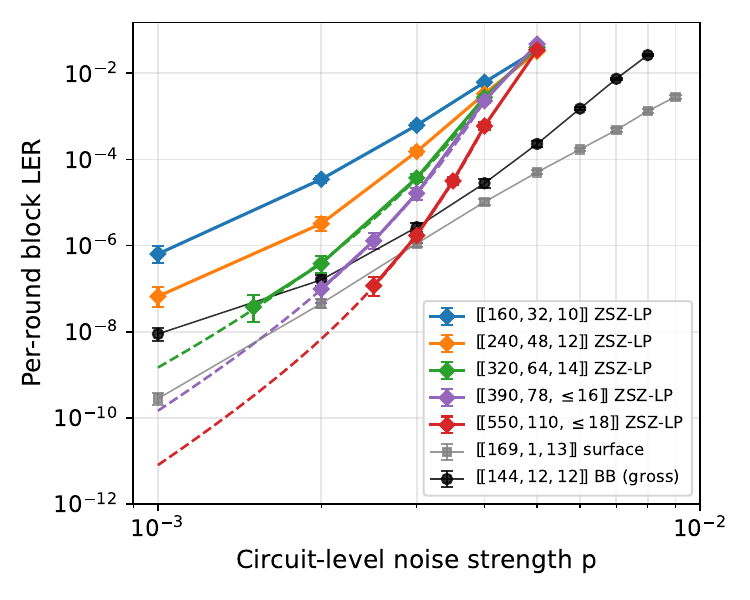}
    \includegraphics[width=0.49\textwidth]{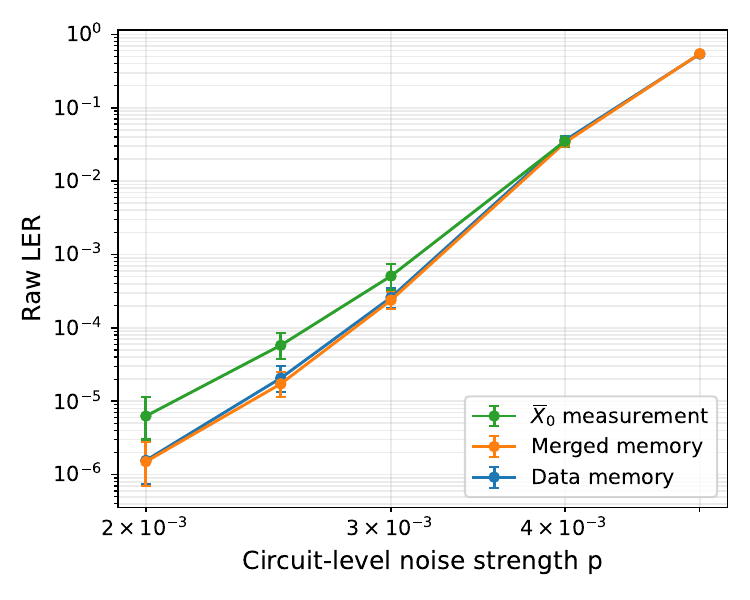}
    \caption{\textbf{Left:} The block logical error rate (LER) per syndrome-extraction round, averaged over $X$-memory and $Z$-memory simulations, for the candidate ZSZ-LP codes in Table \ref{tab:ZSZ-LP codes intro} as well as the BB-144 gross code and the $d=13$ surface code. \textbf{Right:} The raw (unnormalized) LER is plotted for the graph-surgery gadget of ZSZ-LP-390 in Table \ref{tab:surgery main}. ``$\bar{X}_0$ measurement'' refers to the logical $\bar{X}$ measurement of the first $\bar{X}$-logical in the OGM basis. The LERs of the $d$-round memory circuits for the $\llbracket 390,78,\leq16\rrbracket$ data code and $\llbracket 416,77,\leq16\rrbracket$ merged codes are also plotted for comparison. Error bars denote $95\%$ Clopper-Pearson confidence intervals.}
    \label{fig:numerics}
\end{figure}

Our noise model is parameterized by a strength $p$ and consists of the following:
\begin{itemize}[itemsep=0pt, topsep=5pt]
    \item Two-qubit gates experience depolarizing noise with probability $p$.
    \item Initializations in $\ket{1}/\ket{-}$ occur instead of $\ket{0}/\ket{+}$ with probability $p$.
    \item Measurement results incorrectly flip with probability $p$.
\end{itemize}
We have not included single-qubit or idling noise, and so our numerical results should only be interpreted as an optimistic best-case scenario. We ignore single-qubit noise since single-qubit gates have much higher fidelity than two-qubit gates in both trapped-ion and reconfigurable atom arrays, which are our platforms of interest. Idling noise, on the other hand, depends sensitively on the particular choice of hardware configuration and compilation, which can be very complex and change over time. Since we want to be agnostic/future-proof to different hardware compilation schemes, we forgo idling noise in this work.

We perform syndrome decoding with the Relay-BP message-passing algorithm \cite{Relay_BP}, accelerated with a graphics processing unit (GPU). The GPU Relay-BP decoder \cite{cudaq_qec} is configured with the following shared parameters:
\begin{itemize}[itemsep=-3pt, topsep=5pt]
    \item $\textsf{gamma0} = 0.125$
    \item $\textsf{pre\_iter} = 80$
    \item $\textsf{num\_sets} = 1000$
    \item $\textsf{max\_iterations} = 60$
    \item $\textsf{stop\_nconv} = 5$
\end{itemize}
The gamma distribution \textsf{gamma\_dist} is individually tuned for each code's memory circuit through a grid search: see Table \ref{tab:relay-gamma-intervals}. The parallelization in the decoding algorithm happens within each BP iteration: all messages from qubit to check nodes and vice versa are performed in parallel, a schedule known as ``flooding''. The relay legs are still performed sequentially. We use single-precision floating point for the messages with $\textsf{proc\_float}=\text{`fp32'}$. We generate the syndrome data with \textsf{Stim} \cite{Stim} and decode the syndromes with the GPU-accelerated Relay-BP decoder from the \textsf{cudaq-qec} library \cite{cudaq_qec}. For the surgery circuit, we use the same relay configuration as that for the data memory code. Due to the inherent open detector boundaries of the surgery circuit, we found that we needed to double the number of relay legs ($\textsf{num\_sets} \rightarrow 2000$) in order to suppress decoder nonconvergences.

For our candidate ZSZ-LP codes, under our noise model, we observe a pseudothreshold around $p\approx0.5\%$, below which the logical error rates drop steeply with increasing code sizes. For comparison, we also plot the logical error rates of the BB-144 ``gross'' code decoded with the same Relay-BP parameters above but with $\mathsf{gamma\_dist} = [-0.24, 0.66]$ \cite{Relay_BP}, as well as the $d=13$ rotated surface code decoded with \textsf{PyMatching} \cite{Higgott2025sparseblossom}. Under this noise model, we see that the ZSZ-LP-390 code begins to outperform the gross code around $p\approx0.2\%$, with the ZSZ-LP-550 code crossing around $p\approx0.3\%$. For ZSZ-LP-390's $X$-surgery experiment, we observe that the single-logical graph-surgery gadget has comparable performance as that of the data code memory but with a slightly shallower LER slope. We note that this gap could potentially be narrowed by fine-tuning the relay parameters specifically for the surgery circuit rather than simply borrowing those from the memory circuit.

\begin{table}[t]
\centering
\begin{tabular}{cccccc}
\toprule
\textbf{Code name} &
  \textbf{\makecell{DEM size\\ $(N_{\rm det},N_{\rm loc})$}} &
  \textbf{\makecell{Mean\\ latency}} &
  \textbf{\makecell{p99\\ latency}} &
  \textbf{\makecell{p99.9\\ latency}} &
  \textbf{\makecell{Max \\ latency}} \\ \midrule
ZSZ-LP-320 & $(1920,28992)$ & 1.2 ms & 2.0 ms & 3.2 ms & 201 ms \\
ZSZ-LP-390 & $(2652,40326)$ & 1.5 ms & 2.6 ms & 3.9 ms & 260 ms \\
ZSZ-LP-550 & $(4180,63910)$ & 2.2 ms & 3.3 ms & 4.6 ms & 393 ms \\ \bottomrule
\end{tabular}
\caption{The mean, 99th percentile (p99), 99.9th percentile (p99.9) and worst-case (max) decoding latencies are listed for three candidate ZSZ-LP codes for $d$ rounds of syndrome extraction at $p=0.1\%$ using GPU Relay-BP decoding on an NVIDIA L40S GPU (18,176 CUDA cores, 48\,GB GDDR6 VRAM). Reported times for mean, p99 and p99.9 are $XZ$-averaged over 100,000 $X$-memory and $Z$-memory shots; max latency is averaged over 100 shots. We also list the dimensions of the detector error model (DEM) matrices used for decoding; $N_{\rm det}$ and $N_{\rm loc}$ are the numbers of detectors and error locations respectively.}
\label{tab:decoding latency}
\end{table}

In practice, we not only want an accurate decoder, but also one that can process syndromes faster than they are being generated to avoid the exponential backlog problem \cite{Terhal_2015_review}. To this end, we benchmark the time, or latency, it takes for the GPU Relay-BP decoder to decode a single spacetime block of $d+1$ syndromes at $p=0.1\%$ (averaged over $X$-memory and $Z$-memory) for three candidate ZSZ-LP codes, with the results reported in Table \ref{tab:decoding latency}. We observe that most shots reach the required number of convergences ($\mathsf{stop\_nconv}=5$) in a short number of relay legs, leading to a typical decoding latency of only a few milliseconds. To characterize the tail of the latency distribution, we also report the 99th and 99.9th percentile times, which are also a few milliseconds for the codes that we benchmarked. Even so, there will be rare events such as nonconvergences that cause all relay legs to be run. To benchmark this worst-case scenario, we set $\textsf{stopping\_criterion} = \text{`All'}$, which forces the Relay-BP decoder to run all relay legs regardless of convergence; this latency upper bound is reported as ``max latency'' in the final column of Table \ref{tab:decoding latency}. The max latency can be reduced by decreasing the number of relay legs but at the expense of accuracy. Recalling the estimated syndrome extraction times in Table \ref{tab:greedy AOD cost}, we conclude that this GPU Relay-BP configuration should be sufficient for sliding-window decoding with a commit region of 1, where $\sim d$ rounds of syndromes need to be decoded every syndrome extraction round \cite{Dennis_2002}, for reconfigurable atom arrays under our greedy AOD scheduler. There is also ample room to encompass faster syndrome extraction schedules or additional relay legs like for the surgery circuit. For the extremely rare nonconvergent events, which may cause more than $d$ rounds of syndromes to accumulate with the current latencies, we can use parallel-window decoding \cite{Skoric_2023_parallel}. In the future, we expect these decoding latencies to improve with both software and hardware upgrades on the GPU side. For comparison, the average CPU (AMD EPYC 7R13 Processor) decoding latencies for ZSZ-LP-320, ZSZ-LP-390 and ZSZ-LP-550 were $27\,$ms, $39\,$ms and $73\,$ms respectively, representing speedups of roughly 23x, 26x, and 33x of the L40S GPU over the EPYC 7R13 CPU. A field-programmable gate array (FPGA) or an application-specific integrated circuit (ASIC) may improve latencies even further by compiling a DEM and message-passing algorithm directly into silicon \cite{Valls_2021, Das_2022, Bascones_2025, maurer2025real, Maan_2026}; a simple DEM-extrapolation of the latencies in \cite{maurer2025real} for the gross code gives estimated latencies of $2.6\,\mu$s, $3.7\,\mu$s and $5.9\,\mu$s, assuming full parallelization. But a downside of hardware-compiled decoders is that they are inherently more rigid than general-purpose chips like GPUs, and it is unclear whether they can be adapted to the surgery setting where the number of distinct DEMs can become extensive \cite{Litinski_2019}.

\subsection{Classical channel coding}

Since our classical ZSZ-2BGA codes may be of independent interest for classical error correction, we benchmark them in the paradigmatic  setting of classical channel coding. Here, unlike the previous scenario for quantum memory, we assume noise only acts on the physical bits in between rounds of \emph{noiseless} encoding and decoding, a protocol that describes transmission between, say, a mobile device and a wireless router or cell tower. A message of $k$ bits of information is encoded into a codeword of $n>k$ bits, which is then transmitted over a noisy communication channel and finally decoded by the receiver. For the channel, we use binary phase-shift keying (BPSK) over the additive white Gaussian noise (AWGN) channel, the canonical noise model for classical communications \cite{ModernCodingTheory}. Each codeword bit $c_i\in\F_2$ is modulated to a unit-energy symbol $x_i = (-1)^{c_i}$, and the receiver observes $y_i = x_i + z_i$ with independent Gaussian noise $z_i \sim \mathcal{N}(0,\sigma^2)$ of variance $\sigma^2 = N_0/2$. This model closely approximates realistic communication links: receiver noise aggregates many independent microscopic processes (e.g. thermal noise) and is therefore Gaussian to an excellent approximation by statistical mechanics, i.e. the central limit theorem. As such, it serves as the reference benchmark for wireless standards such as fifth-generation new radio (5G NR) \cite{Richardson_2018}. The noise strength is conventionally quantified by the signal-to-noise ratio $E_b/N_0$ of the energy spent per information bit to the noise power spectral density; for unit-energy symbols, $E_b/N_0 = 1/(2R\sigma^2)$, quoted in decibels. Crucially, unlike in the quantum setting where our syndrome measurements are digitized, each received symbol is a continuous variable and carries \emph{soft} information about its bit through the log-likelihood ratio (LLR) $\log[P(c_i=0\,|\,y_i)/P(c_i=1\,|\,y_i)] = 2y_i/\sigma^2$, which a message-passing decoder can leverage directly. Since all considered codes are linear and the channel and decoder are symmetric, we may transmit the all-zero codeword without loss of generality; a block error is declared whenever the decoded word differs from the transmitted one.

Shannon's channel coding theorem states that (arbitrarily) reliable communication at rate $R$ is possible if and only if $R$ is below the channel capacity. For BPSK inputs, the capacity of the AWGN channel per channel use is
\begin{align}
    C_{\rm BPSK}(\sigma) \;=\; 1 - \mathbb{E}_L\!\left[\log_2\!\left(1+{\rm e}^{-L}\right)\right],
    \qquad L \sim \mathcal{N}\!\left(2/\sigma^2,\,4/\sigma^2\right),
    \label{eq:bpsk awgn capacity}
\end{align}
where $L$ is distributed as the LLR of a received sample given a transmitted $0$, and $\mathcal{N}(\mu,\sigma^2)$ denotes the normal distribution with mean $\mu$ and variance $\sigma^2$. Setting $C_{\rm BPSK}=R=1/2$ and solving numerically yields the Shannon limit $(E_b/N_0)_\star \approx 0.19$ dB for rate-1/2 codes, indicated by the vertical dashed line in the right panel of Figure \ref{fig:classical zsz vs random dist}. The channel capacity thus sets an ultimate limit on the threshold of any error-correcting code family under the noise channel, independent of any particular decoder. Note that no analogue exists for the depolarizing channel in the quantum setting, and finding one is still an active area of research.

To perform soft-decision decoding over the AWGN channel, we use the GPU-accelerated BP decoder \textsf{LDPCBPDecoder} from the \textsf{Sionna} library \cite{sionna_github, sionna_paper}, configured with the following parameters:
\begin{itemize}[itemsep=-3pt, topsep=5pt]
    \item $\textsf{num\_iter} = 100$
    \item $\textsf{cn\_update} = \text{`boxplus-phi'}$
    \item $\textsf{llr\_max} = 20$
    \item $\textsf{precision} = \text{`single'}$
\end{itemize}
As with the GPU Relay-BP decoder, a ``flooding'' schedule is executed where all messages are updated in parallel within each iteration. The `boxplus-phi' option implements a sum-product check-node update, input LLRs are clipped to $\pm\,\textsf{llr\_max}$, and messages are stored in single-precision floating point (fp32).

We compare the ZSZ-2BGA codes against two rate-1/2 baselines with the same $[310,155]$ parameters as the largest ZSZ-2BGA code: \emph{(i)} the best $(3,6)$-regular PEG instance from the 1000-code ensemble in the left panel of Figure \ref{fig:classical zsz vs random dist}, with \textsf{QDistEvol}-estimated minimum distance 20 \cite{PEG_LDPC}, and \emph{(ii)} a degree-irregular 5G-NR LDPC code constructed with \textsf{Sionna} according to the 3GPP specification \cite{Richardson_2018}: base graph 2 (BG2) with lift size $\ell=26$, 105 shortened (filler) bits, and $2\ell=52$ punctured high-degree ``state'' variable nodes that are never transmitted. The 5G code's irregular degree distribution trades distance and girth for a near-capacity threshold and superior performance in the waterfall regime. It is decoded on its full pruned Tanner graph with zero LLRs injected at the punctured positions; since the punctured bits consume no channel energy, all three codes transmit 310 symbols per 155 information bits at equal $E_b/N_0$. Eliminating the punctured variables yields a parity-check matrix of the transmitted $[310,155]$ code with \textsf{QDistEvol}-estimated minimum distance only 11, a deliberate tradeoff in the 5G design that favors BP-decoding pseudothresholds (waterfall performance) and rate flexibility over girth and minimum distance (error-floor performance) \cite{Richardson_2018}.

The right panel of Figure \ref{fig:classical zsz vs random dist} plots the resulting block error rates. The ZSZ-2BGA-310 code performs nearly identically to the best PEG code throughout the waterfall region, while the 5G-NR code enjoys roughly a 0.5 dB advantage there owing to its density-evolution-optimized irregular degree profile and punctured state nodes. The tables turn in the low-error regime: beyond $E_b/N_0\approx4$\,dB the 5G curve visibly flattens toward an error floor governed by its low minimum distance and girth of 6, with 65\% of its recorded failures at $E_b/N_0\geq3.75$\,dB being undetected logical errors, whereas the ZSZ-2BGA and PEG codes exhibit no undetected errors at all. Consequently, the ZSZ-2BGA-310 curve crosses the 5G-NR curve around $4.5$\,dB and attains the lowest measured block error rate of ${\approx}\,1.6\times10^{-7}$ at $4.75$\,dB with no visible error floor, consistent with its larger minimum distance (22 versus 11) and girth (8 versus 6). The smaller ZSZ-2BGA codes display the expected finite-length scaling, with waterfalls that steepen with increasing block length $n$, past an observed pseudothreshold around $E_b/N_0 \approx 1.25\,$dB, which is within $\approx1.1\,$dB of the Shannon limit. We conclude that the ZSZ-2BGA family is competitive with the strongest random $(3,6)$-regular constructions (i.e. from PEG) under soft-decision decoding at these block lengths, and that its high minimum distance and girth make it preferable in applications demanding very low residual error rates.

%%%%%%%%%%%%%%%%%%%%%%%%%%%%%%%%%%%%%%%%%%%%%%%%%%%%%%%

\section{Outlook}\label{sec:outlook}

This work serves as an initial investigation into ZSZ-LP codes, and there are plenty of additional directions on both the theoretical and practical fronts. We briefly describe some of them below.

\textbf{More bounds on minimum distance.} We have only provided theoretical upper bounds on minimum distances in Appendices \ref{app:classical 2BGA codes} and \ref{app:abelian-low-weight-logicals}. It would be interesting if any theoretical lower bounds, beyond girth-based ones, exist for both classical ZSZ-2BGA and quantum ZSZ-LP codes. In addition, our upper bounds are coarse-grained to the level of the entire ZSZ group; tighter bounds can likely be derived by taking into account the particular choices of monomials, which may provide a smarter method for code discovery.

\textbf{Minimum-weight and symplectic logical bases.} The one-generator minimum-weight (OGM) and systematic/symplectic (OGS) logical bases that we have provided in Tables \ref{tab:OGM generators} and \ref{tab:OGS generators} are distinct from one another. It would be of further practical relevance if a ZSZ-2BGA code pair and corresponding ZSZ-LP code instance is found to admit a logical basis that is both OGM and OGS, in analogy to the small quasi-cyclic codes in \cite[Table 3]{Xu_2025_fast} and the GB codes in \ref{webster2025explicit}. Such an OGM+OGS basis could further lower the practical spacetime footprint in FTQC compilation.

\textbf{Improved syndrome extraction.} For syndrome extraction, we focused on schedules suited for reconfigurable atom arrays. Perhaps the most immediate practical improvement along these lines would be to lower the rearrangement or routing complexity when implementing syndrome extraction. Faster syndrome extraction directly translates to faster logical clock speeds in the FTQC. One avenue is to take the current codes and improve upon the simple greedy scheduler by finding even lower-cost routes and/or interleaving $X$-syndrome and $Z$-syndrome extraction like for the BB codes. Moving the data qubits in addition to the check qubits may also reduce the total routing time. Another avenue is to co-design the ZSZ polynomials so that the rearrangement costs become inherently cheap, in analogy to what has been done with Kasai's APM codes \cite{kasai2026breaking, zhao2026ultra}. Recent progress on morphing circuits for group-algebra codes may also improve both routing overheads and pseudothresholds \cite{Shaw_2025_morphing_1, shaw2026morphing2, chao2026morphing}. Modifying syndrome-extraction circuits to handle atom loss is also pertinent for neutral-atom implementations \cite{liu2026loss}. In addition to these software-side optimizations, hardware upgrades such as three-dimensional AODs \cite{Picard_2026_3D, guo2025aod3d} could significantly lower the rearrangement costs by parallelizing certain subroutines such as row-dependent cyclic shifts, which are currently executed sequentially. Overall, we believe there is substantial room for future improvements on neutral-atom implementation. We also did not touch on trapped-ion implementations such as in a quantum charged-coupled device (QCCD) \cite{Kielpinski_2002, Pino_2021}, which have different movement constraints, and leave this exploration to future work. Hardware constraints aside, it would be interesting if distance-preserving (and equivariant) syndrome extraction schedules exist, and furthermore if they can have minimal circuit depth 12.\footnote{Since both $H_X$ and $H_Z$ \eqref{eq:5-block H_X,H_Z} are $(6,9)-LDPC$, the maximum qubit degree is 12, which lower-bounds the circuit depth.}

\textbf{Single-shot decoding.} Another way to boost the logical clock speed is to lower the number of syndrome rounds per decoding cycle. $\Theta(d)$ rounds are usually required so that timelike (measurement) and spacelike error mechanisms have similar protection. Previous work on two-block ZSZ codes \cite{ZSZ_codes} numerically demonstrated that ZSZ group-algebra codes can possess the ``confinement'' \cite{Quintavalle_2021_conf} needed for single-shot decoding, where a reliable correction can be inferred with only a single round of noisy syndrome information \cite{Spielman_1995_linear, Bombin_2015_single}. In practice, we may use a constant number of rounds greater than one depending on target performance. It would also be interesting if the surgery gadgets can be made single-shot \cite{cowtan2025fast, chang2026constant}; see the later discussion on locally testable codes for one possible direction. We leave the exploration of single-shot gadgets to future work.

\textbf{Improved decoding.} Despite the already good performance of Relay-BP for our candidate ZSZ-LP codes, we observe that nearly all decoding failures are due to BP nonconvergence rather than actual logical errors. For example, at $p=0.2\%$, only 1 of the 25 observed decoding failures for ZSZ-LP-390 was an undetected logical error. A nice property about a nonconverged failure is that it is flagged, and so we know about it in real time. There are several ways to further lower the observed LER. One way is to simply run more relay legs and hope for more convergences at the cost of increased decoder latency, like what we did for the surgery circuit. Another way is to postprocess the nonconvergences with more powerful but expensive techniques such as ordered-statistics (OSD) or even stronger decoders such as Tesseract \cite{tesseract_decoder} or Frontier \cite{frontier_decoder}. Another possibility would be to extend machine-learning (ML) decoders for BB codes \cite{gu2026cascade} to ZSZ-LP codes, though the lack of obvious translation symmetry may cause difficulties. One could also try to preprocess the syndromes using a predecoder \cite{ai_predecoder_surface} with the goal of reducing the frequency of nonconvergences. Recent progress on importance sampling for QEC \cite{beverland2025fail} may also help us better fine-tune our relay parameters (e.g. the gamma interval) for low error rates, where traditional Monte-Carlo becomes computationally expensive. On the GPU side, at the time of our simulations, fp32 was the lowest supported precision on the \textsf{cudaq-qec} Relay-BP decoder. Recent results on FPGAs demonstrate that lower-bit precision can increase throughput while maintaining nearly identical performance for parallelized Relay-BP \cite{maurer2025real}, and it would be interesting to see if these findings carry over to GPUs as well. Similar to the outlook on improved syndrome extraction, we also anticipate significant progress on this front.

\textbf{FTQC compilation.} Could ZSZ-LP codes eventually reduce the FTQC spacetime footprint when compiling large-scale algorithms, compared to current schemes? As mentioned in the introduction, a smaller qLDPC block allows for more intrinsic interblock parallelism, which could potentially lower the time overhead when compiling algorithms. However, due to the slower logical clock speeds and lack of code automorphisms compared to other codes such as BB codes, the end-to-end practical advantage is unclear. It would be interesting to compare the costs of a full extractor system \cite{he2025extractors} for ZSZ-LP codes. The concurrent work \cite{mitten_codes} may already provide answers to some of these questions.

\textbf{Locally testable classical ZSZ-LP codes.} In the realm of classical coding theory, it is known that the lifted/balanced product can also produce classical locally testable codes with constant rate, constant relative distance and constant query complexity (${\rm c}^3$-LTC) \cite{PK_good_qLDPC, Dinur_2022_LTC}. A ${\rm c}^3$-LTC is essentially an asymptotically good LDPC code with an additional ``local testing'' property that allows a receiver to reliably test whether a received bitstring is a codeword by only querying a constant number of bits \cite{Spielman_1995_linear}. Candidate LTCs can be obtained by simply replacing $H_{\rm right}$ with its transpose in the balanced product \eqref{eq:H_X,H_Z balanced product} and then relabeling the $Z$-check, qubit and $X$-check nodes in the quantum CSS Tanner graph to the bit, check and ``metacheck'' nodes of a classical LTC's Tanner graph; i.e. $H^\transp_Z \rightarrow H^{}_{\rm LTC}$. In this manner, we obtain candidate LTCs with block length $n=4\ell$ and code dimension $k\geq\ell$; i.e. design rate 1/4. As an example, using the polynomials of ZSZ-LP-320 results in a $(6,6)$-regular $[256,64,\leq54]$ code, whose local testability requires further analysis.\footnote{Probing the local testability, or ``soundness'' profile, of an LTC would involve estimating its so-called locally minimal distance \cite{PK_good_qLDPC, Dinur_2023_good}.} Motivated by the strong finite-length properties of ZSZ-LP qLDPC codes, we anticipate that the same construction could possibly lead to small-length LTCs with good performance. In FTQC, ${\rm c}^3$-LTCs have recently been shown theoretically to enable single-shot code surgery on generic qLDPC codes \cite{zhang2026accelerating}. The discovery of practical finite-length instances could hence bring this theoretical proposal to the practical realm, and we leave such studies to future work.

\section*{Acknowledgments}

We thank Rui Chao, Adam Holmes, Vadym Kliuchnikov, Muyuan Li, Kyungjoo Noh and Jan Olle for valuable discussions on practical qLDPC design, and we thank Igor Baratta, Christopher Chamberland, Ben Howe, Antonio deMarti iOlius, Justin Lietz and Melody Ren for valuable discussions on qLDPC decoding. We thank the CUDA-Q software team for continued development on GPU-accelerated decoders. We thank Qian Xu and Hsin-Yuan Huang for bringing our attention to the concurrent works and coordinating the joint arXiv release. Finally, we thank Krysta Svore for feedback on the manuscript and creating a research environment that made this work possible.

\textbf{AI disclosure:} we acknowledge LLM assistance in conducting numerical simulations, brainstorming, proof verifications and proofreading; the author assumes full responsibility for all content.

\addcontentsline{toc}{section}{References}
\begin{small}
\bibliographystyle{alpha}
\bibliography{thebib}
\end{small}

\newpage
\appendix

\section{Additional ZSZ-LP codes}\label{app:more ZSZ-LP codes}

Table \ref{tab:more ZSZ-LP codes} lists additional ZSZ-LP codes along the efficiency--length Pareto frontier in the left panel of Figure \ref{fig:pareto frontiers}. Note that most of these code instances have rather large twist factors ($q$) and so may not be as amenable for reconfigurable atom arrays compared to the $q=2,3$ candidates in Table \ref{tab:ZSZ-LP codes intro}. Nonetheless, we list them here since they could potentially be useful for other hardware architectures such as ion traps.

\begin{table}[t]
\centering\renewcommand{\arraystretch}{1.3}
\resizebox{\textwidth}{!}{
\begin{tabular}{ccccccc}
\toprule
$\llbracket n,k,d \rrbracket$ & $\ell_1,\ell_2,q$ & $a$ & $b$ & $c$ & $d$ & \textbf{OGS} \\ \midrule
$\llbracket 60,12,6 \rrbracket$ & $3,4,2$ & $1+y^3+x^2y^3$ & $1+x+xy^2$ & $1+y+xy^3$ & $1+y^2+y^3$ & \cmark \\
$\llbracket 100,20,8 \rrbracket$ & $5,4,4$ & $1+x^2y+y^3$ & $1+x^2+xy^3$ & $1+y^3+x^4y^3$ & $1+x^4y^2+xy^3$ & \cmark \\
$\llbracket 150,30,10 \rrbracket$ & $15,2,11$ & $1+x^{11}+x^{12}y$ & $1+y+x^6y$ & $1+x^3+x^{11}$ & $1+x^6+x^{14}y$ & \cmark \\
$\llbracket 210,42,12 \rrbracket$ & $21,2,8$ & $1+xy+x^3y$ & $1+x^{12}+x^{17}$ & $1+x^4y+x^{15}y$ & $1+x^9y+x^{18}y$ & \cmark \\
$\llbracket 525,105,\leq18 \rrbracket$ & $35,3,11$ & $1+x^{22}+xy^2$ & $1+x^4+x^{16}y^2$ & $1+x^{34}+x^{12}y$ & $1+x^{28}y+x^{31}y$ & \xmark \\
$\llbracket 625,125,\leq20 \rrbracket$ & $25,5,6$ & $1+x^6+xy$ & $1+x^{24}y+x^{21}y^3$ & $1+x^3y^2+x^{11}y^3$ & $1+x^4+x^6y^2$ & \cmark \\
$\llbracket 700,140,\leq22 \rrbracket$ & $35,4,8$ & $1+x^{29}y+x^2y^2$ & $1+x^7y+x^9y$ & $1+x^{19}+x^{16}y$ & $1+x^{32}y^2+x^{23}y^3$ & \xmark \\
$\llbracket 840,168,\leq24 \rrbracket$ & $28,6,11$ & $1+y+x^5y^2$ & $1+x^2y^3+x^{19}y^3$ & $1+x^{24}y+x^2y^2$ & $1+x^{26}y^2+x^{27}y^3$ & \xmark \\ \bottomrule
\end{tabular}
}
\caption{Additional ZSZ-LP codes on the numerically observed $kd^2/n$ frontier. The first column lists their code parameters, followed by the ZSZ group parameters $(\ell_1,\ell_2,q)$ and the trinomials $a,b,c,d$. Exact and upper bounds on code distances are computed using the \textsf{pySATDist} and \textsf{QDistEvol} ($10^6$ iterations) methods respectively. The final column indicates whether the code admits a one-generator-symplectic (OGS) logical Pauli basis.}
\label{tab:more ZSZ-LP codes}
\end{table}

\section{Binary group algebras and protomatrices}

In this appendix, we introduce some concepts and tools from abstract algebra, which hopefully provides the adequate context for the group-algebraic manipulations in the later appendices. To readers familiar with linear-algebraic manipulations for linear codes, these techniques are essentially generalizations of those over finite fields to group algebras. Working over group algebras is a convenient way to prove many properties of the codes described in the main text, especially in a way that respects the codes' group symmetries.

Let $G$ be a finite group of order $|G|$. Its regular representation describes how $G$ acts on itself by translation. To define it, let $V=\F_2^{|G|}$ be the vector space with basis vectors $\{\ket{h}:h\in G\}$ indexed by the group elements. Each $g\in G$ defines the permutation operators
\begin{align}\label{eq:regular representation}
    L[g]\ket{h}=\ket{gh}\, ,\qquad R[g]\ket{h}=\ket{hg}
\end{align}
called the left-regular and right-regular representations respectively.\footnote{Sometimes, the right-regular representation is defined with respect to group inverses so that the internal ordering $R[a]R[b]=R[ab]$ is preserved, but it nonetheless introduces unnecessary visual clutter with inverses, and so we avoid it in this work.} Note that left and right are the same when $G$ is abelian but can be distinct when $G$ is non-abelian. Since matrix multiplication always proceeds in right-to-left order, we have $L[a]L[b]=L[ab]$ and $R[a]R[b]=R[ba]$. Thus the elements of $G$ are realized as $|G|\times|G|$ permutation matrices, allowing group multiplication to be studied using linear algebra.

Let
\begin{align}
        \Ring := \F_2[G]
\end{align}
denote the binary group ring of $G$, which for $\F_2$ coefficients is also a group algebra.  Every element of $\Ring$ has a unique expansion
\begin{align}
        p=\sum_{g\in G} p_g g \, ,\qquad p_g\in \F_2 \, .
\end{align}
We denote the coefficient Hamming weight of $p$ by
\begin{align}
        \wt(p) := \abs{\{g\in G:p_g=1\}} \, ,
\end{align}
i.e. the number of nonzero group elements in its expansion. We denote
\begin{align}
        \bar{p} := \sum_{g\in G} p_g g^{-1}
\end{align}
for the antipode of $p$. Note that the antipode map is an involution but is not the same as inversion over the group ring, since $p\bar{p}\neq1$ in general. We say that $p \in \Ring$ is left-invertible (right-invertible) if there exists $q \in \Ring$ such that $qp=1$ ($pq=1$). For finite rings, a left inverse automatically implies a right inverse, and furthermore both inverses are equal\footnote{Let $qp=pq'=1$. From associativity, we have $q=q(pq')=(qp)q'=q'$.} and $p$ is called a unit; we denote its unique inverse by $p^{-1}$. We say that $\Ring$ is commutative if $pq=qp$ for all $p,q\in\Ring$, which holds if and only if $G$ is abelian.

A left ideal of $\Ring$ is an additive subgroup $I\subseteq \Ring$ such that $rI\subseteq I$ for every $r\in\Ring$, and a right ideal is defined similarly. An ideal will mean a two-sided ideal unless otherwise specified. When the ambient ring is commutative, these notions coincide. Given elements $p_1,\ldots,p_t$ in a commutative ring $\mathsf{S}$, the ideal generated by them is
\begin{align}
        \braket{p_1,\ldots,p_t}
        := \left\{\sum_{i=1}^t s_i p_i : s_i\in\mathsf{S}\right\}.
\end{align}
An ideal generated by one element is called principal; thus
\begin{align}
        \braket{p}=\mathsf{S}p=\{sp:s\in\mathsf{S}\}
\end{align}
is the principal ideal generated by $p$. An ideal $I\subseteq\mathsf{S}$ is proper if $I\neq\mathsf{S}$, and it is maximal if it is proper and there is no proper ideal strictly between $I$ and $\mathsf{S}$. A commutative ring $\mathsf{S}$ is \emph{local} if it has a unique maximal ideal, denoted by $\mathfrak{m}$. Equivalently, the elements outside $\mathfrak{m}$ are precisely the units of $\mathsf{S}$. If $I\subseteq\mathsf{S}$ is an ideal and $M$ an $\mathsf{S}$-module, then $IM$ denotes the submodule of $M$ consisting of finite sums of elements $im$ with $i\in I$ and $m\in M$. We say that $M$ is finitely generated if there exist $m_1,\ldots,m_t\in M$ such that every element of $M$ can be written as $\sum_i s_i m_i$ with $s_i\in\mathsf{S}$.
Finally, when $G$ is abelian so $\Ring=\F_2[G]$ is a finite commutative ring, it is Artinian and hence decomposes as a finite product of Artinian local rings.

Like for group elements, let $L[a], R[a]\in\F^{|G|\times|G|}_2$ denote the left-regular and right-regular representations of ring element $a\in\Ring$ respectively. A valuable property of the regular representation is that there is an isomorphism between length-$|G|$ binary vectors and elements of $\Ring=\F_2[G]$, owing to \ref{eq:regular representation}, and so many operations on our codes' binary matrices can be reduced to operations on the ring. For example, linear combinations of vectors in $\F^{|G|}_2$ corresponds to linear combinations of ring elements.

For a matrix $M$ over $\Ring$, define
\begin{align}
        M^\dagger := \overline{M}^\transp,
\end{align}
that is, transpose together with ring involution. A protomatrix is a small matrix over $\Ring$, such as $A\in \Ring^{r\times c}$.. ``Lifting'' $A$ to binary involves replacing each group-ring entry in $A$ with its regular representation, giving a binary matrix of size $r|G|\times c|G|$.  If the entries of $A$ have bounded coefficient weight, then the expanded matrix is sparse since the regular representation consists of permutation matrices with unit row and column weights.
For our purposes, we will be interested in $1\times2$ protomatrices, i.e. $r=1$ and $c=2$, for the classical parity-check matrices in Appendix \ref{app:classical 2BGA codes} and $2\times5$ protomatrices for the quantum CSS parity-check matrices in Appendix \ref{app:ZSZ-LP codes}.

%%%%%%%%%%%%%%%%%%%%%%%%%%%%%%%%%%%%%%%%%%%%%%%%%%%%%%%%%%%%%

\section{Classical $G$-lifted LDPC codes}
\label{app:classical 2BGA codes}

A classical $G$-lifted LDPC code is nothing more than taking a protomatrix and lifting it into a binary parity-check matrix through $\Ring=\F_2[G]$. For protomatrix $H_{\rm proto} \in \Ring^{r\times c}$ and lift size $|G|$, we obtain a (left) parity-check matrix $H = L[H_{\rm proto}] \in \F^{r|G|\times c|G|}_2$. The code rate $k/n$ is lower-bounded by the \emph{design rate}:
\begin{align}
    \frac{k}{n} \geq \frac{n-\text{\#rows}(H)}{n} = \frac{c|G|-r|G|}{c|G|} = 1-\frac{r}{c} \, ,
\end{align}
which is nonzero when $r<c$. If $H$ has full (row) rank, then the rate is equal to the design rate. For our scenario of interest ($r=1$, $c=2$), our classical lifted LDPC codes will have design rate 1/2. Since $H_{\rm proto}$ consists of two block matrices, we will call these codes classical two-block group-algebra (2BGA) codes. We will also focus on the scenario where the group-algebra elements have coefficient weight 3, i.e. trinomials.

For a $1\times2$ protomatrix, the parity-check matrix takes the two-block form
\begin{align}\label{eq:2BGA H}
    H = \begin{pmatrix}
        \,A & B\,
    \end{pmatrix}
\end{align}
where $A,B\in\F^{|G|\times|G|}_2$ are regular representations of $a,b\in\Ring$ respectively, with $\wt(a)=\wt(b)=\wt(A)=\wt(B)=3$. Here $\wt(A)$ denotes the maximum row and column weight of binary matrix $A$. With slight abuse of notation, we use $H$ for both its ring and binary presentations, with distinction given by context. For abelian $G$, there is only one kind of regular representation. For non-abelian $G$, we need to distinguish between $L[a]$ and $R[a]$ for $A$. As proposed in the main text, we define $H_{\rm left} := L[H]$ and $H_{\rm right} := R[H]$.

\subsection{Abelian lifts and low distance}

\begin{prop}[Abelian $G$ implies constant code distance]
\label{prop:abelian distance bound}
    Suppose $\mathcal{C}$ is a linear code whose parity-check matrix is given by \eqref{eq:2BGA H} with abelian lift group $G$ and nonzero group-algebra elements $a,b \in \Ring$. Then $d(\mathcal{C}) \leq \wt(a)+\wt(b)$.
\end{prop}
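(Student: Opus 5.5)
The plan is to exhibit an explicit nonzero codeword of weight at most $\wt(a)+\wt(b)$, namely the ``syzygy'' codeword from the main text. Since $G$ is abelian, the ring $\Ring=\F_2[G]$ is commutative and there is a single regular representation, which we write as $L[\cdot]$. We work directly in the ring picture: a vector in $\F_2^{2|G|}$ is identified with a pair $(u,v)\in\Ring^2$. Under this identification the code is $\mathcal{C}=\{(u,v): au+bv=0\}$, because applying $H=(A\;\,B)$ to $(u,v)$ gives the vector corresponding to $au+bv$.

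The candidate codeword is $(u,v)=(b,a)$. Since $\Ring$ is commutative and $-1=1$ over $\F_2$, we have $ab+ba=2ab=0$, so $(b,a)\in\mathcal{C}$. This codeword is nonzero because $a$ and $b$ are assumed nonzero. Its Hamming weight is the total number of nonzero coefficients, which is $\wt(b)+\wt(a)$. Hence $d(\mathcal{C})\le \wt(a)+\wt(b)$.

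The only point needing care is the ring-versus-matrix bookkeeping. One must check that the binary column vector of $L[p]$ applied to $\ket{1}$ is $p$, and that $L[a]\,\mathrm{vec}(u)=\mathrm{vec}(au)$. Both follow from the identity $L[g]\ket{h}=\ket{gh}$ together with linearity. Equivalently, in matrix form this is the statement that every row of $G=(B^\transp\;\,A^\transp)$ satisfies $HG^\transp=AB+BA=L[ab+ba]=0$. This uses $L[a]L[b]=L[ab]$ and commutativity, and it mirrors the main-text argument. A row of $G$ has weight $\wt(b)+\wt(a)$ because transposes of permutation-sum matrices have the same row weights. I do not expect any real obstacle; the argument is essentially immediate once the ring identification is set up.
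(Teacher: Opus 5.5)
Your proposal is correct and follows the paper's Koszul-syzygy argument. The paper exhibits the protomatrix row $S=(\bar{b}\;\;\bar{a})$ with $HS^\dagger=ab+ba=0$, and the lifted row of $S$ at the identity is exactly your codeword $(b,a)$, which has weight $\wt(a)+\wt(b)$.
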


\begin{proof}
    Let $S=(\bar{b}\;\;\bar{a})$ be an $\Ring$-valued matrix of weight $\wt(S) = \wt(\bar{a}) + \wt(\bar{b}) = \wt(a)+\wt(b)$. Then $HS^\dagger=ab+ba=0$ since $G$ is abelian, which implies $HS^\transp=0$ for their regular representations. Thus, the rows of $S$ are contained in $\mathcal{C}$, and we conclude that $d(\mathcal{C}) \leq \wt(a)+\wt(b)$.
\end{proof}

In the context of abelian group algebras, $S=(\bar{b}\;\;\bar{a})$ is known as a Koszul syzygy of $H$. When $a$ and $b$ are trinomials, then we have $d\le6$. Another way to see this low-weight syzygy is to observe that when $G$ is abelian, \eqref{eq:2BGA H} is simply the $X$-check matrix $H_X$ of a quantum 2BGA code \cite{2BGA_codes}. The Koszul syzygy is then precisely the $Z$-check matrix $H_Z$. This argument also extends to the non-abelian case if we use opposite-type regular representations such as $A=L[a]$ and $B=R[b]$ since $[A,B]=0$ from associativity of left and right group multiplication. The only way around is to have non-abelian $G$ as well as $A$ and $B$ be both left-regular or both right-regular representations such that $AB\neq BA$.

\subsection{Distance upper bounds for ZSZ groups}
\label{app:ZSZ distance bounds}

As argued in the main text and shown formally in Proposition \ref{prop:abelian distance bound}, abelian groups lead to poor distance of the resulting classical 2BGA codes, which motivates the use of non-abelian groups such as ZSZ. However, some non-abelian groups are ``almost abelian'', quantified by the size of their commutator subgroup. The smaller the commutator subgroup is, the closer the group is to being abelian. Intuitively, we would expect the minimum distances of classical 2BGA codes built from ``almost abelian'' groups to also be small. We formalize this intuition by using commutator subgroups. We will focus on left 2BGA codes with parity-check matrix $H_{\rm left} = (L[a] \;\;L[b])$, but we note that the proof techniques easily generalize to $H_{\rm right}$ by interchanging the side of multiplication.

\begin{thm}[Commutator-subgroup distance bound]
\label{thm:commutator distance bound}
    Let $G$ be a finite group, let $\Ring=\F_2[G]$, and let $a,b\in\Ring$ be nonzero. Consider the classical left 2BGA code with parity-check matrix $H_{\rm left} = (L[a] \;\;L[b])$.
    Let $[G,G]$ be the commutator subgroup. Then
    \begin{align}\label{eq:commutator subgroup distance bound}
        d \leq \big(\!\wt(a)+\wt(b)\big)\,\big|[G,G]\big| .
    \end{align}
\end{thm}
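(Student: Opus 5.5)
The idea is to generalize the Koszul syzygy of Proposition \ref{prop:abelian distance bound} by ``averaging'' it over the commutator subgroup, so that commutators vanish. Let $N=[G,G]$, which is normal in $G$, and set $\sigma_N:=\sum_{n\in N} n\in\Ring$. Since $N$ is normal, $g\sigma_N=\sigma_N g$ for every $g\in G$, so $\sigma_N$ is central in $\Ring$. Moreover, for any $g,h\in G$ we have $gh=hg\,[g^{-1},h^{-1}]$-type relations, i.e.\ $gh\in hgN$. Because $nN=N$, this gives $gh\,\sigma_N=hg\,\sigma_N$. By bilinearity, $pq\,\sigma_N=qp\,\sigma_N$ for all $p,q\in\Ring$. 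In other words, $\sigma_N$ annihilates all ring commutators.

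For the candidate codeword, I will use the right-multiplied Koszul syzygy. Recall that a vector $(u\;\;v)\in\Ring^2$, identified with a binary vector via $\ket{h}\leftrightarrow h$, satisfies $H_{\rm left}(u,v)^\transp=0$ exactly when $au+bv=0$ in $\Ring$, since $L[a]$ acts as left multiplication by $a$. Take $u=b\,\sigma_N$ and $v=a\,\sigma_N$. Then
\begin{align}
    au+bv=(ab+ba)\sigma_N=2\,ab\,\sigma_N=0
\end{align}
over $\F_2$. Here I use $ba\,\sigma_N=ab\,\sigma_N$ from the commutator-annihilation property. So $(b\sigma_N\;\;a\sigma_N)$ is a codeword.

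To bound the weight, note that $\wt(b\sigma_N)\le\wt(b)\,|N|$, since each monomial $g$ of $b$ contributes the coset $gN$ of size $|N|$. Similarly $\wt(a\sigma_N)\le\wt(a)|N|$. The total weight is therefore at most $(\wt(a)+\wt(b))|N|$.

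The main obstacle is nonvanishing: the vector could be zero, for instance if $b$ has an even number of monomials in every coset of $N$, i.e.\ if $b\sigma_N=0$. If $b\sigma_N\neq0$ or $a\sigma_N\ne0$, we are done. Otherwise both $a$ and $b$ lie in the left annihilator of $\sigma_N$, which is the kernel of the augmentation map $\Ring\to\F_2[G/N]$ (each coset has even coefficient sum).

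In that degenerate case I would instead use a codeword supported on a single block. Pick any $g$ and consider $u=g\sigma_N$, $v=0$. Then $au=ag\sigma_N=g\,(g^{-1}ag)\sigma_N$. Since $g^{-1}ag\equiv a$ modulo $N$-cosets, we get $g^{-1}ag\,\sigma_N=a\sigma_N=0$, so $au=0$. This gives a nonzero codeword $(g\sigma_N\;\;0)$ of weight $|N|\le(\wt(a)+\wt(b))|N|$, which settles the bound.

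I would verify the coset identity carefully. The computation $a g\sigma_N=g(g^{-1}ag)\sigma_N$, together with $g^{-1}ag\,\sigma_N=a\sigma_N$ (conjugation preserves $N$-cosets because $G/N$ is abelian), is the key step, and I expect checking it to be the only delicate point.
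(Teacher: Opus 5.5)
Your proof is correct and takes the same approach as the paper: average the Koszul syzygy against $\sigma_N=\sum_{n\in[G,G]}n$ to get the codeword $(b\sigma_N\;\;a\sigma_N)$ of weight at most $(\wt(a)+\wt(b))\,|[G,G]|$, and fall back on a single-block codeword if this vector vanishes. The only difference is in the degenerate case, where you can take $g=1$: $(\sigma_N\;\;0)$ is a codeword directly because $a\sigma_N=0$, which is what the paper does, so the conjugation step you flagged as delicate is not needed.
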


\begin{proof}
    We identify a length-$\abs{G}$ binary column vector with an element of the group algebra $\Ring$. Under this identification, a pair $(u,v)\in\Ring^2$ is a codeword of $H_{\rm left} = (L[a]\;\;L[b])$ precisely when
    \begin{align}
        au+bv=0 .
    \end{align}
    Let $\mathcal{D} = [G,G]$ denote the commutator subgroup. Define the ``Casimir'' element
    \begin{align}
        \Omega_{\mathcal{D}}:=\sum_{\delta\in\mathcal{D}}\delta \in \Ring .
    \end{align}
    Since $G/\mathcal{D}$ is abelian by definition, for every $g,h\in G$ there exists some $\delta(g,h)\in\mathcal{D}$ such that
    \begin{align}
        gh=hg\,\delta(g,h).
    \end{align}
    Also, since $\Omega_\mathcal{D}$ is the sum of all elements in $\mathcal{D}$, left multiplication by any element of $\mathcal{D}$ simply permutes the terms in $\Omega_{\mathcal{D}}$; i.e. $\delta\Omega_{\mathcal{D}}=\Omega_{\mathcal{D}}$. Therefore
    \begin{align}
        gh\Omega_{\mathcal{D}}
        =hg\,\delta(g,h)\Omega_{\mathcal{D}}
        =hg\Omega_{\mathcal{D}} .
    \end{align}
    By bilinearity, this implies
    \begin{align}
        ab\Omega_{\mathcal{D}}=ba\Omega_{\mathcal{D}},
    \end{align}
    or equivalently, since we work over the binary field $\F_2$,
    \begin{align}\label{eq:commutator averaged syzygy}
        (ab+ba)\Omega_{\mathcal{D}}=0 \, .
    \end{align}
    It follows that
    \begin{align}
        a\big(b\Omega_{\mathcal{D}}\big)+b\big(a\Omega_{\mathcal{D}}\big)
        =(ab+ba)\Omega_{\mathcal{D}}
        =0 \, ,
    \end{align}
    and so the vector
    \begin{align}
        v_{\mathcal{D}} := \big(b\Omega_{\mathcal{D}}, \;a\Omega_{\mathcal{D}}\big)
    \end{align}
    is a codeword in $\ker H_{\rm left}$. Its weight is bounded by
    \begin{align}
        \wt(v_{\mathcal{D}})
        &\leq \wt\!\big(b\Omega_{\mathcal{D}}\big) + \wt\!\big(a\Omega_{\mathcal{D}}\big) \notag \\
        &\leq \wt(b)\abs{\mathcal{D}} + \wt(a)\abs{\mathcal{D}} \notag \\
        &=\big(\!\wt(a)+\wt(b)\big)\abs{\mathcal{D}} \, ,
    \end{align}
    where in the second line we used the fact that the support of $a\Omega_{\mathcal{D}}$ is contained in the union of at most $\wt(a)$ left cosets of $\mathcal{D}$, each of size $\abs{\mathcal{D}}$; cancellations over $\F_2$ can only reduce the weight.

    There is an edge/degenerate case where $v_\mathcal{D}=0$, corresponding to $a\Omega_{\mathcal{D}} = b\Omega_{\mathcal{D}} = 0$. Since $a$ and $b$ are not both zero by construction, the vector $u_\mathcal{D} = (\Omega_{\mathcal{D}},0)$ is then a nonzero codeword because
    \begin{align}
        H u_\mathcal{D} = a\Omega_{\mathcal{D}}+b\cdot0=0 \, ,
    \end{align}
    and it too has weight $\abs{u_\mathcal{D}} = |\mathcal{D}| \leq (\wt(a)+\wt(b))\abs{\mathcal{D}}$.
\end{proof}

\begin{cor}[ZSZ commutator-subgroup distance bound]
\label{cor:ZSZ commutator distance bound}
    For $G = \ZSZ_{\ell_1,\ell_2,q} = \mathbb{Z}_{\ell_1} \rtimes_q \mathbb{Z}_{\ell_2}$,
    \begin{align}
        [G,G]=\langle x^{q-1}\rangle\;,\qquad \big|[G,G]\big| = \frac{\ell_1}{\gcd(\ell_1,q-1)} ,
    \end{align}
    and hence
    \begin{align}\label{eq:ZSZ commutator distance bound}
        d \leq \big(\!\wt(a)+\wt(b)\big)\frac{\ell_1}{\gcd(\ell_1,q-1)} .
    \end{align}
\end{cor}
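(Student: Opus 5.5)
The plan is to compute the commutator subgroup of $G=\ZSZ_{\ell_1,\ell_2,q}$ explicitly and then invoke Theorem \ref{thm:commutator distance bound}. The bound \eqref{eq:ZSZ commutator distance bound} follows at once from \eqref{eq:commutator subgroup distance bound} once we know $|[G,G]|=\ell_1/\gcd(\ell_1,q-1)$. So all of the work lies in identifying $[G,G]=\langle x^{q-1}\rangle$.

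First I would show $\langle x^{q-1}\rangle\subseteq[G,G]$. From the presentation \eqref{eq:ZSZ presentation} we have $yxy^{-1}=x^q$, so the commutator $yxy^{-1}x^{-1}=x^{q-1}$ lies in $[G,G]$. Since $[G,G]$ is a subgroup, it contains the whole cyclic group $\langle x^{q-1}\rangle$.

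Next I would show the reverse inclusion by exhibiting an abelian quotient. Let $N=\langle x^{q-1}\rangle$. This is a subgroup of the cyclic normal subgroup $\langle x\rangle$, hence characteristic there, and so it is normal in $G$. One can also check this directly: $y x^{q-1} y^{-1}=x^{q(q-1)}\in N$. In $G/N$ the relation becomes $yxy^{-1}=x^q=x\cdot x^{q-1}\equiv x$, so the images of $x$ and $y$ commute. A group generated by two commuting elements is abelian, so $G/N$ is abelian and therefore $[G,G]\subseteq N$.

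Finally I would compute the order. Since $x$ has order exactly $\ell_1$ in $G$ (the canonical form $x^ay^b$ enumerates $\ell_1\ell_2$ distinct elements), the element $x^{q-1}$ has order $\ell_1/\gcd(\ell_1,q-1)$ by the standard cyclic-group fact. Substituting this into Theorem \ref{thm:commutator distance bound} gives the claimed bound.

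I do not expect a real obstacle here. The only point that needs care is the justification that $x$ has order exactly $\ell_1$, and that is supplied by the size $\ell=\ell_1\ell_2$ of the group stated with the presentation.
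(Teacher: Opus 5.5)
Your proposal is correct and follows essentially the same route as the paper. The commutator $yxy^{-1}x^{-1}=x^{q-1}$ gives $\langle x^{q-1}\rangle\subseteq[G,G]$; normality of $\langle x^{q-1}\rangle$ together with the abelian quotient gives the reverse inclusion; and the cyclic order count is then substituted into Theorem~\ref{thm:commutator distance bound}. Your added justification that $x$ has order exactly $\ell_1$ (from $|G|=\ell_1\ell_2$) is a small point the paper takes for granted.
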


\begin{proof}
    We specialize Theorem \ref{thm:commutator distance bound} to $G=\ZSZ_{\ell_1,\ell_2,q}$. The relation $yxy^{-1}=x^q$ implies
    \begin{align}
        yxy^{-1}x^{-1}=x^{q-1} \, ,
    \end{align}
    so $\langle x^{q-1}\rangle\subseteq [G,G]$. Conversely, $\langle x^{q-1}\rangle$ is normal in $G$ since $yx^{q-1}y^{-1} = (x^{q-1})^q \in \langle x^{q-1}\rangle$, and so after quotienting $G$ by $\langle x^{q-1}\rangle$, the relation $yxy^{-1}=x^q$ becomes $yxy^{-1}=x$; hence the quotient is abelian and so $[G,G]\subseteq \langle x^{q-1}\rangle$. Therefore $[G,G]=\langle x^{q-1}\rangle$. Since $x$ has order $\ell_1$, the subgroup generated by $x^{q-1}$ has size $\ell_1/\gcd(\ell_1,q-1)$, giving \eqref{eq:ZSZ commutator distance bound}.
\end{proof}

For trinomials, $\wt(a)=\wt(b)=3$, so Corollary \ref{cor:ZSZ commutator distance bound} gives
\begin{align}\label{eq:trinomial ZSZ dist bound}
    d \leq 6\abs{[G,G]} = 6\frac{\ell_1}{\gcd(\ell_1,q-1)} \leq 6\ell_1 .
\end{align}
Thus, when $\ell_1\ll\ell_2$, the distance of the classical two-block code is upper-bounded by a scale set by the small normal $\mathbb{Z}_{\ell_1}$ direction, even though the block length is $2\ell_1\ell_2$. The abelian case $q=1$ has trivial $[G,G]=\{1\}$ and recovers the syzygy bound $d \leq \wt(a)+\wt(b)$ of Proposition \ref{prop:abelian distance bound}. The same proof also gives the identical bound for the right-regular two-block code $H_{\rm right} = (R[a]\;\;R[b])$, with the averaged codeword written as $(\Omega_{[G,G]}b,\Omega_{[G,G]}a)$ under the convention that $R[\cdot]$ acts by right multiplication. Applying \eqref{eq:trinomial ZSZ dist bound} to the classical ZSZ-2BGA codes in Table \ref{tab:classical ZSZ code params} gives distance upper-bounds of 12, 12, 48, 78, 66, 186 respectively. So we see that \eqref{eq:trinomial ZSZ dist bound} can be quite tight for small-length ($n<100$) ZSZ-2BGA codes but is likely loose for the larger block lengths.

\subsection{Tanner graph cycles and girth bounds}
\label{app:classical tanner girth}

\begin{defn}[Tanner graph girth]
    Let $T = (V \cup C, E)$ be the bipartite Tanner graph of a parity-check matrix $H \in \mathbb{F}_2^{|C| \times |V|}$, whose two parts are the variable (bit/qubit) nodes $V$ and the check nodes $C$, with $(v,c) \in E$ if and only if $H_{cv} = 1$. Let $\partial_{ve} \in \mathbb{F}_2^{(|V|+|C|) \times |E|}$ be the vertex--edge incidence matrix of $T$, so that $\ker \partial_{ve}$ is the space of closed cycles on $T$. The girth of $T$ is then defined as
    \begin{equation}
        \operatorname{girth}(T) \;:=\; \min_{z \,\in\, \ker \partial_{ve} \setminus \{0\}} |z| ,
    \end{equation}
    with the convention $\operatorname{girth}(T) = \infty$ when $T$ is acyclic. Because $T$ is bipartite, every closed cycle alternates between $V$ and $C$ and therefore has even length; hence $\operatorname{girth}(T)$ is even and at least 4.
\end{defn}

We bound the girth of the left 2BGA code $H_{\rm left}=(L[a]\;\;L[b])$ from \eqref{eq:H_left,H_right}, whose Tanner graph is $(3,6)$-regular; the same statements hold verbatim for $H_{\rm right}=(R[c]\;\;R[d])$ with left and right multiplication interchanged. Throughout we write the trinomials as sums of distinct group elements $a=a_1+a_2+a_3$ and $b=b_1+b_2+b_3$, normalized so that $a_1=b_1=1$.

Before invoking any group structure, we note a generic graph-theoretic ceiling. Because the Tanner graph is $(3,6)$-regular and has only $\ell$ check nodes, the Moore bound on graph girth, based on counting the number of vertices within a fixed graph distance of any node until a cycle is closed, forces the girth to grow no faster than logarithmically in the block length: $\operatorname{girth}(H_{\rm left}) \leq 4\log_{10}\ell + O(1) = O(\log n)$. This bound is oblivious to both the group and the choice of trinomials. The remainder of this appendix will then quantify how the girth can be further limited by the arithmetic of the ZSZ group, which is the operative constraint at the finite block lengths of interest.

The starting point is the observation that every cycle of the Tanner graph corresponds to a relation among monomial transitions. A closed walk alternating between check and bit nodes advances from one check to the next by descending to a shared bit node and re-ascending; if the bit node lies in the block $A=L[a]$ and is incident to the monomials $a_i\neq a_j$, then the two check labels $c,c'$ it connects satisfy $c'=a_ja_i^{-1}c$, and analogously $c'=b_jb_i^{-1}c$ for the block $B=L[b]$. Thus each check-to-check move left-multiplies the check label by an element of the \emph{transition set}
\begin{align}\label{eq:classical transition set}
    \mathcal{T} := \big\{a_ia_j^{-1} : i\neq j\big\}\cup\big\{b_ib_j^{-1} : i\neq j\big\}\subseteq G\setminus\{1\} \, ,
\end{align}
and a cycle of length $2m$ is precisely a relation $s_m\cdots s_1=1$ with each $s_t\in\mathcal{T}$. The Tanner girth is therefore twice the length of the shortest such nontrivial relation. A single low-order transition already suffices to close a short cycle, which gives the following upper bound.

\begin{prop}[Element-order girth bound for left 2BGA codes]
\label{prop:classical-element-order-girth}
    Let $G$ be a finite group and let $a,b\in\F_2[G]$ be trinomials with distinct monomial terms. Then for every transition $s\in\mathcal{T}$ in \eqref{eq:classical transition set}, the Tanner graph of $H_{\rm left}=(L[a]\;\;L[b])$ contains a simple cycle of length $2\operatorname{ord}_G(s)$. Consequently,
    \begin{align}
        \operatorname{girth}(H_{\rm left})\leq 2\min_{s\in\mathcal{T}}\operatorname{ord}_G(s)\leq 2\max_{g\in G}\operatorname{ord}_G(g) \, .
    \end{align}
\end{prop}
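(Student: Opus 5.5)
Take a transition $s=a_ia_j^{-1}$ with $i\neq j$; the case $s=b_ib_j^{-1}$ is identical with $b$ in place of $a$. Let $m=\operatorname{ord}_G(s)\ge 2$; it is not $1$ because $a_i\neq a_j$. I will write down an explicit closed walk of length $2m$ and then check that it is simple.

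\textbf{The walk.} A bit node in block $A$ is labeled by $h\in G$. Since $L[a]\ket{h}=\sum_t\ket{a_th}$, bit $h$ is adjacent to the checks $a_1h,a_2h,a_3h$. Start at check $c_0=1$ and set $c_t=s^t$ for $t=0,\dots,m-1$. To pass from $c_t$ to $c_{t+1}=sc_t=a_ia_j^{-1}c_t$, use the $A$-block bit $h_t:=a_j^{-1}c_t$. This bit is adjacent to $a_jh_t=c_t$ and to $a_ih_t=c_{t+1}$. The walk is therefore
\begin{align}
c_0,\,h_0,\,c_1,\,h_1,\,\dots,\,c_{m-1},\,h_{m-1},\,c_m=s^m=1=c_0,
\end{align}
which has $2m$ edges.

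\textbf{Simplicity.} The checks $c_t=s^t$ for $0\le t<m$ are pairwise distinct because $m$ is the order of $s$. The bits $h_t=a_j^{-1}s^t$ are pairwise distinct for the same reason, since left multiplication by $a_j^{-1}$ is injective. All the $h_t$ lie in the same block, so no bit label from the other block can coincide with them. For $m\ge2$, consecutive edges are distinct: the two edges at $h_t$ go to $c_t$ and $c_{t+1}$, which differ. The two edges at $c_t$ go to $h_{t-1}$ and $h_t$, which also differ.

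The only case that needs extra care is $m=2$. The walk is then $c_0,h_0,c_1,h_1,c_0$, a 4-cycle on two distinct checks and two distinct bits. This requires each bit to be joined to each check by a single edge, which holds because the Tanner graph has no multi-edges: $a$ has distinct monomials, so $L[a]$ is a binary matrix with distinct nonzero entries. The walk is thus a simple cycle of length $2\operatorname{ord}_G(s)$. This is the step I expect to be the main obstacle, namely making sure the degenerate short-order cases really give genuine cycles. Distinctness of the monomials handles it.

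\textbf{The bound.} The girth is at most the length of any simple cycle, so $\operatorname{girth}(H_{\rm left})\le 2\operatorname{ord}_G(s)$ for every $s\in\mathcal{T}$. Minimizing over $\mathcal{T}$ gives the first inequality. The second inequality is immediate, since $\mathcal{T}\subseteq G$ gives $\min_{s\in\mathcal{T}}\operatorname{ord}_G(s)\le\max_{g\in G}\operatorname{ord}_G(g)$. The set $\mathcal{T}$ is nonempty because each trinomial has at least two distinct terms.
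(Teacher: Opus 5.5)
Your proof is correct and is essentially the paper's argument: the paper builds the same closed walk $c_k=s^kc_0$, $v_k=a_i^{-1}s^kc_0$ through a single bit block, and gets simplicity from the orbit of an element of order $\operatorname{ord}_G(s)$; you fix $c_0=1$ and swap the roles of $i$ and $j$, which changes nothing. One wording fix: distinct monomials matter because the permutation matrices $L[a_t]$ then have disjoint supports, so no entries of $L[a]$ cancel mod 2 and the edges you use actually exist; this is not about multi-edges, which a Tanner graph of a binary matrix never has.
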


\begin{proof}
    Fix a transition $s=a_ja_i^{-1}\in\mathcal{T}$ arising from the block $A=L[a]$; the argument for a transition of $B=L[b]$ is identical. Let $o=\operatorname{ord}_G(s)$, and note $o\geq2$ since $a_i\neq a_j$ implies $s\neq1$. We use the convention that $L[g]$ connects the bit vertex labeled $h$ to the check vertex labeled $gh$. Pick any check vertex $c_0\in G$ and define the check and bit vertices
    \begin{align}
        c_k = s^k c_0\, ,\qquad v_k = a_i^{-1}s^k c_0 \qquad (k=0,1,\ldots,o-1) \, ,
    \end{align}
    where each $v_k$ lies in the first bit block. Then $v_k$ is incident to
    \begin{align}
        a_i v_k = s^k c_0 = c_k\, ,\qquad
        a_j v_k = a_ja_i^{-1}s^k c_0 = s^{k+1}c_0 = c_{k+1}\, ,
    \end{align}
    with the index $k+1$ taken modulo $o$. Hence
    \begin{align}
        c_0 \,\text{--}\, v_0 \,\text{--}\, c_1 \,\text{--}\, v_1 \,\text{--}\, \cdots \,\text{--}\, c_{o-1} \,\text{--}\, v_{o-1} \,\text{--}\, c_0
    \end{align}
    is a closed walk of length $2o$. The check vertices $c_k=s^kc_0$ are pairwise distinct because they form the orbit of $c_0$ under the order-$o$ element $s$, and the bit vertices $v_k=a_i^{-1}s^kc_0$ are pairwise distinct for the same reason; all bit vertices moreover lie in the same block. The walk is therefore a simple Tanner cycle of length $2o$, so $\operatorname{girth}(H_{\rm left})\leq 2o$. Minimizing over $s\in\mathcal{T}$ yields the first inequality, and the second follows since $\operatorname{ord}_G(s)\leq\max_{g}\operatorname{ord}_G(g)$.
\end{proof}

Proposition \ref{prop:classical-element-order-girth} shows that the girth is throttled by the smallest order attained by any transition monomial. For random trinomials the transitions are essentially uniform group elements, and two effects then compete. Generic, accidental coincidences among the six transitions, e.g. an $A$-transition happening to equal a $B$-transition, occur at an expected rate that is independent of $\ell$ and matches that of a random $(3,6)$-regular ensemble; their density therefore vanishes as $\ell\to\infty$. We also have the group-induced short cycles of Proposition \ref{prop:classical-element-order-girth}: whenever the group contains many low-order elements, a random transition is likely to be one of them, and by the free right action \eqref{eq:H_left symmetry} each such transition contaminates the graph with an entire orbit of $\Theta(\ell)$ short cycles. The girth of a random ZSZ-2BGA code is thus governed by the density of low-order elements, since a transition of order $j$ closes a cycle of length $2j$---an involution ($j=2$) a $4$-cycle, an order-$3$ element a $6$-cycle, and so on.

The low-order content of a ZSZ group is dictated by its parameters $\ell_1,\ell_2,q$ through the push-through relations \eqref{eq:push-through relations}. Writing a group element in canonical form $x^py^r$ and iterating \eqref{eq:push-through relations} gives $(x^py^r)^{k}=x^{p\Sigma_k}y^{kr}$ with $\Sigma_k=\sum_{t=0}^{k-1}q^{tr}$. Setting $\alpha=\ell_2/\gcd(\ell_2,r)$ for the order of $y^r$ and $\sigma_r=\sum_{t=0}^{\alpha-1}q^{tr}\bmod\ell_1$, one obtains the closed form
\begin{align}\label{eq:zsz element order}
    \operatorname{ord}_G(x^py^r) = \alpha\cdot\frac{\ell_1}{\gcd(\ell_1,\,p\,\sigma_r)} \, .
\end{align}
In particular, the number of involutions is
\begin{align}\label{eq:zsz involution count}
    \nu_2(G) = \mathbb{I}[\ell_1\text{ even}] + \mathbb{I}[\ell_2\text{ even}]\cdot\gcd\!\big(\ell_1,\,1+q^{\ell_2/2}\big) \, ,
\end{align}
where $\mathbb{I}[\cdot]$ is the indicator function that evaluates to 1 if its argument is true and 0 otherwise, and the analogous count $\nu_3(G)$ of order-$3$ elements is governed by the cyclotomic-like sums $\gcd(\ell_1,\,1+q^r+q^{2r})$. The involution count \eqref{eq:zsz involution count} exposes a sharp degeneracy: whenever $\ell_2$ is even and $q^{\ell_2/2}\equiv-1\pmod{\ell_1}$, like in the dihedral case $\ell_2=2,\,q=\ell_1-1$, the gcd jumps to $\ell_1$, saturating the group with $\Theta(\ell_1)$ involutions and forcing girth $4$ for almost every trinomial pair.

\begin{cor}[Dihedral girth--distance dichotomy]
\label{cor:dihedral dichotomy}
    Let $G = \mathbb{Z}_{\ell_1}\rtimes_q\mathbb{Z}_2$ with $q=\ell_1-1$ be the dihedral group of order $\ell=2\ell_1$, and let $a,b\in\F_2[G]$ have at least two distinct monomial terms. Then
    \begin{align}
        \operatorname{girth}(H_{\rm left})=4
        \qquad\text{or}\qquad
        d \leq \wt(a)+\wt(b) \, .
    \end{align}
    In particular, a dihedral 2BGA code can never simultaneously achieve non-minimal girth and distance exceeding the abelian bound.
\end{cor}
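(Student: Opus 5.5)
The plan is to split on whether some transition in the set $\mathcal{T}$ of \eqref{eq:classical transition set} is a reflection. The first branch gives girth $4$ through Proposition \ref{prop:classical-element-order-girth}. The second branch forces $a$ and $b$ into the commutative subalgebra $\F_2[\langle x\rangle]$, where the Koszul-syzygy argument of Proposition \ref{prop:abelian distance bound} applies.

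First I would record the structure of $G=\mathbb{Z}_{\ell_1}\rtimes_{\ell_1-1}\mathbb{Z}_2$. Every element has canonical form $x^p$ (a rotation) or $x^py$ (a reflection). By the push-through relation \eqref{eq:push-through x right},
\begin{align}
(x^py)^2 = x^{p+qp}y^2 = x^{p(1+q)} = x^{p\ell_1} = 1 ,
\end{align}
so every reflection is an involution. This agrees with \eqref{eq:zsz element order}, since $\sigma_1=1+q\equiv0$. The rotation subgroup $N=\langle x\rangle$ is normal of index $2$, so its left and right cosets coincide. Next I would normalize $a_1=b_1=1$ as in the text. Replacing $a$ by $a a_1^{-1}$ gives $L[a]=L[a a_1^{-1}]\,L[a_1]$, so this only relabels the columns of the first block by a permutation. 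Doing the same for $b$ preserves both the Tanner graph up to isomorphism and the code's weight distribution.

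Case 1: some transition $s=a_ia_j^{-1}$ or $b_ib_j^{-1}$ with $i\neq j$ lies outside $N$. Then $s$ is a reflection with $\operatorname{ord}_G(s)=2$. Proposition \ref{prop:classical-element-order-girth} then yields a simple $4$-cycle, and since Tanner girth is always at least $4$, we get $\operatorname{girth}(H_{\rm left})=4$. The hypothesis that $a$ and $b$ have at least two distinct terms guarantees that $\mathcal{T}$ is nonempty and that $s\neq 1$.

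Case 2: every transition lies in $N$. Then for each pair $i,j$ we have $a_ia_j^{-1}\in N$, so all terms of $a$ lie in the single coset $Na_1=N$, and likewise for $b$. Hence $a,b\in\F_2[N]$. Since $N$ is cyclic, this subalgebra is commutative, so $ab=ba$. Under the identification of vectors with group-algebra elements used in the proof of Theorem \ref{thm:commutator distance bound}, the pair $(u,v)=(b,a)$ satisfies $au+bv=ab+ba=0$. It is therefore a codeword, and it is nonzero because $a,b\neq0$. Its weight is $\wt(a)+\wt(b)$, which gives $d\le\wt(a)+\wt(b)$. This is exactly the Koszul syzygy of Proposition \ref{prop:abelian distance bound}, now applied inside the abelian subalgebra rather than the whole group algebra.

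The main point needing care is the coset bookkeeping in Case 2: the conclusion that all transitions lie in $N$ really does force $a$ and $b$ into the same commutative subalgebra. This is why I normalize first. Without normalization one would have $a=\alpha g$ and $b=\beta h$ with $\alpha,\beta\in\F_2[N]$ and possibly different cosets $g,h$. The column-permutation remark removes this issue cleanly, so I would place it before the case split. Everything else is a direct citation of Proposition \ref{prop:classical-element-order-girth} together with the order computation for reflections.
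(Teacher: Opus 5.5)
Your proof is correct and follows the paper's argument. The case split is the same: a reflection transition gives a Tanner $4$-cycle via Proposition~\ref{prop:classical-element-order-girth}, and otherwise $a,b\in\F_2[\langle x\rangle]$, so the Koszul syzygy $(b,a)$ bounds the distance. The only cosmetic difference is that you check $(b,a)$ is a codeword directly in $\F_2[G]$ using commutativity of the rotation subalgebra, whereas the paper splits the Tanner graph into its two coset components and applies Proposition~\ref{prop:abelian distance bound} to one cyclic copy, which produces the same codeword.
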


\begin{proof}
    Without loss of generality, we work with respect to the normalization $a_1=b_1=1$. Decompose the dihedral group into rotations and reflections: $G=\langle x\rangle\cup\langle x\rangle y$. By the push-through relations \eqref{eq:push-through relations} with $q=-1$, every reflection is an involution: $(x^py)^2=x^{p-p}y^2=1$. First suppose some trinomial contains a reflection, say $a_j\in\langle x\rangle y$. The transition $a_ja_1^{-1}=a_j\in\mathcal{T}$ then has order 2, so Proposition \ref{prop:classical-element-order-girth} produces a Tanner $4$-cycle, and hence $\operatorname{girth}(H_{\rm left})=4$. Otherwise, every monomial is a rotation, i.e. $a,b\in\F_2[\langle x\rangle]$. Left multiplication by $\langle x\rangle$ preserves the two cosets $\langle x\rangle$ and $\langle x\rangle y$, so the Tanner graph of $H_{\rm left}$ splits into two disconnected components, each a copy of a 2BGA code over the cyclic group $\langle x\rangle\cong\mathbb{Z}_{\ell_1}$. Proposition \ref{prop:abelian distance bound} applied to either copy then gives $d\leq\wt(a)+\wt(b)$.
\end{proof}

Corollary \ref{cor:dihedral dichotomy} thus informs us that we should probably discount dihedral groups from consideration. Conversely, taking $\ell_1$ and $\ell_2$ both odd eliminates all involutions outright. Table \ref{tab:classical low-order census} lists the order-$2$ and order-$3$ densities $\nu_2/\ell$ and $\nu_3/\ell$ for the six classical ZSZ-2BGA seed codes, alongside their left Tanner girths. The two codes that attain girth $8$, ZSZ-2BGA-220 and ZSZ-2BGA-310, are precisely those for which both densities are negligible, whereas ZSZ-2BGA-156 is pinned at girth $6$ by its large population of order-$3$ elements ($\nu_3/\ell=1/3$) despite carrying only a single involution.

\begin{table}[t]
\centering\renewcommand{\arraystretch}{1.1}
\begin{tabular}{ccccccc}
\toprule
\textbf{Code name} & $(\ell_1,\ell_2,q)$ & $\nu_2$ & $\nu_3$ & $\nu_2/\ell$ & $\nu_3/\ell$ & \textbf{Girth} \\ \midrule
ZSZ-2BGA-64 (left)  & $(16,2,9)$ & 3 & 0  & 0.094 & 0.000 & 6 \\
ZSZ-2BGA-96 (left)  & $(12,4,7)$ & 3 & 2  & 0.062 & 0.042 & 6 \\
ZSZ-2BGA-128 (left) & $(16,4,3)$ & 3 & 0  & 0.047 & 0.000 & 6 \\
ZSZ-2BGA-156 (left) & $(26,3,3)$ & 1 & 26 & 0.013 & 0.333 & 6 \\
ZSZ-2BGA-220 (left) & $(22,5,3)$ & 1 & 0  & 0.009 & 0.000 & 8 \\
ZSZ-2BGA-310 (left) & $(31,5,2)$ & 0 & 0  & 0.000 & 0.000 & 8 \\ \bottomrule
\end{tabular}
\caption{Low-order element census for the left ZSZ-2BGA codes of Table \ref{tab:classical ZSZ code params}. For each group $\ZSZ_{\ell_1,\ell_2,q}$ of order $\ell=\ell_1\ell_2$, we list the number of order-$2$ elements $\nu_2$ and order-$3$ elements $\nu_3$ obtained from \eqref{eq:zsz element order}, their densities $\nu_2/\ell$ and $\nu_3/\ell$, and their Tanner girths.}
\label{tab:classical low-order census}
\end{table}

These observations suggest a simple heuristic for maximizing the attainable girth: choose the ZSZ parameters $(\ell_1,\ell_2,q)$ so that low-order elements are scarce. Choosing the group order $\ell=\ell_1\ell_2$ coprime to $6$ eliminates every element of order $2$ or $3$ via \eqref{eq:zsz element order}, so that no monomial transition can force a $4$- or $6$-cycle; the only short cycles that remain are accidental cross-block coincidences, whose vanishing density makes girth $6$ and $8$ progressively easier to attain as $\ell$ grows. When divisibility by $2$ or $3$ is unavoidable, one should at least steer clear of the twist degeneracies $q^{\ell_2/2}=-1$ and $1+q^r+q^{2r}\equiv0\pmod{\ell_1}$, which saturate $G$ with $\Theta(\ell_1)$ low-order elements and plague the code with 4-cycles and 6-cycles. This girth criterion is complementary to the distance bound of Corollary \ref{cor:ZSZ commutator distance bound}: whereas a large minimum distance calls for a large commutator subgroup $[G,G]=\langle x^{q-1}\rangle$, a large girth calls for a small population of low-order elements, and the two are largely independent knobs on the ZSZ group algebra. The $\text{ZSZ}_{31,5,2}$ group of ZSZ-2BGA-310, whose order $155$ is coprime to $6$, is particularly favorable on both counts.

\subsection{Equivariant codewords}
\label{app:codeword orbits}

In this appendix, we explicitly report orbit generators for both one-generator-minimal (OGM) and one-generator-systematic (OGS) codeword bases for each classical ZSZ-2BGA code in Table \ref{tab:classical ZSZ code params}. The OGM and OGS codewords then translate to weight-$d$ and symplectic logical Pauli operators of the quantum ZSZ-LP codes respectively through the embedding \eqref{eq:codeword embedding}. We also discuss conditions for when an OGS basis exists, which when compared to the minimum-weight basis, provides a tradeoff between sparsity and utility. For an element $x^ay^b$ in a ZSZ group with parameters $(\ell_1,\ell_2,q)$, we assign the integer $x^ay^b \mapsto a\ell_2+b \in [0,\ell)$, which we use to construct the regular representations, i.e. index the columns of the subblocks in \eqref{eq:H_left,H_right} and \eqref{eq:5-block H_X,H_Z}.

We use the convention
\begin{align}
    G_{\rm left}=(L[u_L]\;\;L[v_L])\, ,\qquad
    G_{\rm right}=(R[u_R]\;\;R[v_R])\, ,
\end{align}
where the rows of $G_{\rm left}$ and $G_{\rm right}$ are the automorphism orbits of the corresponding left and right seed codewords. Explicitly, under the identification of vector coordinates with group elements, the row of $G_{\rm left}$ at group element $g$ is the ring pair $(\bar{u}_L\,g\;\;\bar{v}_L\,g)$, and that of $G_{\rm right}$ is $(g\,\bar{u}_R\;\;g\,\bar{v}_R)$, since the rows of the lifted blocks $L[p]$ and $R[p]$ correspond to the ring elements $\bar{p}\,g$ and $g\,\bar{p}$ respectively. These generator polynomials represent valid codewords when
\begin{align}
    G^{}_{\rm left} H_{\rm left}^{\transp}=0\, ,\qquad
    G^{}_{\rm right} H_{\rm right}^{\transp}=0\, .
\end{align}
Equivalently, in the ZSZ group algebra,
\begin{align}\label{eq:generator validity}
    u_L\bar{a}+v_L\bar{b}=0\, ,\qquad
    \bar{c}\,u_R+\bar{d}\,v_R=0\, ,
\end{align}
where the second identity takes into account that our convention of right-regular representation is anti-homomorphic under row multiplication: $R[g]R[h] = R[hg]$.

\subsubsection{Minimum-weight generators}

Table \ref{tab:OGM generators} lists the polynomial pairs used for the minimum-weight orbit generators. These polynomials were found via numerical search for weight-$d$ codewords using \textsf{QDistEvol}. For the first five codes, the orbit rank equals $k_{\rm left} = k_{\rm right} = \ell_1\ell_2 = \ell$. For ZSZ-LP-775, the displayed single orbit has rank $140$ for the left code and $129$ for the right code, so additional orbit generators would be needed to span the full classical codespace.

\begin{table}[t]
\centering
\resizebox{\textwidth}{!}{%
\begin{tabular}{ccc}
\toprule
\textbf{Code name} & \textbf{$u_L$} & \textbf{$v_L$} \\ \midrule
ZSZ-LP-160 & $x^8+x^{11}+(x^4+x^8+x^{13})y$ & $1+x^{10}+x^{15}+(x^3+x^{11})y$ \\
ZSZ-LP-240 & $x^4+x^6y+(x^8+x^9)y^2+x^4y^3$ & $x^3+x^4+x^7+(1+x^3+x^8+x^{10})y$ \\
ZSZ-LP-320 & $x^7+(1+x^7+x^{13})y+(x^4+x^7)y^2+(x^4+x^9+x^{14})y^3$ & $y+x^9y^2+(x^3+x^{10}+x^{13})y^3$ \\
ZSZ-LP-390 & $x+x^{14}+(x^8+x^{16}+x^{24})y+(x^2+x^3)y^2$ & $x^5+x^7+x^{25}+(x^4+x^8+x^{12})y+(x^{12}+x^{17}+x^{20})y^2$ \\
ZSZ-LP-550 & $x^{13}+x^{18}y+(x^8+x^{10})y^2+(1+x^3+x^7+x^{11}+x^{15})y^4$ & $x^{13}+x^{20}y+(x^5+x^8+x^{11}+x^{14}+x^{17})y^2+(x^4+x^{16})y^4$ \\
ZSZ-LP-775 & $x^{11}+x^{25}+x^{25}y+(x^{11}+x^{15})y^2+(1+x^2+x^3)y^3+(x^2+x^{17}+x^{27})y^4$ & $x^5+x^{11}+x^{13}+x^{13}y+(x^5+x^{20}+x^{25})y^3+(x+x^{12}+x^{22}+x^{27})y^4$ \\ \bottomrule
\end{tabular} } \\
\vspace{1em}
\resizebox{\textwidth}{!}{%
\begin{tabular}{ccc}
\toprule
\textbf{Code name} & \textbf{$u_R$} & \textbf{$v_R$} \\ \midrule
ZSZ-LP-160 & $x^3+x^8+x^{13}+(x^3+x^5+x^7+x^{10})y$ & $x^5+x^8+x^{10}y$ \\
ZSZ-LP-240 & $x^8+x^{10}+x^{11}+x^6y+x^4y^2$ & $x+x^6+x^8+x^{10}+y+(x+x^6)y^3$ \\
ZSZ-LP-320 & $x^8+x^9+x^6y+(x^2+x^7+x^{11}+x^{14})y^2+(x^{11}+x^{12})y^3$ & $x^8+(x^7+x^{12}+x^{14})y^2+x^{12}y^3$ \\
ZSZ-LP-390 & $x^8+x^{11}+x^{24}+(x^2+x^{23})y+(x^{14}+x^{21})y^2$ & $1+x^3+x^6+(x^4+x^9+x^{13}+x^{18})y+(x^{16}+x^{23})y^2$ \\
ZSZ-LP-550 & $x^5+x^{12}+(x^8+x^{13})y+(x^{12}+x^{17}+x^{18})y^2+(1+x+x^7+x^8)y^3$ & $x^{21}+(x^{14}+x^{15})y+(x^{11}+x^{19})y^2+(x^8+x^{17})y^3$ \\
ZSZ-LP-775 & $x^{22}+(x^2+x^3+x^{19}+x^{25}+x^{26})y+(x^7+x^{12}+x^{22})y^2+x^{27}y^3+(x^8+x^{24})y^4$ & $x^{12}+x^{16}+(x^2+x^{15}+x^{25})y+x^{27}y^2+(x^{13}+x^{14}+x^{20})y^3+x^{29}y^4$ \\ \bottomrule
\end{tabular} }
\caption{Minimum-weight OGM generator polynomials for candidate ZSZ-2BGA codes. The top table gives $G_{\rm left}=(L[u_L]\;\;L[v_L])$ for $\Aut(\mathbf{c_{\rm left}})$, while the bottom table gives $G_{\rm right}=(R[u_R]\;\;R[v_R])$ for $\Aut(\mathbf{c_{\rm right}})$.}
\label{tab:OGM generators}
\end{table}

\subsubsection{Systematic generators}
\label{app:systematic codewords}

We now discuss when the orbit generators can be brought into \emph{one-generator systematic} (OGS) form, where one block of the generator matrix is the identity:
\begin{align}\label{eq:OGS generators}
    G^{\rm sys}_{\rm left}=\big(\,\ident\;\;L[v^{\rm sys}_L]\,\big)\, ,\qquad
    G^{\rm sys}_{\rm right}=\big(\,\ident\;\;R[v^{\rm sys}_R]\,\big)\, .
\end{align}
Such a form is convenient for classical encoding: the identity block constitutes an information set, so the message appears verbatim in the first block of every codeword, while the parity block is generated by a single polynomial. Setting $u_L=1$ in the validity condition $u_L\bar{a}+v_L\bar{b}=0$ \eqref{eq:generator validity} forces $v^{\rm sys}_L\bar{b}=\bar{a}$, which admits the unique solution
\begin{align}
    v^{\rm sys}_L=\bar{a}\,\bar{b}^{-1}=\overline{b^{-1}a}
\end{align}
when $b$ is invertible in $\Ring$. Likewise, $v^{\rm sys}_R=\overline{c\,d^{-1}}$ exists when $d$ is invertible, and the mirrored forms with the identity in the second block require invertible $a$ or $c$ respectively. The corresponding seed codewords are $(1\;\;b^{-1}a)$ and $(1\;\;c\,d^{-1})$. Since the identity block guarantees rank $\ell$, an invertible $b$ (or $a$) certifies that the classical left code is one-generator, with no numerical search required. Conversely, when $b$ is not invertible, it is a zero divisor of the finite-dimensional algebra $\Ring$, and every nonzero $v\in\ker{L[b]}$ furnishes a single-block codeword $(0\;\;v)$; the minimum weight of this kernel then upper-bounds the code distance, a liability that a distance-filtered code search must implicitly weed out.

\begin{table}[t]
\centering\renewcommand{\arraystretch}{1.25}
\resizebox{\textwidth}{!}{%
\begin{tabular}{ccc}
\toprule
\textbf{Code name} & \textbf{$v^{\rm sys}_L$} & \textbf{$\wt(v^{\rm sys}_L)$} \\ \midrule
ZSZ-LP-160 & $1+x+x^4+x^6+x^7+x^9+x^{10}+x^{12}+x^{13}+x^{15}+(x^2+x^6+x^7+x^8+x^9+x^{10}+x^{14})y$ & 17 \\[0.2em] \Xhline{0.10pt} \noalign{\vskip 0.2em}
ZSZ-LP-240 & \makecell{$1+x+x^6+x^7+x^8+x^{10}+x^{11}+(1+x^2+x^6+x^9+x^{10})y$ \\ $+(x+x^3+x^5+x^6+x^7+x^8)y^2+(1+x+x^4+x^9+x^{10})y^3$} & 23 \\[0.2em] \Xhline{0.10pt} \noalign{\vskip 0.2em}
ZSZ-LP-320 & \makecell{$1+x^2+x^4+x^{11}+x^{14}+x^{15}+(x+x^3+x^4+x^7+x^9+x^{14})y$ \\ $+(1+x+x^2+x^3+x^5+x^6+x^7+x^8+x^9+x^{10}+x^{12})y^2+(x+x^6+x^8+x^9+x^{10}+x^{13})y^3$} & 29 \\[0.2em] \Xhline{0.10pt} \noalign{\vskip 0.2em}
ZSZ-LP-390 & \makecell{$x+x^2+x^3+x^4+x^9+x^{12}+x^{13}+x^{14}+x^{15}+x^{17}+x^{19}+x^{20}+x^{21}+x^{22}+x^{24}$ \\ $+(1+x^2+x^3+x^5+x^8+x^9+x^{11}+x^{12}+x^{14}+x^{18}+x^{20}+x^{23}+x^{24})y$ \\ $+(x^3+x^5+x^6+x^9+x^{10}+x^{12}+x^{13}+x^{14}+x^{15}+x^{16}+x^{17}+x^{19}+x^{20}+x^{22}+x^{23}+x^{24}+x^{25})y^2$} & 45 \\ \bottomrule
\end{tabular} } \\
\vspace{1em}
\resizebox{\textwidth}{!}{%
\begin{tabular}{ccc}
\toprule
\textbf{Code name} & \textbf{$v^{\rm sys}_R$} & \textbf{$\wt(v^{\rm sys}_R)$} \\ \midrule
ZSZ-LP-160 & $1+x+x^3+x^4+x^{10}+x^{12}+x^{13}+x^{15}+(x^3+x^4+x^6+x^7+x^9+x^{11}+x^{13}+x^{14}+x^{15})y$ & 17 \\[0.2em] \Xhline{0.10pt} \noalign{\vskip 0.2em}
ZSZ-LP-240 & \makecell{$1+x+x^6+x^7+x^8+x^{10}+x^{11}+(x^4+x^5+x^6+x^8+x^9+x^{10}+x^{11})y$ \\ $+(x^2+x^3+x^4+x^6+x^7+x^8+x^9)y^3$} & 21 \\[0.2em] \Xhline{0.10pt} \noalign{\vskip 0.2em}
ZSZ-LP-320 & \makecell{$1+x^2+x^4+x^8+x^{12}+x^{14}+(x^2+x^5+x^6+x^7+x^8+x^{10}+x^{12}+x^{13}+x^{14})y$ \\ $+(x+x^9+x^{11})y^2+(1+x^3+x^4+x^5+x^8+x^{10}+x^{11}+x^{12}+x^{14})y^3$} & 27 \\[0.2em] \Xhline{0.10pt} \noalign{\vskip 0.2em}
ZSZ-LP-390 & \makecell{$x+x^3+x^4+x^6+x^8+x^{11}+x^{12}+x^{13}+x^{14}+x^{18}+x^{20}+x^{21}+x^{23}$ \\ $+(x+x^3+x^5+x^8+x^9+x^{10}+x^{11}+x^{12}+x^{16}+x^{22}+x^{23}+x^{25})y$ \\ $+(x+x^6+x^7+x^9+x^{10}+x^{11}+x^{12}+x^{13}+x^{14}+x^{15}+x^{16}+x^{17}+x^{18}+x^{19}+x^{23}+x^{25})y^2$} & 41 \\ \bottomrule
\end{tabular} }
\caption{Systematic generator polynomials for the candidate codes admitting the OGS form \eqref{eq:OGS generators} along with their coefficient weights. The top table lists $v^{\rm sys}_L$ for $G^{\rm sys}_{\rm left}=(\ident\;\;L[v^{\rm sys}_L])$, while the bottom table lists $v^{\rm sys}_R$ for $G^{\rm sys}_{\rm right}=(\ident\;\;R[v^{\rm sys}_R])$. The corresponding codeword rows have weight $1+\wt(v^{\rm sys})$.}
\label{tab:OGS generators}
\end{table}

Whether the required units exist depends strongly on the underlying group. If the ZSZ group is a 2-group, i.e. $\ell=\ell_1\ell_2$ is a power of two, then $\Ring$ is a local ring whose unique maximal ideal is the augmentation ideal, and an element of $\Ring$ is invertible if and only if it has odd coefficient weight \cite{Passman_1977}. Every trinomial is then automatically invertible, and the OGS form always exists. Away from 2-groups, the unit density appears to decrease with the odd part of $\ell$: sampling 1000 random (normalized) trinomials for the groups underlying ZSZ-LP-240, 390, 550 and 775, we find that roughly $74\%$, $78\%$, $32\%$ and $4\%$ are invertible respectively. In the last case, $\ell=155$ is odd, so $\Ring$ is semisimple and splits into several matrix components that must each be invertible, which a random sparse element rarely satisfies. Among the candidate codes, all four trinomials of ZSZ-LP-160 and ZSZ-LP-320 are automatically invertible (2-groups), and ZSZ-LP-240 and ZSZ-LP-390 have invertible $b$ and $d$, so the first four candidates all admit OGS forms for both their left and right codes. In contrast, none of the $a,b,c,d$ trinomials is invertible for ZSZ-LP-550 and ZSZ-LP-775, even though $H_{\rm left}$ and $H_{\rm right}$ retain full rank; these two codes admit no OGS form in either orientation, and any systematic encoder must use an information set that is not aligned with the two blocks, thereby losing the group-algebra structure. We also emphasize that the OGS generator is not minimum-weight: its rows have weight $1+\wt(b^{-1}a)$, with the weight of $b^{-1}a$ being close to $\ell/2$ for our examples. The OGS and minimum-weight orbit generators are therefore different bases of the same code that trade systematic encoding against sparsity; see Appendix \ref{app:fast encoding} for the corresponding encoding strategies. As a side benefit, since every OGS row is a codeword, $d\leq1+\wt(b^{-1}a)$ provides a cheap distance upper bound that can be used to pre-filter random code searches. The systematic generator polynomials $v^{\rm sys}_L$ and $v^{\rm sys}_R$ for the four candidate codes admitting them are listed in Table \ref{tab:OGS generators} along with their weights, and one can straightforwardly verify that $G^{\rm sys}_{\rm left}H^\transp_{\rm left}=G^{\rm sys}_{\rm right}H^\transp_{\rm right}=0$ for every entry. Comparing against Table \ref{tab:OGM generators}, the systematic generators are considerably denser than their minimum-weight counterparts; e.g. for ZSZ-LP-320, the OGS generator rows have weights 30 and 28, versus the minimum weight $d=14$.

\subsection{One-generator codes and fast encoding}
\label{app:fast encoding}

We now elaborate on the fast-encoding claims made in the main text, using the equivariant generator matrices constructed in the previous subsection. We focus on $H_{\rm left}$ for simplicity. Recall that the rows of $G_{\rm left}=(L[u_L]\;\;L[v_L])$ form the orbit $\{(\bar{u}_L\,g\;\;\bar{v}_L\,g) : g\in G\}$ of the seed codeword $(\bar{u}_L\;\;\bar{v}_L)$ under the free right ZSZ action \eqref{eq:H_left symmetry}, which for a minimum-weight seed spans the entire codespace for every candidate code in Table \ref{tab:classical ZSZ code params} except that of ZSZ-LP-775. Taking row combinations, the codespace is precisely the set of pairs $(\bar{u}_L\,m\;\;\bar{v}_L\,m)$ over all $m\in\Ring$, i.e. the cyclic right $\Ring$-submodule of $\Ring^2$ generated by the seed codeword. This structure is the direct generalization of a classical cyclic code, whose codewords $m(x)g(x)$ form the principal ideal generated by a single generator polynomial $g(x)$ in $\F_2[x]/(x^n-1)$, as well as of one-generator quasi-cyclic codes \cite{Ling_2001}; hence the name \emph{one-generator} in the previous subsection. Codes realized as ideals and modules in group algebras are known as group-algebra codes \cite{Berman_1967, MacWilliams_1970} and have been studied for metacyclic groups such as ZSZ \cite{Sabin_1995}.

Encoding with an orbit generator is fast because its rows are translates of a single codeword. Every row of $G_{\rm left}$ has weight exactly $\wt(u_L)+\wt(v_L)$, which equals $d$ for the minimum-weight generators of Table \ref{tab:OGM generators}. Since any row of any generator matrix is itself a nonzero codeword and hence has weight at least $d$, the minimum-weight orbit basis is the sparsest generator matrix possible. Encoding a message $\mathbf{m}\in\F^\ell_2$, identified with $m=\sum_g m_g\,g \in\Ring$, amounts to two sparse group-algebra products
\begin{align}
    \mathbf{m}^\transp G_{\rm left} = (\,\bar{u}_L m\;\;\bar{v}_L m\,) \, ,
\end{align}
where the antipode $\bar{u}_L$ inverts every group element of $u_L$ and preserves its weight. Expanding $\bar{u}_Lm = \sum_{h\in\operatorname{supp}(\bar{u}_L)} hm$, each summand $hm$ is simply a permutation, i.e. a left translation, of the coefficients of $m$. Encoding therefore costs $d$ permute-and-accumulate passes over the message vector, or $O(dn)$ binary operations in total, in direct analogy with encoding a cyclic code through multiplication by a sparse generator polynomial. In contrast, a generic (e.g. random) LDPC code has a dense generator matrix and costs $O(n^2)$ binary operations to encode, although preprocessing techniques can bring this close to linear \cite{RU_2001}.

The permutations involved are moreover hardware-friendly. By the push-through relations \eqref{eq:push-through relations}, left translation by a monomial $x^\alpha y^\beta$ acts on the $\ell_1\times\ell_2$ grid of group elements (Figure \ref{fig:ZSZ cayley graphs}) as $x^iy^j \mapsto x^{\alpha+q^\beta i}\,y^{\beta+j}$: a relabeling of the $\ell_2$ grid rows, composed with one of the $\ell_2$ fixed affine maps $i\mapsto q^\beta i$ and a cyclic shift by $\alpha$ within each row. If each grid row is stored as a machine word, each pass reduces to a fixed bit permutation followed by word rotations. An encoder is therefore architecturally similar to the quasi-cyclic LDPC encoders standardized in 5G NR cellular communications \cite{Li_2006, Richardson_2018}, namely a $d$-stage pipeline of barrel shifters augmented with fixed wire permutations.

For the minimum-weight generator matrix, the orbit encoder is non-systematic: the message bits do not directly appear among the codeword bits. Nevertheless, message recovery from a decoded codeword is cheap. Since group algebras of finite groups over fields are self-injective, a full-rank orbit guarantees the existence of B\'ezout-type elements $p,s\in\Ring$ satisfying $p\bar{u}_L+s\bar{v}_L=1$, and so the message can be recovered through two further group-algebra multiplications:
\begin{align}
    p\big(\bar{u}_Lm\big)+s\big(\bar{v}_Lm\big) = \big(p\bar{u}_L+s\bar{v}_L\big)m = m \, .
\end{align}

Finally, encoding remains quasi-linear even for dense generator polynomials, e.g. should the construction be scaled to lengths where the distance $d$ grows large. Decompose the message into its $y$-components $m=\sum_{j} m_j(x)\,y^j$, and likewise a fixed multiplier $w=\sum_\beta w_\beta(x)\,y^\beta$. The push-through relations \eqref{eq:push-through relations} give
\begin{align}
    wm = \sum_{\beta,j} w_\beta(x)\,m_j\big(x^{q^\beta}\big)\, y^{\beta+j} \, ,
\end{align}
so any group-algebra product decomposes into $\ell^2_2$ cyclic convolutions of length $\ell_1$, twisted by the substitutions $m_j(x)\mapsto m_j(x^{q^\beta})$. Under a discrete Fourier transform of length $\ell_1$, each substitution acts as a mere index permutation of the Fourier coefficients, $\hat{m}_j(k) \mapsto \hat{m}_j(q^\beta k)$, where indices are modulo $\ell_1$. An encoder based on the fast Fourier transform (FFT) therefore requires $\ell_2$ forward transforms, $\ell^2_2$ pointwise multiplications and $\ell_2$ inverse transforms, costing $O\big(\ell_1\ell_2(\log\ell_1+\ell_2)\big) = O\big(n(\log\ell_1+\ell_2)\big)$ operations in total, where a lift to integer arithmetic is required when $\ell_1$ is even due to lack of roots of unity. The FFT encoder becomes advantageous over the sparse orbit encoder when $d\gg\log\ell_1+\ell_2$.

%%%%%%%%%%%%%%%%%%%%%%%%%%%%%%%%%%%%%%%%%%%%%%%%%%%%%%%%%

\section{Quantum lifted/balanced product codes}
\label{app:ZSZ-LP codes}

In this appendix, we present formal arguments to back up some of the intuition stated in the main text. Appendix \ref{app:abelian-low-weight-logicals} formalizes the notion that going to quantum cannot save the code from the low-weight abelian syzygies in the classical codes. Appendix \ref{app:tanner graph cycles} presents some upper bounds on the Tanner girth of ZSZ-LP codes; Theorem \ref{thm:symmetric-girth-4} in particular is the reason why we choose a non-symmetric balanced product and use two classical seed codes with differing trinomials; otherwise we have unavoidable short 4-cycles in our Tanner graphs which could lead to degraded BP decoding performance. Appendix \ref{app:fold-symmetric-low-distance} identifies a separate distance-three obstruction for the trivial automorphism fold. In Appendix \ref{app:codeword orbits}, we present our minimum-weight codewords derived from automorphism orbits.

\subsection{Abelian lifts and low distances}
\label{app:abelian-low-weight-logicals}

Let $G$ be a finite abelian group of order $\ell$, let $\Ring := \F_2[G]$ denote its binary group algebra, and let $a,b,c,d\in\Ring$. Let $A,B,C,D\in\F_2^{\ell\times\ell}$ denote their regular representations. Since $G$ is abelian, these matrices commute, and hence the matrices $H_X,H_Z$ in \eqref{eq:5-block H_X,H_Z} define a CSS code, which we denote by $\mathcal{Q}(a,b,c,d)$. Recall that the transpose of the binary lift corresponds to the antipode map $p\mapsto \bar{p}$ on $\Ring$. For $p,q\in\Ring$, define
\begin{align}
    \operatorname{Ann}(p) &:= \{r\in\Ring: rp=0\} \, , \\
    K(p,q) &:= p\,\operatorname{Ann}(q)\cap q\,\operatorname{Ann}(p) \, .
\end{align}

We now present two short lemmas which will be useful for our eventual Theorem \ref{thm:abelian-canonical-syzygies} which shows that not every low-weight abelian syzygy can be stabilized away.

\begin{lem}[Nakayama's lemma for local rings \cite{Nakayama_1951}]
\label{lem:nakayama}
    Let $\mathsf{S}$ be a local commutative ring with unique maximal ideal $\mathfrak{m}$, and let $M$ be a finitely generated $\mathsf{S}$-module. If
    \begin{align}
        M=\mathfrak{m}M \, ,
    \end{align}
    then $M=0$. In particular, if $x\in\mathfrak{m}$ and $M=xM$, then $M=0$.
\end{lem}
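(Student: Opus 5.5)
This is the standard Nakayama lemma, so I will use the determinant-trick (Cayley--Hamilton) proof, specialized to local rings. Pick generators $m_1,\ldots,m_t$ of $M$ with $t$ minimal, and suppose for contradiction that $M\neq0$, so $t\geq1$. Since $M=\mathfrak{m}M$, each generator can be written as $m_i=\sum_j x_{ij}m_j$ with $x_{ij}\in\mathfrak{m}$. Here I use that every element of $\mathfrak{m}M$ is a finite sum $\sum_k \mu_k n_k$ with $\mu_k\in\mathfrak{m}$ and $n_k\in M$. Expanding each $n_k$ in the generators and using that $\mathfrak{m}$ is an ideal puts this sum in the form $\sum_j x_{ij}m_j$.

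The relations then say $(\ident_t - X)\mathbf{m}=0$, where $X=(x_{ij})$ and $\mathbf{m}=(m_1,\ldots,m_t)^\transp$. Multiplying on the left by the adjugate of $\ident_t-X$ (valid over any commutative ring) gives $\det(\ident_t-X)\,m_i=0$ for every $i$. Expanding the determinant shows $\det(\ident_t-X)=1+y$ with $y\in\mathfrak{m}$, since every term other than the diagonal product's leading $1$ contains a factor from $\mathfrak{m}$.

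Next I need that $1+y$ is a unit. If it were not, it would lie in $\mathfrak{m}$, because in a local ring the non-units are exactly $\mathfrak{m}$. Then $1=(1+y)-y\in\mathfrak{m}$, contradicting properness. So $1+y$ is invertible, and $m_i=(1+y)^{-1}(1+y)m_i=0$ for all $i$. Hence $M=0$, contradicting $t\geq1$.

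A slicker route avoiding determinants is induction on $t$. Writing $m_t=\sum_j x_{tj}m_j$ gives $(1-x_{tt})m_t\in\operatorname{span}(m_1,\ldots,m_{t-1})$, and since $1-x_{tt}$ is a unit, $m_t$ is redundant, contradicting minimality. This is simpler, and I would present this version.

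For the ``in particular'' clause, take $x\in\mathfrak{m}$ with $M=xM$. Since $xM\subseteq\mathfrak{m}M\subseteq M$, we get $M=\mathfrak{m}M$, and the first part gives $M=0$.

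I expect no real obstacle. The only points needing care are that finite generation is what makes a minimal $t$ exist, and that the unit criterion for local rings is exactly what makes $1-x$ invertible.
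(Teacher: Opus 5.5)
Your proof is correct. Both of your routes are complete:
\begin{itemize}
\item the determinant (adjugate) argument;
\item the minimal-generating-set argument, which is the standard textbook proof (e.g.\ Atiyah--Macdonald, Prop.~2.6).
\end{itemize}
Your reduction of the ``in particular'' clause via $M = xM \subseteq \mathfrak{m}M \subseteq M$ is also right. What you call induction on $t$ is really a contradiction with minimality, which is equally valid.

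The paper itself gives no proof of this lemma. It only cites Nakayama, so there is no argument in the paper to compare against.

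Of your two routes, the determinant version proves more without using locality. For any ideal $I$ with $IM = M$ and $M$ finitely generated, it gives $(1+y)M = 0$ for some $y \in I$, and locality is needed only at the last step to invert $1+y$. The minimal-generator version invokes locality right away, to invert $1-x_{tt}$, but avoids matrix algebra over $\mathsf{S}$.

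The paper uses the lemma only once, in the mutual-annihilator lemma, with $M = \langle p\rangle$ and $x = ut$. There your one-generator case is all that is needed: $p = p\,ut$ gives $(1-ut)p = 0$, and $1-ut$ is a unit because $ut \in \mathfrak{m}$. The local factors there are finite, so $\mathfrak{m}$ is even nilpotent. That would give a third one-line proof for that setting, $M = \mathfrak{m}^k M = 0$, which does not need finite generation.
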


For brevity, let $(\mathsf{S},\mathfrak{m})$ denote a local commutative ring $\mathsf{S}$ with unique maximal ideal $\mathfrak{m}$.

\begin{lem}[Mutual annihilator syzygies]
\label{lem:mutual-annihilator-syzygies}
    Let $p,q,r,s\in\Ring$ with abelian $G$ as above. If
    \begin{align}
        r,s\in K(p,q)\, ,\qquad p,q\in K(r,s) \, ,
    \end{align}
    then $p=q=r=s=0$.
\end{lem}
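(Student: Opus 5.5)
The plan is to reduce to the local case and then apply Nakayama's lemma (Lemma \ref{lem:nakayama}) to the ideal generated by $p,q,r,s$. Since $G$ is abelian, $\Ring=\F_2[G]$ is a finite commutative (hence Artinian) ring. It therefore decomposes as $\Ring\cong\prod_{i}\mathsf{S}_i$ with each $(\mathsf{S}_i,\mathfrak{m}_i)$ local, the decomposition being realized by orthogonal idempotents $e_i$.

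\textbf{Reduction to a local factor.} Multiplication, annihilators and principal multiples are all computed componentwise under this decomposition. Concretely, $\operatorname{Ann}(q)$ decomposes as $\prod_i\operatorname{Ann}_{\mathsf{S}_i}(e_iq)$, and $p\operatorname{Ann}(q)$ decomposes correspondingly. Hence the hypotheses $r,s\in K(p,q)$ and $p,q\in K(r,s)$ project to the same hypotheses for $e_ip,e_iq,e_ir,e_is$ in each $\mathsf{S}_i$. It thus suffices to prove the statement in a single local ring $(\mathsf{S},\mathfrak{m})$.

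\textbf{Local argument.} Let $I=\braket{p,q,r,s}\subseteq\mathsf{S}$; this is finitely generated, so Lemma \ref{lem:nakayama} applies once we show $I=\mathfrak{m}I$. It is enough to show that each generator lies in $\mathfrak{m}I$, and by the symmetry of the hypotheses (swapping $p\leftrightarrow q$, $r\leftrightarrow s$, and the pair $(p,q)\leftrightarrow(r,s)$) it suffices to treat $r$.

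\begin{itemize}
\item From $r\in p\operatorname{Ann}(q)$, write $r=pt$ with $tq=0$.
\item If $t\in\mathfrak{m}$, then $r=pt\in\mathfrak{m}I$ and we are done.
\item Otherwise $t$ is a unit, because the ring is local. Then $tq=0$ forces $q=0$.
\item The second membership $r\in q\operatorname{Ann}(p)=0\cdot\operatorname{Ann}(p)$ then gives $r=0\in\mathfrak{m}I$.
\end{itemize}

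Applying the same reasoning to $s$ (using $s\in p\operatorname{Ann}(q)\cap q\operatorname{Ann}(p)$) and to $p,q$ (using $p,q\in r\operatorname{Ann}(s)\cap s\operatorname{Ann}(r)$) yields $I\subseteq\mathfrak{m}I$. Therefore $I=\mathfrak{m}I$, and Nakayama gives $I=0$, that is, $p=q=r=s=0$.

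\textbf{Main obstacle.} The only delicate point is the degenerate branch where the multiplier $t$ is a unit. There the conclusion must come from the \emph{other} half of the intersection defining $K$, not from the one used to write $r=pt$. This is exactly why both memberships, in $p\operatorname{Ann}(q)$ and in $q\operatorname{Ann}(p)$, are imposed. The reduction to local factors is routine, but I would state explicitly that $\operatorname{Ann}$ commutes with the product decomposition, so that the hypotheses genuinely transfer to each factor.
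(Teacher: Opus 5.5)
Your proof is correct and uses the same ingredients as the paper's: reduction to the local factors of the Artinian ring, Nakayama's lemma, and the unit/non-unit dichotomy in a local ring, with the unit branch closed off by the other half of the relevant intersection $K$. The difference is only organizational. The paper argues by contradiction on $p\neq0$, composes $s=pu$ and $p=st$ into $p=p(ut)$, and applies Nakayama to the principal ideal $\braket{p}$ before invoking symmetry. You instead apply Nakayama once to $\braket{p,q,r,s}$ after checking that each generator lies in $\mathfrak{m}\braket{p,q,r,s}$, which is slightly more direct.
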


\begin{proof}
    Since $\Ring$ is a finite commutative ring, it is Artinian and hence decomposes as a finite product of Artinian local rings. It is enough to prove the claim after projection to each local factor. Thus let $(\mathsf{S},\mathfrak{m})$ be one local factor, and with slight abuse of notation write again $p,q,r,s$ for the projected elements, with $K$ and $\operatorname{Ann}$ now computed in $\mathsf{S}$.

    Suppose, on the contrary, that $p\neq0$. Since $s\in K(p,q)$, there exists $u\in \operatorname{Ann}(q) \subseteq \mathsf{S}$ such that
    \begin{align}
        s=pu\, ,\qquad uq=0 \, .
    \end{align}
    Since $p\in K(r,s)$, there is $t\in \mathsf{S}$ such that
    \begin{align}
        p=st\, ,\qquad tr=0 \, .
    \end{align}
    Hence $p=put$, so the principal ideal $\braket{p}$ satisfies
    \begin{align}
        \braket{p} = \braket{p}ut \, .
    \end{align}
    If $ut\in\mathfrak{m}$, then Nakayama's lemma (Lemma~\ref{lem:nakayama}), applied to the finitely generated $\mathsf{S}$-module $\braket{p}$, gives $\braket{p}=0$, contradicting the assumption that $p\neq0$. Therefore $ut\notin\mathfrak{m}$, and hence $ut$ is a unit. Since $\mathsf{S}$ is commutative, $u$ and $t$ are accordingly also units. It follows from $uq=0$ and $tr=0$ that $q=r=0$. But then $p\in K(r,s)\subseteq r\mathsf{S}=0$, again a contradiction. Thus we must have $p=0$.

    Interchanging the roles of $p$ and $q$, and of $s$ and $r$, gives $q=0$. Then $r,s\in K(0,0)=0$, so $r=s=0$. Since this result holds in each local factor, we conclude that $p=q=r=s=0$ in $\Ring$.
\end{proof}

Recall the five-block form of $H_X$ and $H_Z$, which we recite below in ring notation:
\begin{subequations}\label{eq:5-block H_X,H_Z ring}
\begin{align}
    H_X &= \begin{pmatrix}\,
        a & 0 & b & 0 & \bar{c} \\
        0 & a & 0 & b & \bar{d} \,
    \end{pmatrix}  \\
    H_Z &= \begin{pmatrix}\,
        c & d & 0 & 0 & \bar{a} \\
        0 & 0 & c & d & \bar{b} \,
    \end{pmatrix} \, .
\end{align}
\end{subequations}

\begin{thm}[Abelian syzygies and low distance]
\label{thm:abelian-canonical-syzygies}
    Assume $a,b,c,d\in\Ring$ are not all zero. Then at least one of the following four abelian syzygies gives a nontrivial logical operator of $\mathcal{Q}(a,b,c,d)$:
    \begin{align}
        \mathbf{x}_1&:=(\bar{d}\;\,\bar{c}\;\,0\;\,0\;\,0) \; , \qquad
        \mathbf{x}_2:=(0\;\,0\;\,\bar{d}\;\,\bar{c}\;\,0) \, , \\
        \mathbf{z}_1&:=(\bar{b}\;\,0\;\,\bar{a}\;\,0\;\,0) \; , \qquad
        \mathbf{z}_2:=(0\;\,\bar{b}\;\,0\;\,\bar{a}\;\,0) \, .
    \end{align}
    Consequently,
    \begin{align}
        d\big(\mathcal{Q}(a,b,c,d)\big) \leq
        \max\{\wt(a)+\wt(b),\,\wt(c)+\wt(d)\} \, .
    \end{align}
    In particular, if $\wt(a),\wt(b),\wt(c),\wt(d)\leq w$, then $d(\mathcal{Q}(a,b,c,d))\leq 2w=O(1)$.
\end{thm}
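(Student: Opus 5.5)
The plan is to check that the four vectors are always logical operators (in $\ker H_Z$ or $\ker H_X$), then to argue by contradiction that they cannot all be stabilizers. If all four were stabilizers, the resulting ring identities would put $a,b,c,d$ into the mutual-annihilator configuration of Lemma~\ref{lem:mutual-annihilator-syzygies}, forcing $a=b=c=d=0$. I work entirely in the ring notation \eqref{eq:5-block H_X,H_Z ring}. There, a row vector over $\Ring$ lifts to binary rows via the regular representation, and binary transposes become $\dagger$ (transpose plus antipode). Since $G$ is abelian, $\Ring$ is commutative and the antipode is a ring automorphism.

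\textbf{Step 1: all four vectors are in the right kernels.} For $\mathbf{x}_1$ I compute $H_Z\mathbf{x}_1^\dagger$. The first row gives $cd+dc=0$ by commutativity, and the second row vanishes identically. The same check handles $\mathbf{x}_2$, and $H_X\mathbf{z}_j^\dagger=ab+ba=0$ handles the $\mathbf{z}_j$. This is exactly the Koszul syzygy of Proposition~\ref{prop:abelian distance bound}, placed inside the five-block form.

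\textbf{Step 2: translate "all four are stabilizers" into ring membership.} Suppose every one of the four vectors lies in the row space of the corresponding check matrix. Because $\Ring$ is commutative, I can write the row-space membership as $\mathbf{x}_1=\alpha(a,0,b,0,\bar c)+\beta(0,a,0,b,\bar d)$ with $\alpha,\beta\in\Ring$. Comparing entries gives
\[
\bar d=\alpha a,\qquad \bar c=\beta a,\qquad \alpha b=\beta b=0 .
\]
Hence $\bar c,\bar d\in a\operatorname{Ann}(b)$. For $\mathbf{x}_2$ the roles swap, giving $\bar d=\alpha' b$, $\bar c=\beta' b$ and $\alpha' a=\beta' a=0$. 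Thus $\bar c,\bar d\in b\operatorname{Ann}(a)$, and so $\bar c,\bar d\in K(a,b)$. The same bookkeeping for $\mathbf{z}_1,\mathbf{z}_2$ against $H_Z$ gives $\bar a,\bar b\in c\operatorname{Ann}(d)\cap d\operatorname{Ann}(c)=K(c,d)$. The fifth-column equations are not needed.

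\textbf{Step 3: apply the lemma.} Applying the antipode automorphism to the second membership gives $a,b\in K(\bar c,\bar d)$, because the antipode commutes with forming annihilators and principal ideals. Now Lemma~\ref{lem:mutual-annihilator-syzygies} applies with $p=a$, $q=b$, $r=\bar c$, $s=\bar d$, and yields $a=b=c=d=0$, contradicting the hypothesis.

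\textbf{Step 4: conclude the bound.} Some $\mathbf{x}_j$ or $\mathbf{z}_j$ is therefore a nontrivial logical operator; in particular it is nonzero. Its weight is $\wt(c)+\wt(d)$ or $\wt(a)+\wt(b)$, which gives the stated distance bound and the $2w$ corollary.

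The main obstacle I expect is Step 2. The conventions must be tracked consistently: which side multiplies, how lifted transposes become antipodes, and whether row-space membership over $\F_2$ really equals left $\Ring$-linear combination of the ring rows. The last point holds because the binary row space of a lifted protomatrix is the $\Ring$-span of its ring rows, since the regular representation is free. If the antipodes come out on the wrong elements, the fix is to apply the antipode globally to the relevant identities before invoking the lemma.
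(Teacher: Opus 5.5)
Your proposal is correct and follows the paper's proof essentially step for step. The same block-column comparisons give $\bar c,\bar d\in K(a,b)$ and $\bar a,\bar b\in K(c,d)$, and Lemma~\ref{lem:mutual-annihilator-syzygies} then forces $a=b=c=d=0$. The only difference is cosmetic: you apply the antipode to the $Z$-side membership and use $(p,q,r,s)=(a,b,\bar c,\bar d)$, whereas the paper applies it to the $X$-side and uses $(\bar a,\bar b,c,d)$.
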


\begin{proof}
    The four displayed rows are cycles by commutativity of $\Ring$, since
    \begin{align}
        \mathbf{x}_1H_Z^\dagger
        = (\bar{d}\,\bar{c}+\bar{c}\,\bar{d}\;\;0)=0 \, ,
    \end{align}
    and the other three identities are analogous. We prove that they cannot all be stabilizers unless $a=b=c=d=0$, i.e. the code is trivial.

    Suppose first that $\mathbf{x}_1, \mathbf{x}_2 \in \rs H_X$. From $\mathbf{x}_1\in\rs H_X$, applied to the first four block-columns, there exist $\alpha,\beta\in\Ring$ such that
    \begin{align}
        \alpha a=\bar{d}\, ,\qquad
        \beta a=\bar{c}\, ,\qquad
        \alpha b=0\, ,\qquad
        \beta b=0 \, .
    \end{align}
    The last two equations imply that $\alpha,\beta \in \operatorname{Ann}(b)$ and, combined with the first two, then imply that $\bar{c},\bar{d}\in a\operatorname{Ann}(b)$. Similarly, $\mathbf{x}_2\in\rs(H_X)$ implies $\bar{c},\bar{d}\in b\operatorname{Ann}(a)$. Applying the antipode gives
    \begin{align}
        c,d\in K(\bar{a},\bar{b}) \, .
    \end{align}
    Likewise, if $\mathbf{z}_1, \mathbf{z}_2\in\rs(H_Z)$, then the same block comparison gives
    \begin{align}
        \bar{a},\bar{b}\in K(c,d) \, .
    \end{align}

    By Lemma~\ref{lem:mutual-annihilator-syzygies}, applied with
    \begin{align}
        p=\bar{a}\, ,\qquad q=\bar{b}\, ,\qquad r=c\, ,\qquad s=d \, ,
    \end{align}
    simultaneous containment
    \begin{align}
        \mathbf{x}_1, \mathbf{x}_2\in\rs(H_X)\; ,\qquad
        \mathbf{z}_1, \mathbf{z}_2\in\rs(H_Z)
    \end{align}
    would force $\bar{a}=\bar{b}=c=d=0$ and hence $a=b=c=d=0$, contrary to assumption. Hence at least one nonzero syzygy is not a stabilizer and therefore gives a nontrivial logical operator.

    The $X$-type syzygies have weight $\wt(c)+\wt(d)$, while the $Z$-type syzygies have weight $\wt(a)+\wt(b)$. The displayed upper bound on $d(\mathcal{Q}(a,b,c,d))$ follows immediately.
\end{proof}

\subsection{Tanner graph cycles and girth bounds}
\label{app:tanner graph cycles}

\begin{thm}[Tanner-girth bound for 5-block codes]
\label{thm:5-block-tanner-girth}
    Let $G$ be a finite group and let $a,b,c,d\in\F_2[G]$ be trinomials with distinct monomial terms. Let
    \begin{align}
        A=L[a]\, ,\qquad B=L[b]\, ,\qquad C=R[c]\, ,\qquad D=R[d]\, ,
    \end{align}
    and define $H_X,H_Z$ according to \eqref{eq:5-block H_X,H_Z}. Then the Tanner graphs of $H_X$ and $H_Z$ obey
    \begin{align}
        \operatorname{girth}(H_X)\leq 8\; ,\qquad
        \operatorname{girth}(H_Z)\leq 8 \, .
    \end{align}
\end{thm}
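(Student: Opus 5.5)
The plan is to exhibit an explicit 8-cycle in the Tanner graph of $H_X$ (and symmetrically of $H_Z$). The cycle will come from the commutation of left and right multiplication, i.e. from a square of the left--right Cayley complex. Fix monomials $a_i\neq a_j$ of $a$ and $c_k\neq c_m$ of $c$. We work with the first block row of $H_X$, with blocks $A=L[a]$ on $Q_1$ and $C^\transp=R[c]^\transp$ on $Q_5$.

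Use the convention of Proposition~\ref{prop:classical-element-order-girth}: $L[g]$ joins bit $h$ to check $gh$. Likewise $R[g]$ joins bit $h$ to check $hg$, so $R[g]^\transp$ joins bit $h'$ (in $Q_5$) to check $h'g^{-1}$. An $A$-transition sends a check $s$ to $a_ja_i^{-1}s$, with middle bit $a_i^{-1}s\in Q_1$. A $C^\transp$-transition sends $s$ to $s\,c_k^{-1}c_m$, with middle bit $sc_k\in Q_5$. Set $\lambda=a_ja_i^{-1}$ and $\rho=c_k^{-1}c_m$. Starting from any check $s_0$, consider the four checks
\[
s_0,\qquad \lambda s_0,\qquad \lambda s_0\rho,\qquad s_0\rho .
\]
The steps alternate left $\lambda$, right $\rho$, left $\lambda^{-1}$, right $\rho^{-1}$. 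The walk closes because $(\lambda s_0)\rho=\lambda(s_0\rho)$ by associativity, and the inverse transitions are realized by the same bit nodes traversed in reverse. This gives a closed walk of length 8 through bits $a_i^{-1}s_0$, $\lambda s_0c_k$, $a_i^{-1}s_0\rho$ (in $Q_1$) and $s_0c_k$ (in $Q_5$).

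Next I verify that this closed walk is a simple cycle. The two $Q_1$ bits are distinct since $\rho\neq1$, and the two $Q_5$ bits are distinct since $\lambda\neq1$. The checks $s_0$ and $\lambda s_0$ differ, as do $s_0$ and $s_0\rho$. The only possible coincidence is $\lambda s_0\rho=s_0$, equivalently $\lambda=s_0\rho^{-1}s_0^{-1}$. In that case the walk degenerates, but the checks $s_0,\lambda s_0$ both meet the bits $a_i^{-1}s_0$ and $s_0c_k$, which already gives a 4-cycle, and girth $\le 4\le 8$ holds anyway. This degeneracy is the one subtle point. It can also be avoided outright by choosing $s_0$ outside the conjugacy condition, for instance $s_0=1$ unless $\lambda\rho=1$. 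So in all cases either a simple 8-cycle or a 4-cycle exists.

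The $H_Z$ case is identical, using the $C=R[c]$ block on $Q_1$ and the $A^\transp$ block on $Q_5$ (or $D$ and $B^\transp$). The trinomial hypothesis is used only to guarantee two distinct monomials in each of $a$ and $c$. I expect no real obstacle beyond fixing conventions for the transposed right blocks and handling the degenerate coincidence case above.
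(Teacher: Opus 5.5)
Your square inside the first check-row block is a legitimate construction, and your closure and bit-distinctness arguments are fine. The problem is the degenerate-case step, which you single out as the subtle point: it is wrong as written.

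First, a harmless slip. With your convention that $R[g]^\transp$ joins $h'$ to $h'g^{-1}$, the bit $sc_k\in Q_5$ joins $s$ to $sc_kc_m^{-1}$. So $\rho=c_kc_m^{-1}$, not $c_k^{-1}c_m$; since only $\rho\neq1$ is ever used, nothing breaks.

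The real issue is that among the checks $s_0,\lambda s_0,\lambda s_0\rho,s_0\rho$ there are two possible coincidences, not one:
\[
\lambda s_0\rho=s_0 \qquad\text{and}\qquad \lambda s_0=s_0\rho,
\]
that is, $s_0^{-1}\lambda s_0=\rho^{-1}$ or $s_0^{-1}\lambda s_0=\rho$.

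In the case you do treat, your $4$-cycle witness fails. The bit $s_0c_k$ joins $s_0$ to $s_0\rho=\lambda^{-1}s_0$, and meets $\lambda s_0$ only under a further coincidence. The correct witness there uses the checks $s_0,\lambda s_0$ together with the bits $a_i^{-1}s_0\in Q_1$ and $\lambda s_0c_k\in Q_5$; the latter joins $\lambda s_0$ to $\lambda s_0\rho=s_0$. The cycle you wrote, through $a_i^{-1}s_0$ and $s_0c_k$, is instead the witness for the case you missed, $\lambda s_0=s_0\rho$.

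Likewise, $s_0=1$ is safe only if both $\lambda\rho\neq1$ and $\lambda\neq\rho$. When $\lambda$ is central and $\lambda=\rho^{\pm1}$ (possible, e.g., for abelian $G$), every $s_0$ is degenerate, so the fallback is genuinely needed.

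The conclusion survives, and there is a case-free repair. Your four bits are always distinct, and each contributes two walk edges to distinct checks. Hence the eight edges of the walk are distinct, and their sum is a nonzero element of $\ker\partial_{ve}$ of weight $8$. That bounds the girth directly under the paper's definition.

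For comparison, the paper avoids the degeneracy by using a different square that spans both check-row blocks.
\begin{itemize}
\item The left move $L[a_1a_2^{-1}]$ is made through $Q_1$ inside $C^X_1$, and its inverse through $Q_2$ inside $C^X_2$.
\item The right moves go through $Q_5$ from $C^X_1$ to $C^X_2$ (a $C^\transp$ edge followed by a $D^\transp$ edge) and back.
\item The walk closes because the left transition commutes with the right transition $D_1^\transp C_1$, where $C_1=R[c_1]$ and $D_1=R[d_1]$.
\end{itemize}
With two checks in each block, no conjugacy coincidence can occur. Simplicity therefore follows from $a_1\neq a_2$ alone, and nothing is required of $c,d$ (not even $c_1\neq d_1$).

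Your version keeps the whole square in one check block. That is the more literal left--right Cayley-complex square, but it needs two distinct terms of $c$ and forces you to handle the cases $s_0^{-1}\lambda s_0=\rho^{\pm1}$.
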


\begin{proof}
    We first prove the claim for $H_X$. Write
    \begin{align}
        a=a_1+a_2+a_3\, ,\qquad
        c=c_1+c_2+c_3\, ,\qquad
        d=d_1+d_2+d_3
    \end{align}
    as sums of distinct group elements, and let
    \begin{align}
        A_i=L[a_i]\, ,\qquad C_i=R[c_i]\, ,\qquad D_i=R[d_i]
    \end{align}
    denote the corresponding permutation matrices. Define
    \begin{align}
        U := A_1A_2^\transp\, ,\qquad
        V := D_1^\transp C_1 \, .
    \end{align}
    The matrix $U$ is left-regular, while $V$ is right-regular. Since left and right regular representations commute, we have $UV=VU$, and hence
    \begin{align}\label{eq:HX-girth-8-identity}
        V^\transp U^\transp VU = 1 \, .
    \end{align}

    We now interpret \eqref{eq:HX-girth-8-identity} as a closed walk in the Tanner graph of $H_X$. Let $C^X_1,C^X_2$ denote the two check-row blocks of $H_X$, and let $Q_1,\ldots,Q_5$ denote the five data block-columns. At the level of check-row blocks, the four check-to-check moves are
    \begin{align}\label{eq:H_X 8-cycle}
        C^X_1
        \xrightarrow[\;Q_1\;]{A_1A_2^\transp}
        C^X_1
        \xrightarrow[\;Q_5\;]{D_1^\transp C_1}
        C^X_2
        \xrightarrow[\;Q_2\;]{A_2A_1^\transp}
        C^X_2
        \xrightarrow[\;Q_5\;]{C_1^\transp D_1}
        C^X_1 \, .
    \end{align}
    Indeed, the product of the four transition matrices is
    \begin{align}
        (C_1^\transp D_1)(A_2A_1^\transp)(D_1^\transp C_1)(A_1A_2^\transp)
        = V^\transp U^\transp VU
        = 1 \, .
    \end{align}
    Thus, starting from any check vertex in $C^X_1$, these four check-to-check moves return to the starting check vertex. Each check-to-check move corresponds to two edges in the $X$-Tanner graph, so \eqref{eq:H_X 8-cycle} gives a closed walk of length $8$ on the $X$-Tanner graph. We now verify that the eight visited vertices are pairwise distinct, so that this closed walk is a simple cycle; this is where the distinctness of the monomial terms enters. Denote the visited check vertices by $u_0,u_1\in C^X_1$ and $u_2,u_3\in C^X_2$, where $u_1=A_1A_2^\transp u_0$ and $u_3=A_2A_1^\transp u_2$. Since $a_1\neq a_2$, the permutation $A_1A_2^\transp=L[a_1a_2^{-1}]$ is left multiplication by a non-identity group element and is therefore fixed-point free, so $u_1\neq u_0$ and $u_3\neq u_2$; all other check pairs are distinct because $C^X_1$ and $C^X_2$ are different blocks. Among the data vertices, those visited in $Q_1$ and $Q_2$ are distinct from all others because they lie in different data blocks, while the two data vertices visited in $Q_5$ are $w_1=C_1u_1$ (between $u_1$ and $u_2$) and $w_2=C_1u_0$ (between $u_3$ and $u_0$), which are distinct because $C_1$ is a permutation and $u_1\neq u_0$. Note that $c_1\neq d_1$ is not required: even when the transition monomial $D_1^\transp C_1$ is trivial, the two edges of each $Q_5$ move lie in the distinct blocks $C^\transp$ and $D^\transp$. Hence \eqref{eq:H_X 8-cycle} traces a simple Tanner $8$-cycle, and we conclude that $\operatorname{girth}(H_X) \leq 8$.

    The proof for $H_Z$ is the same with left and right interchanged. Write
    \begin{align}
        c=c_1+c_2+c_3\, ,\qquad
        a=a_1+a_2+a_3\, ,\qquad
        b=b_1+b_2+b_3
    \end{align}
    and set
    \begin{align}
        C_i=R[c_i]\, ,\qquad A_i=L[a_i]\, ,\qquad B_i=L[b_i] \, .
    \end{align}
    Define
    \begin{align}
        U:=C_1C_2^\transp\, ,\qquad
        V:=B_1^\transp A_1 \, .
    \end{align}
    Now $U$ is right-regular and $V$ is left-regular, so $UV=VU$. Hence
    \begin{align}\label{eq:HZ-girth-8-identity}
        V^\transp U^\transp VU = 1 \, .
    \end{align}
    Let $C^Z_1,C^Z_2$ denote the two check-row blocks of $H_Z$. The corresponding closed check-walk is
    \begin{align}
        C^Z_1
        \xrightarrow[\;Q_1\;]{C_1C_2^\transp}
        C^Z_1
        \xrightarrow[\;Q_5\;]{B_1^\transp A_1}
        C^Z_2
        \xrightarrow[\;Q_3\;]{C_2C_1^\transp}
        C^Z_2
        \xrightarrow[\;Q_5\;]{A_1^\transp B_1}
        C^Z_1 \, .
    \end{align}
    The product of the four transition matrices is
    \begin{align}
        (A_1^\transp B_1)(C_2C_1^\transp)(B_1^\transp A_1)(C_1C_2^\transp)
        = V^\transp U^\transp VU
        = 1 \, .
    \end{align}
    Hence the four check-to-check moves return to the starting check vertex, giving a closed walk of length $8$ on the $Z$-Tanner graph. Vertex distinctness follows exactly as for $H_X$, with left and right interchanged: denoting the visited check vertices by $u_0,u_1\in C^Z_1$ and $u_2,u_3\in C^Z_2$, the permutation $C_1C_2^\transp$ is right multiplication by the non-identity element $c_2^{-1}c_1$ (recall $c_1\neq c_2$) and is therefore fixed-point free, so $u_1\neq u_0$ and $u_3\neq u_2$; the data vertices visited in $Q_1$ and $Q_3$ lie in different data blocks from all others, and the two data vertices visited in $Q_5$, namely $w_1=A_1u_1$ and $w_2=A_1u_0$, are distinct because $u_1\neq u_0$. Hence the walk traces a simple Tanner $8$-cycle, and we conclude that $\operatorname{girth}(H_Z)\leq 8$.
\end{proof}

The next theorem asserts that if the same trinomials are used for both $H_{\rm left}$ and $H_{\rm right}$, i.e. a symmetric lifted/balanced product, then there will be unavoidable 4-cycles in the balanced product code's Tanner graph, and so its girth will be 4.

\begin{thm}[Tanner-girth bound for symmetric 5-block codes]
\label{thm:symmetric-girth-4}
    Let $G$ be a finite group and let $a,b\in\F_2[G]$ be trinomials with distinct monomial terms. Suppose the two classical seed codes use the same trinomials, i.e.
    \begin{align}
        A=L[a]\, ,\qquad B=L[b]\, ,\qquad C=R[a]\, ,\qquad D=R[b]\, .
    \end{align}
    Then the Tanner graphs of $H_X$ and $H_Z$ in \eqref{eq:5-block H_X,H_Z} contain $4$-cycles. In particular,
    \begin{align}
        \operatorname{girth}(H_X)=4\, ,\qquad
        \operatorname{girth}(H_Z)=4 \, .
    \end{align}
\end{thm}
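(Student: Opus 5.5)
The plan is to exhibit an explicit $4$-cycle in each Tanner graph by pairing the data block $Q_1$ with the transposed block $Q_5$. The idea is that $Q_1$ sees $a$ acting by left multiplication, while $Q_5$ (for $H_X$) sees $C^\transp=R[a]^\transp$, which also involves $a$ but on the right. At the identity group element, left and right multiplication by a monomial agree. This is exactly the coincidence that the non-symmetric choice $c\neq a$ avoids, and it should produce two checks sharing two data vertices. Since every bipartite Tanner graph has girth at least $4$, exhibiting a $4$-cycle gives equality.

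For $H_X$, use the convention from Proposition \ref{prop:classical-element-order-girth}: $L[g]$ connects bit $h$ to check $gh$, and $R[g]$ connects bit $h$ to check $hg$. I would first record the neighborhoods within the block row $C^X_1=(A\;0\;B\;0\;C^\transp)$.
\begin{itemize}
\item Through $A=L[a]$, check $u$ is adjacent to the $Q_1$ bits $a_i^{-1}u$ for $i=1,2,3$.
\item Through $C^\transp=R[a]^\transp$, check $u$ is adjacent to the $Q_5$ bits $u a_k$ for $k=1,2,3$, since $(C)_{w,u}=1$ iff $w=ua_k$.
\end{itemize}
Fix $i\neq j$, which is possible because the monomials of $a$ are distinct. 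Take the checks $u=1$ and $u'=a_ja_i^{-1}$ in $C^X_1$. Then:
\begin{itemize}
\item Both checks are adjacent to the $Q_1$ bit $a_i^{-1}$, since $a_i^{-1}\cdot 1=a_j^{-1}u'$.
\item Both checks are adjacent to the $Q_5$ bit $a_j$, since $1\cdot a_j=a_j$ and $u'a_i=a_j$.
\end{itemize}
The two checks are distinct because $a_ja_i^{-1}\neq 1$. The two bits are distinct because they lie in different data blocks. Hence $u - a_i^{-1} - u' - a_j - u$ is a simple $4$-cycle.

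For $H_Z$ the argument is mirrored within the block row $C^Z_1=(C\;D\;0\;0\;A^\transp)$.
\begin{itemize}
\item Through $C=R[a]$, check $v$ is adjacent to the $Q_1$ bits $va_i^{-1}$.
\item Through $A^\transp$, check $v$ is adjacent to the $Q_5$ bits $a_kv$.
\end{itemize}
Take $v=1$ and $v'=a_i^{-1}a_j$. Both checks are adjacent to the $Q_1$ bit $a_i^{-1}$ and to the $Q_5$ bit $a_iv'=a_j$. Again $v\neq v'$, so this is a $4$-cycle.

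The only real obstacle is bookkeeping: getting the transpose and right-regular index conventions exactly right, and checking the four vertices are distinct. Beyond that, the single needed fact is that left and right translates of the identity coincide, which is why the symmetric choice $c=a$ (or $d=b$) is fatal regardless of the group. The same construction with $b$ in blocks $Q_3/Q_5$ or $Q_4/Q_5$ gives further $4$-cycles, but one cycle suffices. Finally, I would note that the argument never uses the second trinomial and only needs $\wt(a)\ge2$, so the theorem's hypotheses are more than enough.
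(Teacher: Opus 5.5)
Your construction is correct and is essentially the paper's proof: the paper also takes the identity check in $C^X_1$ (resp.\ $C^Z_1$) together with one $Q_1$ vertex and one $Q_5$ vertex, differing from yours only by swapping the roles of $a_i$ and $a_j$, and it likewise gets equality from the fact that a bipartite Tanner graph has girth at least $4$. One small correction to your side remark, which does not affect the proof: for $b$ only the pairing $Q_4/Q_5$ in the second check-row blocks works ($L[b]$ against $R[b]^\transp$ in $H_X$, $R[b]$ against $L[b]^\transp$ in $H_Z$), whereas the rows containing $Q_3$ pair $L[b]$ with $R[a]^\transp$ (in $H_X$) or $R[a]$ with $L[b]^\transp$ (in $H_Z$), so $Q_3/Q_5$ does not generically close a $4$-cycle.
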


\begin{proof}
    Let $1$ denote the identity element of $G$. We use the convention that $L[g]$ connects a data vertex labeled $h$ to the check vertex labeled $gh$, while $R[g]$ connects a data vertex labelled $h$ to the check vertex labeled $hg$. Thus $L[g]^\transp$ connects $h$ to $g^{-1}h$, and $R[g]^\transp$ connects $h$ to $hg^{-1}$.

    Write
    \begin{align}
        a=a_1+a_2+a_3
    \end{align}
    as a sum of distinct group elements, and choose two distinct monomial terms $a_i,a_j$. We first construct a $4$-cycle in the Tanner graph of $H_X$. Let $C^X_1,C^X_2$ denote the two check-row blocks of $H_X$, and let $Q_1,\ldots,Q_5$ denote the five data block-columns. Since each data or check subblock internally contains $|G|$ vertices, we will use the tuple $(T,g)$ to indicate the vertex labeled by $g\in G$ in subblock $T$. Now, consider the four Tanner vertices
    \begin{align}
        (C^X_1,1)\, ,\qquad
        (Q_1,a_j^{-1})\, ,\qquad
        (C^X_1,a^{}_i a_j^{-1})\, ,\qquad
        (Q_5,a_i) \, .
    \end{align}
    We claim that these vertices form a cycle. Indeed, the block $A=L[a]$ between $C^X_1$ and $Q_1$ contains the monomial matrices $L[a_i]$ and $L[a_j]$, so the data vertex $(Q_1,a_j^{-1})$ is adjacent to the two check vertices
    \begin{align}
        (C^X_1,a^{}_j a_j^{-1})=(C^X_1,1)\, ,\qquad
        (C^X_1,a^{}_i a_j^{-1}) \, .
    \end{align}
    On the other hand, the block $C^\transp=R[a]^\transp$ between $C^X_1$ and $Q_5$ contains $R[a_i]^\transp$ and $R[a_j]^\transp$, so the data vertex $(Q_5,a_i)$ is adjacent to
    \begin{align}
        (C^X_1,a^{}_i a_i^{-1})=(C^X_1,1)\, ,\qquad
        (C^X_1,a^{}_i a_j^{-1}) \, .
    \end{align}
    Hence we have the $4$-cycle
    \begin{align}
        (C^X_1,1)
        \;-\;
        (Q_1,a_j^{-1})
        \;-\;
        (C^X_1,a^{}_i a_j^{-1})
        \;-\;
        (Q_5,a_i)
        \;-\;
        (C^X_1,1) \, .
    \end{align}
    The two check vertices are distinct because $a_i\neq a_j$, and the two data vertices are distinct because they lie in different data blocks $Q_1$ and $Q_5$. Therefore this is a simple Tanner $4$-cycle, and so $\operatorname{girth}(H_X)=4$.

    The proof for $H_Z$ is analogous, with left and right interchanged. Let $C^Z_1,C^Z_2$ denote the two check-row blocks of $H_Z$. In the first check-row block $C^Z_1$, the data block $Q_1$ is connected by $C=R[a]$, while the data block $Q_5$ is connected by $A^\transp=L[a]^\transp$. Consider the four Tanner vertices
    \begin{align}
        (C^Z_1,1)\, ,\qquad
        (Q_1,a_j^{-1})\, ,\qquad
        (C^Z_1,a_j^{-1}a^{}_i)\, ,\qquad
        (Q_5,a_i) \, .
    \end{align}
    Since $C=R[a]$, the data vertex $(Q_1,a_j^{-1})$ is adjacent to
    \begin{align}
        (C^Z_1,a_j^{-1}a^{}_j)=(C^Z_1,1)\, ,\qquad
        (C^Z_1,a_j^{-1}a^{}_i) \, .
    \end{align}
    Since $A^\transp=L[a]^\transp$, the data vertex $(Q_5,a_i)$ is adjacent to
    \begin{align}
        (C^Z_1,a_i^{-1}a^{}_i)=(C^Z_1,1)\, ,\qquad
        (C^Z_1,a_j^{-1}a^{}_i) \, .
    \end{align}
    Hence we have the $4$-cycle
    \begin{align}
        (C^Z_1,1)
        \;-\;
        (Q_1,a_j^{-1})
        \;-\;
        (C^Z_1,a_j^{-1}a^{}_i)
        \;-\;
        (Q_5,a_i)
        \;-\;
        (C^Z_1,1) \, .
    \end{align}
    Again the two check vertices are distinct because $a_i\neq a_j$, and the two data vertices lie in different data blocks. Therefore this is a simple Tanner $4$-cycle, and so $\operatorname{girth}(H_Z)=4$.

    The same construction can also be made with the trinomial $b$, using the pairs of data blocks $(Q_4,Q_5)$ for $H_X$ and $(Q_4,Q_5)$ for $H_Z$ in the second check-row blocks.
\end{proof}

So if we wish to obtain ZSZ-LP codes with girth\;$\geq6$, then we cannot recycle the same trinomial pair for both $H_{\rm left}$ and $H_{\rm right}$. Table \ref{tab:tanner graph cycles} reports the Tanner girths and the number of short cycles for the candidate ZSZ-LP codes. For each quantum code, the reported girth is the minimum of the Tanner-graph girths of $H_X$ and $H_Z$. The $4$-cycle and $6$-cycle columns count exact simple cycles in the Tanner graphs and sum the contributions from $H_X$ and $H_Z$.

\begin{table}[t]
\centering
\begin{tabular}{cccc}
\toprule
\textbf{Code name} & \textbf{Girth} & \textbf{4-cycles} & \textbf{6-cycles} \\ \midrule
ZSZ-LP-160 & 4 & 64 & 4608 \\
ZSZ-LP-240 & 6 & 0 & 3936 \\
ZSZ-LP-320 & 4 & 32 & 4000 \\
ZSZ-LP-390 & 4 & 84 & 4056 \\
ZSZ-LP-550 & 6 & 0 & 2864 \\
ZSZ-LP-775 & 4 & 160 & 3520 \\ \bottomrule
\end{tabular}
\caption{Tanner graph girths and short-cycle counts for the candidate ZSZ-LP codes. The girth is $\min\{\operatorname{girth}(H_X),\operatorname{girth}(H_Z)\}$, while the 4-cycle and 6-cycle counts are the sums of those for $H_X$ and $H_Z$.}
\label{tab:tanner graph cycles}
\end{table}

\subsection{Residual center automorphisms}
\label{app:residual center aut}

Although the balanced product quotients out the full diagonal ZSZ action, a
small residual quantum code automorphism can remain whenever the underlying ZSZ group has a nontrivial center. Let $z\in Z(\ZSZ_{\ell_1,\ell_2,q})$ and let $P_z=L[z]=R[z]$ be its regular-representation permutation matrix, where the last equality follows from centrality.  Since $z$ commutes with every group element, $P_z$ commutes with all lifted blocks $A=L[a]$, $B=L[b]$, $C=R[c]$, and $D=R[d]$, as well as their transposes.  Therefore the simultaneous permutation $P_z^{\oplus 5}$ of the five data-qubit blocks satisfies
\begin{align}
    H_X P_z^{\oplus 5} = P_z^{\oplus 2} H_X \, ,\qquad
    H_Z P_z^{\oplus 5} = P_z^{\oplus 2} H_Z \, .
\end{align}
Thus every element of the group center gives a residual code (Tanner-graph) automorphism of the ZSZ-LP code.  These residual symmetries are generally much smaller than the original ZSZ group itself, unlike the full symmetry available in abelian LP codes; for instance, the largest center among the candidate codes below has size $12$, and ZSZ-LP-775 has a trivial center.

For $G=\ZSZ_{\ell_1,\ell_2,q}$, write
$t=\operatorname{ord}_{\ell_1}(q)$ and $u=\gcd(\ell_1,q-1)$.  The center is
\begin{align}
    Z(G) = \left\{x^{i\ell_1/u} y^{jt} \,:\, 0\leq i<u,\;0\leq j<\ell_2/t\right\} \, ,
\end{align}
and hence $|Z(G)|=u\ell_2/t$.  Table \ref{tab:ZSZ-LP centers} lists the residual center automorphisms for the candidate ZSZ-LP codes.

\begin{table}[t]
\centering\renewcommand{\arraystretch}{1.1}
\begin{tabular}{ccccc}
\toprule
\textbf{Code name} & $(\ell_1,\ell_2,q)$ & $\operatorname{ord}_{\ell_1}(q)$ & $Z(G)$ & $|Z(G)|$ \\ \midrule
ZSZ-LP-160 & $(16,2,9)$ & $2$ & $\langle x^2\rangle=\{x^{2r}:0\leq r<8\}$ & $8$ \\
ZSZ-LP-240 & $(12,4,7)$ & $2$ & $\langle x^2,y^2\rangle=\{x^{2r}y^{2s}:0\leq r<6,\;0\leq s<2\}$ & $12$ \\
ZSZ-LP-320 & $(16,4,3)$ & $4$ & $\langle x^8\rangle=\{1,x^8\}$ & $2$ \\
ZSZ-LP-390 & $(26,3,3)$ & $3$ & $\langle x^{13}\rangle=\{1,x^{13}\}$ & $2$ \\
ZSZ-LP-550 & $(22,5,3)$ & $5$ & $\langle x^{11}\rangle=\{1,x^{11}\}$ & $2$ \\
ZSZ-LP-775 & $(31,5,2)$ & $5$ & $\{1\}$ & $1$ \\ \bottomrule
\end{tabular}
\caption{Centers of the ZSZ groups used by the candidate ZSZ-LP codes. Each central element gives a residual code automorphism by applying the corresponding regular-representation permutation simultaneously to all five data-qubit blocks.}
\label{tab:ZSZ-LP centers}
\end{table}

\subsection{Distance bounds on fold-symmetric codes}
\label{app:fold-symmetric-low-distance}

We now isolate a low-distance obstruction specific to the automorphism-fold construction of Section \ref{sec:ZX dualities}. Recall that an automorphism fold chooses $c=\phi(\bar{a})$ and $d=\phi(\bar{b})$ according to \eqref{eq:automorphism fold c d}. The simplest choice is the trivial group automorphism $\phi=1$, for which its inverse-automorphism $\theta$ is ordinary group inversion and
\begin{align}\label{eq:trivial aut fold c,d}
    c=\bar{a}\, ,\qquad d=\bar{b} \, .
\end{align}
Although this choice gives a simple $ZX$-duality for every finite group and every pair $a,b\in\Ring$, the resulting quantum code will have poor distance. In fact, it produces a canonical weight-three logical pair independently of whether the group is abelian, of the weights of $a,b$, and of their invertibility in the group algebra.

\begin{thm}[Low distance for the trivial automorphism fold]
\label{thm:trivial-automorphism-fold-low-distance}
    Let $G$ be any finite group of order $\ell$, let $\Ring=\F_2[G]$, and let $a,b\in\Ring$. Set
    \begin{align}
        c=\bar{a}\, ,\qquad d=\bar{b}
    \end{align}
    and define the CSS code $\mathcal{Q}(a,b,\bar{a},\bar{b})$ by \eqref{eq:5-block H_X,H_Z}. Then the weight-3 row vector
    \begin{align}\label{eq:trivial-fold-weight-three-logical}
        \boldsymbol{\lambda}:=(1\;\;0\;\;0\;\;1\;\;1)
    \end{align}
    gives both a nontrivial logical $\bar{X}$ operator and a nontrivial logical $\bar{Z}$ operator. Consequently,
    \begin{align}
        d_X\big(\mathcal{Q}(a,b,\bar{a},\bar{b})\big)\leq3\, ,\qquad
        d_Z\big(\mathcal{Q}(a,b,\bar{a},\bar{b})\big)\leq3 \, .
    \end{align}
\end{thm}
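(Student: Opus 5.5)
The plan is a direct verification in the binary lift, working at the group identity $1\in G$. Read $\boldsymbol{\lambda}$ as the binary vector $\mathbf{e}\in\F_2^{5\ell}$ that has a single $1$ at the vertex labeled $1$ in each of the subblocks $Q_1$, $Q_4$ and $Q_5$, and $0$ elsewhere. This vector has weight exactly $3$. The proof then has two steps: show that $\mathbf{e}\in\ker H_X\cap\ker H_Z$, and then show that neither the $X$-type nor the $Z$-type operator on $\mathbf{e}$ is a stabilizer.

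\textbf{Step 1: $\mathbf{e}$ is a cycle for both check types.} Substitute $c=\bar a$ and $d=\bar b$ into \eqref{eq:5-block H_X,H_Z}, so that $C=R[\bar a]$, $D=R[\bar b]$, $A=L[a]$, $B=L[b]$. Use the conventions $L[g]\ket{h}=\ket{gh}$ and $R[g]\ket{h}=\ket{hg}$, and note that $L[g]^\transp=L[g^{-1}]$ and $R[g]^\transp=R[g^{-1}]$.
\begin{itemize}
\item For $H_X$, the first check-row block gives $A\ket{1}+C^\transp\ket{1}=L[a]\ket{1}+R[a]\ket{1}$. Both terms equal $\sum_i\ket{a_i}$, so they cancel over $\F_2$.
\item The second check-row block of $H_X$ gives $B\ket{1}+D^\transp\ket{1}=L[b]\ket{1}+R[b]\ket{1}=0$ in the same way.
\item For $H_Z$, the first check-row block gives $C\ket{1}+A^\transp\ket{1}=R[\bar a]\ket{1}+L[\bar a]\ket{1}=0$.
\item The second check-row block of $H_Z$ gives $D\ket{1}+B^\transp\ket{1}=R[\bar b]\ket{1}+L[\bar b]\ket{1}=0$.
\end{itemize}
The key fact throughout is that at the identity vertex, left and right multiplication by the same element give the same result. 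So every cancellation works for arbitrary, even non-abelian, $G$, with no assumption on the weights of $a,b$ or on their invertibility. This makes $X(\mathbf{e})$ commute with all $Z$-checks and $Z(\mathbf{e})$ commute with all $X$-checks.

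\textbf{Step 2: nontriviality from self-overlap.} The $X$-type and $Z$-type operators supported on $\mathbf{e}$ overlap on $\wt(\mathbf{e})=3$ qubits, an odd number, so $\mathbf{e}\cdot\mathbf{e}^\transp=1$ over $\F_2$.
\begin{itemize}
\item Suppose $\mathbf{e}\in\rs H_X$. Every row of $H_X$ has zero inner product with every element of $\ker H_X$, and $\mathbf{e}\in\ker H_X$ by Step 1. Hence $\mathbf{e}\cdot\mathbf{e}^\transp=0$, a contradiction.
\item The same argument with $H_Z$ shows $\mathbf{e}\notin\rs H_Z$.
\end{itemize}
Therefore $\mathbf{e}\in\ker H_Z\setminus\rs H_X$ gives a nontrivial logical $\bar X$, and $\mathbf{e}\in\ker H_X\setminus\rs H_Z$ gives a nontrivial logical $\bar Z$. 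Both have weight $3$, which yields $d_X,d_Z\le 3$.

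The only delicate point is bookkeeping of conventions in Step 1: matching the transpose of the $C$ and $D$ blocks with the antipode, and making sure left and right actions really coincide at $\ket{1}$. I expect that to be the main place care is needed. Once Step 1 is settled, the parity argument in Step 2 needs no further input.
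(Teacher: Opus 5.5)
Your proof is correct. Step~1 is the same kernel check the paper performs, only carried out on the binary lift at the identity vertex instead of in ring notation, where the paper writes $\boldsymbol{\lambda}H_Z^\transp=(c+\bar a\;\;d+\bar b)=0$ and $\boldsymbol{\lambda}H_X^\transp=(a+\bar c\;\;b+\bar d)=0$. Your bookkeeping $C^\transp=R[a]$ and $A^\transp=L[\bar a]$ is right.

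The nontriviality step is where you genuinely differ. The paper supposes $\boldsymbol{\lambda}\in\rs(H_X)$ and compares the first, second and fourth block-columns. This gives $\alpha,\beta\in\Ring$ with $\bar a\alpha=1$, $\bar a\beta=0$ and $\bar b\beta=1$. It then uses that a one-sided inverse in the finite ring $\Ring$ is two-sided, which forces $\beta=0$ and a contradiction; the $H_Z$ case is the mirror-image comparison.

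You instead use that $\boldsymbol{\lambda}\in\ker H_X\cap\ker H_Z$ has odd weight, so $X(\boldsymbol{\lambda})$ and $Z(\boldsymbol{\lambda})$ anticommute. This excludes both $\boldsymbol{\lambda}\in\rs(H_X)$ and $\boldsymbol{\lambda}\in\rs(H_Z)$ at once. It is the same inner-product argument the paper itself uses to certify its OGS bases in Section~\ref{sec:min-weight logical Paulis}.

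Your route is shorter and needs no algebraic facts about $\Ring$. It also gives slightly more: $X(\boldsymbol{\lambda})$ and $Z(\boldsymbol{\lambda})$ form a conjugate logical pair, i.e.\ a whole logical qubit with weight-3 representatives. The paper's block comparison, by contrast, does not depend on the parity of the candidate's weight. It is therefore the argument that would adapt to even-weight candidate logicals, or to distinct $X$- and $Z$-candidates, where the self-overlap gives no information.
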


\begin{proof}
    It is first straightforward to see that $\boldsymbol{\lambda}$ lies in $\ker H_Z$ and $\ker H_X$, since
    \begin{subequations}\label{eq:lambda H^T_Z,H^T_X}
    \begin{align}
        \boldsymbol{\lambda}H_Z^\transp
        &=(c+\bar{a}\;\;\;d+\bar{b}) = (2c\;\;2d) = 0\, , \\
        \boldsymbol{\lambda}H_X^\transp
        &=(a+\bar{c}\;\;\;b+\bar{d}) = (2a\;\;2b) = 0 \, .
    \end{align}
    \end{subequations}
    It remains to show that it is not a stabilizer.

    Suppose first that $\boldsymbol{\lambda}\in\rs(H_X)$. Comparing the first, second, and fourth block-columns gives $\alpha,\beta\in\Ring$ such that
    \begin{align}\label{eq:trivial-fold-X-stabilizer-contradiction}
        \bar{a}\alpha=1\, ,\qquad
        \bar{a}\beta=0\, ,\qquad
        \bar{b}\beta=1 \, .
    \end{align}
    The first equation says that $\bar{a}$ admits a right-inverse. Since $\Ring$ is finite, a one-sided inverse is automatically two-sided, so $\bar{a}$ is a unit. The second equation in \eqref{eq:trivial-fold-X-stabilizer-contradiction} therefore forces $\beta=0$, contradicting the third equation. Hence $\boldsymbol{\lambda}\notin\rs(H_X)$.

    Similarly, if $\boldsymbol{\lambda}\in\rs(H_Z)$, then comparison of the first, third, and fourth block-columns gives $\alpha,\beta\in\Ring$ satisfying
    \begin{align}\label{eq:trivial-fold-Z-stabilizer-contradiction}
        \alpha\bar{c}=1\, ,\qquad
        \beta\bar{c}=0\, ,\qquad
        \beta\bar{d}=1 \, .
    \end{align}
    The first equation makes $\bar{c}$ a unit, so the second forces $\beta=0$, again contradicting the third. Therefore $\boldsymbol{\lambda}\notin\rs(H_Z)$, and we conclude that it gives both a nontrivial logical $\bar{X}$ operator and a nontrivial logical $\bar{Z}$ operator.
\end{proof}

Theorem \ref{thm:trivial-automorphism-fold-low-distance} is related but distinct from the abelian obstruction of Theorem \ref{thm:abelian-canonical-syzygies}. The latter gives a constant upper bound essentially because an abelian group algebra provides the classical and quantum codes with ``too much symmetry'' which leads to low-weight codewords/logical operators, regardless of the choice of trinomials. The present obstruction says that the trivial automorphism \eqref{eq:trivial aut fold c,d} provides ``too much symmetry'' and leads to low-weight logical operators, and the no-go result persists for arbitrary finite groups, including non-abelian groups. It also is independent of the weight of $a$ and $b$: changing their weights cannot remove the low-weight logical \eqref{eq:trivial-fold-weight-three-logical}.

A nontrivial automorphism $\phi$ circumvents this particular obstruction by replacing $c=\bar{a}$ and $d=\bar{b}$ with $c=\phi(\bar{a})$ and $d=\phi(\bar{b})$. \eqref{eq:lambda H^T_Z,H^T_X} then no longer vanishes unless the relevant seed polynomials are invariant under $\phi$. This corroborates the observation that the nontrivial automorphism-folds in Table \ref{tab:ZSZ-LP ZX dualities} can retain the desired $ZX$-duality while avoiding the distance-3 obstruction above. Of course, a nontrivial $\phi$ does not by itself guarantee large distance; it only removes this particular low-distance mechanism.

%%%%%%%%%%%%%%%%%%%%%%%%%%%%%%%%%%%%%%%%%%%%%%%%%%%%%%%%

\section{Greedy AOD scheduler}
\label{app:greedy-aod-scheduler}

We give a compact version of the greedy scheduler used for the AOD-routing calculations in Algorithm \ref{alg:greedy-aod-scheduler}. The input jobs are monomial interactions of the form $(P,C,D,g)$, where $P\in\{L,R\}$ is the left/right module, $C$ is a check subblock chosen from $C^X_1,C^X_2,C^Z_1,C^Z_2$, $Q$ is a data subblock chosen from $Q_1,\ldots,Q_5$, and $g\in\ZSZ$ is the monomial.

\begin{algorithm}[t]
\caption{Greedy AOD scheduler}
\label{alg:greedy-aod-scheduler}
\Input{Job set $\mathcal{J}$ with jobs $(P,C,Q,g)$, group parameters $(\ell_1,\ell_2,q)$}
\Return{Merged move/gate schedule $\mathsf{Sched}$}
\ForEach{syndrome subblock $Q$ appearing in $\mathcal{J}$}{
    $F_Q \leftarrow (1,1)$ \tcp*{left and right reference-frame components}
    $\mathcal{R}_Q \leftarrow [\;]$ \tcp*{route for this syndrome subblock}
    $\mathcal{J}_Q \leftarrow \{J\in\mathcal{J}: J.\mathrm{check}=Q\}$\;
    \While{$\mathcal{J}_Q\neq\emptyset$}{
        \ForEach{$J=(P,C,Q,g)\in\mathcal{J}_Q$}{
            $\Delta_J \leftarrow \mathrm{TransitionMonomial}(F_Q,P,g)$\;
            $c_J \leftarrow \mathrm{AODCost}(\Delta_J)$\;
        }
        Choose $J^\star=(P^\star,C,Q^\star,g^\star)$ minimizing $c_J$\;
        Break ties deterministically\;
        Append the moves for $\Delta_{J^\star}$ to $\mathcal{R}_Q$\;
        Append the monomial gate layer $(P^\star,C,Q^\star,g^\star)$ to $\mathcal{R}_Q$\;
        Update $F_Q$ to the frame required by $(P^\star,g^\star)$\;
        $\mathcal{J}_Q \leftarrow \mathcal{J}_Q\setminus\{J^\star\}$\;
    }
}
$\mathsf{Sched}\leftarrow [\;]$\;
\While{some route $\mathcal{R}_Q$ is nonempty}{
    Greedily merge the next compatible move layers from the routes\;
    Greedily merge the next compatible gate layers from the routes\;
}
\Return{$\mathsf{Sched}$}
\end{algorithm}

For a full $X+Z$ round, the $C^X_i$ routes are scheduled before the $C^Z_i$ routes. We do not insert a physical reset of the final $C^X_i$ frames prior to measurement; this final frame can instead be tracked in software after measurement.

For completeness, Table \ref{tab:greedy-aod-gate-schedule} lists the resulting gate schedules for all simulated ZSZ-LP codes (all but ZSZ-LP-775) in a format that is sufficient to reconstruct the monomial ordering used in the circuit-level simulations. For each code, we write the trinomials in Table \ref{tab:ZSZ-LP polynomials} as $a=a_1+a_2+a_3$, and likewise for $b,c,d$, in the displayed order. The $X$-check table uses the monomials themselves, while the $Z$-check table uses inverse monomials because $H_Z$ contains the antipode-transposed blocks. In the circuit, $X$-check gates have CNOT direction $C^X_i\to Q_j$, while $Z$-check gates have CNOT direction $Q_j\to C^Z_i$.

% $X$-check gate layers
\begin{table}[t]
\centering
\resizebox{\textwidth}{!}{%
\begin{tabular}{cccccc}
\toprule
\textbf{Layer} & \textbf{ZSZ-LP-160} & \textbf{ZSZ-LP-240} & \textbf{ZSZ-LP-320} & \textbf{ZSZ-LP-390} & \textbf{ZSZ-LP-550} \\
\midrule
1 & $(C^X_1,Q_1,a_1),(C^X_2,Q_2,a_1)$ & $(C^X_1,Q_1,a_1),(C^X_2,Q_2,a_1)$ & $(C^X_1,Q_1,a_1),(C^X_2,Q_2,a_1)$ & $(C^X_1,Q_1,a_1),(C^X_2,Q_2,a_1)$ & $(C^X_1,Q_1,a_1),(C^X_2,Q_2,a_1)$ \\
2 & $(C^X_1,Q_3,b_1),(C^X_2,Q_4,b_1)$ & $(C^X_1,Q_3,b_1),(C^X_2,Q_4,b_1)$ & $(C^X_1,Q_3,b_1),(C^X_2,Q_4,b_1)$ & $(C^X_1,Q_3,b_1),(C^X_2,Q_4,b_1)$ & $(C^X_1,Q_3,b_1),(C^X_2,Q_4,b_1)$ \\
3 & $(C^X_1,Q_5,c_1)$ & $(C^X_1,Q_3,b_2),(C^X_2,Q_4,b_2)$ & $(C^X_1,Q_5,c_1)$ & $(C^X_1,Q_5,c_1)$ & $(C^X_1,Q_5,c_1)$ \\
4 & $(C^X_1,Q_1,a_2),(C^X_2,Q_5,d_1)$ & $(C^X_1,Q_3,b_3),(C^X_2,Q_4,b_3)$ & $(C^X_1,Q_1,a_3),(C^X_2,Q_5,d_1)$ & $(C^X_1,Q_3,b_3),(C^X_2,Q_5,d_1)$ & $(C^X_1,Q_1,a_3),(C^X_2,Q_5,d_1)$ \\
5 & $(C^X_2,Q_2,a_2)$ & $(C^X_1,Q_1,a_3),(C^X_2,Q_2,a_3)$ & $(C^X_1,Q_3,b_2),(C^X_2,Q_2,a_3)$ & $(C^X_1,Q_1,a_2),(C^X_2,Q_4,b_3)$ & $(C^X_1,Q_3,b_3),(C^X_2,Q_2,a_3)$ \\
6 & $(C^X_1,Q_5,c_2),(C^X_2,Q_2,a_3)$ & $(C^X_1,Q_1,a_2),(C^X_2,Q_2,a_2)$ & $(C^X_1,Q_1,a_2),(C^X_2,Q_4,b_2)$ & $(C^X_1,Q_1,a_3),(C^X_2,Q_2,a_2)$ & $(C^X_1,Q_1,a_2),(C^X_2,Q_4,b_3)$ \\
7 & $(C^X_1,Q_5,c_3),(C^X_2,Q_4,b_2)$ & $(C^X_1,Q_5,c_1)$ & $(C^X_1,Q_3,b_3),(C^X_2,Q_2,a_2)$ & $(C^X_1,Q_3,b_2),(C^X_2,Q_2,a_3)$ & $(C^X_1,Q_3,b_2),(C^X_2,Q_2,a_2)$ \\
8 & $(C^X_2,Q_4,b_3)$ & $(C^X_1,Q_5,c_2)$ & $(C^X_2,Q_4,b_3)$ & $(C^X_2,Q_4,b_2)$ & $(C^X_2,Q_4,b_2)$ \\
9 & $(C^X_1,Q_1,a_3)$ & $(C^X_1,Q_5,c_3)$ & $(C^X_1,Q_5,c_2)$ & $(C^X_1,Q_5,c_2)$ & $(C^X_1,Q_5,c_3)$ \\
10 & $(C^X_1,Q_3,b_2),(C^X_2,Q_5,d_2)$ & $(C^X_2,Q_5,d_1)$ & $(C^X_1,Q_5,c_3)$ & $(C^X_1,Q_5,c_3)$ & $(C^X_1,Q_5,c_2)$ \\
11 & $(C^X_1,Q_3,b_3),(C^X_2,Q_5,d_3)$ & $(C^X_2,Q_5,d_3)$ & $(C^X_2,Q_5,d_2)$ & $(C^X_2,Q_5,d_2)$ & $(C^X_2,Q_5,d_3)$ \\
12 & -- & $(C^X_2,Q_5,d_2)$ & $(C^X_2,Q_5,d_3)$ & $(C^X_2,Q_5,d_3)$ & $(C^X_2,Q_5,d_2)$ \\
\bottomrule
\end{tabular}%
} \\ \vspace{1em}
\resizebox{\textwidth}{!}{%
\begin{tabular}{cccccc}
\toprule
\textbf{Layer} & \textbf{ZSZ-LP-160} & \textbf{ZSZ-LP-240} & \textbf{ZSZ-LP-320} & \textbf{ZSZ-LP-390} & \textbf{ZSZ-LP-550} \\
\midrule
1 & $(C^Z_1,Q_1,c_1^{-1}),(C^Z_2,Q_3,c_1^{-1})$ & $(C^Z_1,Q_1,c_1^{-1}),(C^Z_2,Q_3,c_1^{-1})$ & $(C^Z_1,Q_1,c_1^{-1}),(C^Z_2,Q_3,c_1^{-1})$ & $(C^Z_1,Q_1,c_1^{-1}),(C^Z_2,Q_3,c_1^{-1})$ & $(C^Z_1,Q_1,c_1^{-1}),(C^Z_2,Q_3,c_1^{-1})$ \\
2 & $(C^Z_1,Q_2,d_1^{-1}),(C^Z_2,Q_4,d_1^{-1})$ & $(C^Z_1,Q_2,d_1^{-1}),(C^Z_2,Q_4,d_1^{-1})$ & $(C^Z_1,Q_2,d_1^{-1}),(C^Z_2,Q_4,d_1^{-1})$ & $(C^Z_1,Q_2,d_1^{-1}),(C^Z_2,Q_4,d_1^{-1})$ & $(C^Z_1,Q_2,d_1^{-1}),(C^Z_2,Q_4,d_1^{-1})$ \\
3 & $(C^Z_1,Q_5,a_1^{-1})$ & $(C^Z_1,Q_5,a_1^{-1})$ & $(C^Z_1,Q_5,a_1^{-1})$ & $(C^Z_1,Q_5,a_1^{-1})$ & $(C^Z_1,Q_5,a_1^{-1})$ \\
4 & $(C^Z_1,Q_1,c_2^{-1}),(C^Z_2,Q_5,b_1^{-1})$ & $(C^Z_1,Q_2,d_3^{-1}),(C^Z_2,Q_5,b_1^{-1})$ & $(C^Z_1,Q_5,a_3^{-1})$ & $(C^Z_1,Q_2,d_2^{-1}),(C^Z_2,Q_5,b_1^{-1})$ & $(C^Z_1,Q_5,a_3^{-1})$ \\
5 & $(C^Z_1,Q_1,c_3^{-1}),(C^Z_2,Q_3,c_2^{-1})$ & $(C^Z_1,Q_1,c_2^{-1}),(C^Z_2,Q_4,d_3^{-1})$ & $(C^Z_2,Q_5,b_1^{-1})$ & $(C^Z_1,Q_1,c_3^{-1}),(C^Z_2,Q_4,d_2^{-1})$ & $(C^Z_1,Q_5,a_2^{-1})$ \\
6 & $(C^Z_1,Q_2,d_3^{-1}),(C^Z_2,Q_3,c_3^{-1})$ & $(C^Z_1,Q_1,c_3^{-1}),(C^Z_2,Q_3,c_2^{-1})$ & $(C^Z_1,Q_2,d_2^{-1}),(C^Z_2,Q_5,b_2^{-1})$ & $(C^Z_1,Q_2,d_3^{-1})$ & $(C^Z_2,Q_5,b_1^{-1})$ \\
7 & $(C^Z_1,Q_2,d_2^{-1}),(C^Z_2,Q_4,d_3^{-1})$ & $(C^Z_1,Q_2,d_2^{-1}),(C^Z_2,Q_3,c_3^{-1})$ & $(C^Z_1,Q_1,c_3^{-1})$ & $(C^Z_1,Q_1,c_2^{-1}),(C^Z_2,Q_5,b_3^{-1})$ & $(C^Z_1,Q_1,c_3^{-1}),(C^Z_2,Q_5,b_3^{-1})$ \\
8 & $(C^Z_2,Q_4,d_2^{-1})$ & $(C^Z_2,Q_4,d_2^{-1})$ & $(C^Z_1,Q_1,c_2^{-1}),(C^Z_2,Q_4,d_2^{-1})$ & $(C^Z_2,Q_5,b_2^{-1})$ & $(C^Z_1,Q_2,d_3^{-1}),(C^Z_2,Q_5,b_2^{-1})$ \\
9 & $(C^Z_1,Q_5,a_2^{-1})$ & $(C^Z_1,Q_5,a_3^{-1})$ & $(C^Z_1,Q_2,d_3^{-1}),(C^Z_2,Q_3,c_3^{-1})$ & $(C^Z_1,Q_5,a_2^{-1})$ & $(C^Z_1,Q_1,c_2^{-1})$ \\
10 & $(C^Z_1,Q_5,a_3^{-1})$ & $(C^Z_1,Q_5,a_2^{-1})$ & $(C^Z_2,Q_3,c_2^{-1})$ & $(C^Z_1,Q_5,a_3^{-1}),(C^Z_2,Q_3,c_2^{-1})$ & $(C^Z_1,Q_2,d_2^{-1}),(C^Z_2,Q_3,c_3^{-1})$ \\
11 & $(C^Z_2,Q_5,b_2^{-1})$ & $(C^Z_2,Q_5,b_2^{-1})$ & $(C^Z_1,Q_5,a_2^{-1}),(C^Z_2,Q_4,d_3^{-1})$ & $(C^Z_2,Q_4,d_3^{-1})$ & $(C^Z_2,Q_4,d_3^{-1})$ \\
12 & $(C^Z_2,Q_5,b_3^{-1})$ & $(C^Z_2,Q_5,b_3^{-1})$ & $(C^Z_2,Q_5,b_3^{-1})$ & $(C^Z_2,Q_3,c_3^{-1})$ & $(C^Z_2,Q_3,c_2^{-1})$ \\
13 & -- & -- & -- & -- & $(C^Z_2,Q_4,d_2^{-1})$ \\
\bottomrule
\end{tabular}%
}
\caption{$X$-check (top) and $Z$-check (bottom) gate layers for candidate ZSZ-LP codes under the greedy AOD scheduler. Each entry is $(\text{check subblock},\text{data subblock},\text{monomial})$.}
\label{tab:greedy-aod-gate-schedule}
\end{table}

%%%%%%%%%%%%%%%%%%%%%%%%%%%%%%%%%%%%%%%%%%%%%%%%%%%%%%%%

\section{Additional numerical simulation details}
\label{app:more numerics}

Table \ref{tab:relay-gamma-intervals} lists the relay gamma intervals used for the memory simulations in the main text as well as an upper-bound estimate on the circuit distance of the greedy-scheduled syndrome extraction circuit. To estimate the circuit $X$($Z$)-distance, we construct a shallow circuit consisting of initialization in $\ket{0}^{\otimes n}$ ($\ket{+}^{\otimes n}$), one round of $X$($Z$)-syndrome extraction, and finally transversal measurement in $Z$ ($X$). The detector error model (DEM) of this single-round circuit is sufficient to capture correlated hook errors from syndrome to data qubits. We estimate the circuit distances with $10^6$ iterations of \textsf{QDistEvol} on each corresponding DEM. For ZSZ-LP-390 and ZSZ-LP-550, we did not obtain any circuit-distance estimates lower than the minimum distances of the codes themselves, which are likely just due to the DEMs being too large to obtain reliable estimates with \textsf{QDistEvol}. Following the apparent trend of $d_{\rm circ} \leq d-2$ for ZSZ-LP-160 through ZSZ-LP-320, we set $d_{\rm circ}=d-2$ in the phenomenological fitting function used in the main text.

Table \ref{tab:zsz-lp-390-gpu-latencies} lists the GPU-accelerated Relay-BP decoding latencies of several different NVIDIA GPUs for ZSZ-LP-390 at $p=0.1\%$. Recall that we decode with respect to the circuit-level $X$-type and $Z$-type DEMs over 16 repeated rounds of syndrome extraction for ZSZ-LP-390. For each GPU, the timing data is averaged over $10^5$ shots of $X$-memory and $Z$-memory experiments, with the GPU only decoding one shot at a time (non-batched). Curiously enough, we find that the L40S GPU gives the lowest latency despite being the oldest generation of the bunch. We attribute this peculiar observation to its fast L2 cache latency, which is likely the dominant bottleneck for sequential-shot decoding. When performing batched-shot decoding, VRAM capacity and bandwidth become the dominant bottlenecks, so the more modern GPUs become faster due to their increased VRAM capacity and bandwidth.

\begin{table}[t]
\centering
\begin{tabular}{cccc}
\toprule
\textbf{Code name} & $p_{\rm sweep}$ & $[\gamma_{\rm min},\gamma_{\rm max}]$ & $d_{\rm circ}$ \\ \midrule
ZSZ-LP-160 & $0.2\%$ & $[-0.15,0.61]$ & $\leq8$ \\
ZSZ-LP-240 & $0.2\%$ & $[-0.1,0.6]$ & $\leq10$ \\
ZSZ-LP-320 & $0.2\%$ & $[-0.16,0.66]$ & $\leq12$ \\
ZSZ-LP-390 & $0.2\%$ & $[-0.16,0.7]$ \\
ZSZ-LP-550 & $0.3\%$ & $[-0.12,0.62]$ \\ \bottomrule
\end{tabular}
\caption{Relay-BP gamma intervals used for circuit-level memory decoding using the greedy AOD schedules for syndrome extraction. The intervals are selected by a grid search at $p=p_{\rm sweep}$ for the lowest logical error rates. We also report estimates of the circuit distances $d_{\rm circ}$ which includes correlated hook-error mechanisms.}
\label{tab:relay-gamma-intervals}
\end{table}

\begin{table}[t]
\centering\renewcommand{\arraystretch}{1.1}
\resizebox{\textwidth}{!}{
\begin{tabular}{ccccccc}
\toprule
\textbf{GPU} & \textbf{Generation} & \;\textbf{Socket}\; & \textbf{VRAM} &
\textbf{\makecell{Mean\\ latency}} &
\textbf{\makecell{p99\\ latency}} &
\textbf{\makecell{p99.9\\ latency}} \\ \midrule
B200 & Blackwell & SXM6 & 192\,GB HBM3e & \;2.068 ms\; & \;3.554 ms\; & \;5.248 ms\; \\
RTX PRO 6000 & Blackwell & PCIe5 & 96\,GB GDDR7 & 2.304 ms & 4.033 ms & 6.290 ms \\
H200 & Hopper & SXM5 & 141\,GB HBM3e & 1.794 ms & 3.131 ms & 4.918 ms \\
H100 & Hopper & SXM5 & 80\,GB HBM3 & 1.856 ms & 3.208 ms & 4.875 ms \\
L40S & Lovelace & PCIe4 & 48\,GB GDDR6 & 1.494 ms & 2.572 ms & 3.903 ms \\ \bottomrule
\end{tabular}
}
\caption{Relay-BP sequential-shot decoding latencies of several different NVIDIA GPUs for ZSZ-LP-390 at $p=0.1\%$. The reported mean, 99th percentile (p99), and 99.9th percentile (p99.9) latencies are from $10^5$ shots, averaged over $X$-memory and $Z$-memory experiments.}
\label{tab:zsz-lp-390-gpu-latencies}
\end{table}

\end{document}